\newtheorem{theorem}{Theorem}%[section]
\newtheorem{lemma}{Lemma}
\begin{document}

\preprint{APS/123-QED}

\title{Engineering Quantum Reservoirs through Krylov Complexity, Expressivity and Observability}

\author{Saud \v{C}indrak}
\email{saud.cindrak@tu-ilmenau.de}
\affiliation{%
 Technische Universität
Ilmenau, Institute of Physics, Ilmenau, Germany
}%
\author{Lina Jaurigue}
\affiliation{%
 Technische Universität
Ilmenau, Institute of Physics, Ilmenau, Germany
}%

\author{Kathy L{\"u}dge}
 \affiliation{%
 Technische Universität
Ilmenau, Institute of Physics, Ilmenau, Germany
}%
\date{\today}

\begin{abstract}
This study employs Krylov-based information measures to understand task performance in quantum reservoir computing, a sub-field of quantum machine learning.  In our study we show that fidelity and spread complexity can only explain the task performance for short time evolutions of the quantum systems. We then discuss two measures, Krylov expressivity and Krylov observability, and compare them to task performance and the information processing capacity. Our results show that Krylov observability exhibits almost identical behavior to information processing capacity, while being three orders of times faster to compute. In the case when the system is undersampled Krylov observability best captures the behavior of the task performance.
\end{abstract}

\maketitle

\section{Introduction}\label{Introduction}
Quantum machine learning (QML) models attempt to exploit the exponential scaling of Hilbert space for computation. A subfield of QML that has garnered significant attention in physics is quantum reservoir computing (QRC). QRC is inspired by classical reservoir computing, where physical systems are used to solve machine learning tasks \cite{JAE01, MAA02, VER07, APP11, MUE24, JAU24, JAU25}. QRC research, particularly with the Ising model, has been explored in \cite{FUJ17, NAK19, KUT20, MAR21, XIA22, GOE23, NOK24, ABB24a}, with initial implementations on IBM's quantum processor \cite{CHE20b, SUZ22, PFE22, YAS23}. It has been shown that quantum reservoir computing can utilize noise and dissipation as resources \cite{SAN24, FRY23, DOM23}. To address the time complexity problem for time-series tasks, weak measurements \cite{MUJ23, FRA24} and reinitialization schemes \cite{CIN24} have been proposed, reducing complexity and improving performance. Other approaches include reintroducing measured outputs \cite{KOB24, AHM24} and continuous measurements \cite{ZHU24a}, while \cite{SCH25} utilizes techniques from parameterized quantum circuits to define expressive limits in QRC. 
However, understanding in quantum machine learning and quantum reservoir computing remains limited. In classical machine learning, this understanding arises from expressivity and explainability measures, which is still an active field and is mostly unresolved in quantum machine learning. 
A key challenge is comparing quantum systems described by unitaries or Hamiltonians based on their expressivity within the Hilbert space. An ideal information measure would provide insights into task performance while offering a physical interpretation of the system, making it physically interpretable.

In this work we discuss various measures regarding task performance in quantum reservoir computing. We train four distinct quantum reservoirs on a chaotic time series and larger reservoirs with random inter-spin couplings. As a benchmark the Lorenz63 system is used \cite{LOR63}. We compare fidelity, information processing capacity \cite{DAM12}, and spread complexity \cite{BAL22, PAR19} as established methods. Additionally, we discuss Krylov expressivity and Krylov observability, two measures designed to capture the effective phase space dimension of the Hilbert space\cite{CIN25, CIN24a}. We demonstrate that Krylov observability and information processing capacity perform best in explaining the behavior of task performance. 

Spread and operator complexity have successfully captured various effects of time evolution in relation to spin chains, SYK models, chaotic quantum systems, driven quantum systems, and others \cite{ALI23, BHA22a,  PAR19, AFR23, ANE24, BAE22, BAL22, BAR19, BHA24a, CAM23, CAO21, CAP21, CHA23, DYM21, FAN22, GUO22A, HAS23, HE22, HEV22, IIZ23, JIA21, KIM22, LI24, LIU23b, MAG20, MUC22, NIZ23, PAT22, RAB22, RAB22a, VAS24, CAP24, BAL23, CRA23, AGU24, CAP22, CAP22a, ERD23, GIL23, NAN24, PAL23, BHA22, CHA23, GAU23, NIE06a, BHA24, BAL23a, LIN22, RAB23, HUH24, ZHO25b, CRA25, SUC25}. A detailed explanation on the methods is given in \cite{NAN24a}. However, a major challenge in calculating spread and operator complexity is the necessity of classical simulation for constructing the Krylov spaces. In \cite{CIN24a}, we address this by introducing quantum-mechanically measurable Krylov spaces for the computation of spread complexity and propose \textit{Krylov expressivity}, a measurable expressivity measure of the effective phase space dimension. In \cite{CIN25}, we extend this research by proposing \textit{Krylov observability}, a measure to quantify the effective phase space dimension of various observables measured multiple times. Our results show that Krylov observability effectively captures how well a quantum system can retain and map macroscopic data non-linearly, thus highlighting the importance of Krylov spaces in quantum reservoir computing.

The authors in Ref. \cite{DOM24} first explored the insights that spread complexity can provide in the field of quantum reservoir computing. The authors found that quantum reservoirs with larger mean spread complexity tended to perform better.

In \cite{VET25} the authors discuss how state estimation in a quantum extreme learning machine can occur beyond the scrambling time, which challenges the common belief that information cannot be retrieved after this point. A similar pattern is observed in quantum reservoir computing, where task performance initially increases and then saturates \cite{MAR20}. We show that fidelity and spread complexity can only explain the initial change in task performance, but fails to explain the saturation in task performance. Our work explains this saturation using the concepts of Krylov expressivity and Krylov observability. The initial state or data is mapped onto the Krylov space, where expressivity and observability increase before reaching a saturation point. Beyond this point, further increases in time do not enhance the system's expressivity, resulting in stable task performance.

Among all measures, Krylov observability and information processing capacity explain the general trend of task performance best and show almost identical behavior for larger readout dimensions. For small readout dimensions, we show that Krylov observability performs better than information processing capacity in capturing the behavior of the Lorenz task. This is due to the fact that information processing capacity is upper-bounded by the readout dimension, while Krylov observability adapts better to a smaller number of measurements.
Our results suggest that Krylov expressivity is not as insightful as Krylov observability in quantum reservoir computing. This is because Krylov expressivity only considers how input states are mapped onto the Krylov space. However, if the initial state is sampled from an encoding unitary, Krylov expressivity can further the understanding of different encoding strategies. Two sub-fields of quantum machine learning with various encoding strategies are variational quantum machine learning and quantum extreme learning \cite{SCH18a, ROM17, DUN18, KIL19, SCH14, FAR18, BRU13a, LI15, LI17, SAG21, SCH21, LIU24, HUA04, DIN15, WAN22, HUA15, HUA06, HUA12, MUJ21, INN23, GHO19, GHO19a, XIO23, SUP24, QI24}. Krylov observability, on the other hand, can provide insights into how much of the Krylov space of a variational quantum algorithm can be sampled. Current research on the explainability of quantum machine learning is sparse and primarily employs techniques from classical machine learning in QML \cite{STE22, PIR24, LIF22, HEE23}, while \cite{GIL24} explores the possibilities of designing explanation techniques for parameterized quantum circuits.
We further give a brief discussion of the number of matrix operations required for the computation of Krylov observability ($N_{\mathrm{obs}}$) and for the computation of the state matrix ($N_{\mathbf{u}}$) , which is required for the IPC. The ratio of these two scales with $r={N_\mathrm{obs}}/{N_\mathrm{sys}}\in \mathcal{O}(N_R/N_{\mathbf{u}})$, where $N_{\mathbf{u}}$ is the length of the time-series and $N_R$ is the readout dimension. In typical RC approaches $N_u\gg N_R $ is given, which results in $r\ll 1$ and in our experiments in $r=0.00075$. Similar to the discussion about the grade of the Krylov state space presented in \cite{CIN24a}, this work introduces a similar identity for Krylov operator spaces, where the grade of the Krylov operator space can be derived using only the spectral properties of the Hamiltonian and the operator. This is of interest because numerical errors during the orthonormalization can artificially increase the apparent dimension of the Krylov space. Knowing the Krylov grade in advance allows for a cutoff based on physical rather than numerical considerations. This understanding is then used to explain why the operator Krylov space can be smaller for certain Hamiltonians, even when their corresponding Krylov state spaces are larger than those of others. 

This work is organized as follows. \cref{sec:QRC} introduces quantum reservoir computing, information processing capacity, and the Lorenz task. \cref{sec:Krylov} discusses Krylov spaces in quantum mechanics. Here, we discuss that any time-evolved state lies within a Krylov space and explain measurable Krylov spaces, leading to the introduction of Krylov expressivity. This concept is extended to operator complexity, along with a brief discussion of Krylov observability as proposed in \cite{CIN25}. 
In \cref{sec:Results}, we demonstrate the limitations of fidelity and spread complexity in explaining task performance due to their oscillatory behavior, even though Lorenz task performance remains mostly constant. We show that Hamiltonians with higher expressivity can exhibit worse performance when Krylov observability is lower. We conclude our work with a discussion of the measures presented.

\section{Quantum Reservoir Computing}\label{sec:QRC}
\begin{figure*}[t]
    \hspace*{-0.5 cm}
    \centering
    \includegraphics[scale=0.9]{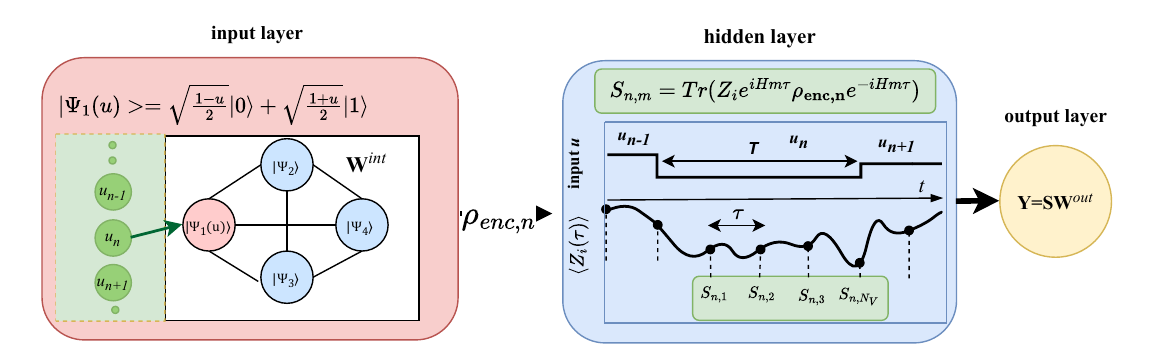}
    \caption[Reservoir Computing scheme]{A time series $u_n$ is encoded into a qubit of a quantum system, described by a Hamiltonian $H$. The system state is described by the density matrix $\rho_{enc}$ evolving over time $t$. Measurements of some observables $Z_i$ at times $m\tau$ are taken to construct the state matrix $\mathbf{S}$. The total number of measurements per observable is $V$ and the clock cycle is defined as $V\tau=T$. In the output layer, the state matrix $\mathbf{S}$ is multiplied by the readout weights $\mathbf{W}^{out}$ to construct the output vector $\mathbf{Y}=\mathbf{S}\mathbf{W}^{out}$.}
    \label{fig:1_QRC}
\end{figure*}
Quantum reservoir computing makes use of a quantum system as a reservoir for classical or quantum tasks. \cref{fig:1_QRC} shows a sketch of a quantum reservoir scheme and is explained in the following.
\begin{enumerate}
    \item Initialization: The reservoir is first initialized with an initial time series, in our case of length $N_{\mathrm{In}} = 10,000$
    \item Input layer (green) and encoding (red): The encoding of the $n$-th input $u_n\in \{u_1,..,u_{N_u}\}$ into the first qubit leads to the encoded state 
    \begin{align}
        \ket{\Psi_n} = \sqrt{\frac{1-u_n}{2}}\ket{0}+\sqrt{\frac{1+u_n}{2}}\ket{1}.
    \end{align}
    After encoding the state of the system is described by
    \begin{align}
        \rho_{enc,n}= \ket{\Psi_n}\bra{\Psi_n}\otimes \Tr_1(\rho_{n}).
    \end{align}
     The overwriting of the first qubit at each time-step results in loss of information about previous time steps. This is called fading memory property and is a crucial requirement for reservoir computing, as reservoirs exhibit a memory-nonlinearity trade-off \cite{JAE01, MAA02, VER10, BUT13b, INU17}. In \cite{CIN24} this memory-nonlinear trade-off was shown in quantum reservoir computing.
    \item Hidden Layer (blue): The reservoir evolves for a clock cycle $T$ by the unitary evolution $$U_R = \exp (-iHT).$$
    A discrete set of observables $\{Z_k\}_{k=1..,K}$ is measured to construct the expectation values. The expectation values at time $t_j$ after the $n$-th input are given by
    \begin{align}
        \ev{Z_k(t_j,u_n)} = Tr(Z_ke^{-iHt_j}\rho_{enc,n}e^{iHt_j}).
    \end{align}
    A sketch of the evolution of an observable with respect to different inputs is shown above. To increase the readout dimension, each observable is sampled \( V \) times for each input \( u_n \) at times \( t_j = nT + (j+1)\tau \), with \( \tau = T/V \). The sketch \cref{fig:1_QRC} shows a quantum reservoir with four virtual nodes $V=4$. This results in $N_R=VN_S$ readout nodes for each input $u_n$, where $N_S$ represents the number of observables measured. Writing the readouts for each input into a row, results in the state matrix $\mathbf{S}\in \mathbb{R}^{N_u, N_R}$ .
    
    \item Output layer (yellow): The state matrix is multiplied with readout weights $\mathbf{W}^{out}$ resulting in the output 
    \begin{align}
        \mathbf{Y} = \mathbf{S}\mathbf{W}^{out}.
    \end{align}
    \item Training: For training, the data is separated into a initialization set $u_{\mathrm{In}}$, a training set $u_{\mathrm{Tr}}$ and a testing set $u_{\mathrm{Te}}$ of lengths $N_{\mathrm{In}}$, $N_{\mathrm{Tr}}$  and $N_{\mathrm{Te}}$ respectively. A buffer of length $N_{b}$ is added between the sets.
    We first initialize the system with $N_{\mathrm{In}}$ steps, afterwards the weights are optimized in regards to the training set $u_{\mathrm{Tr}}$, resulting in the trained weights $\mathbf{W}^{out}$. The testing set is used to check if the reservoir can successfully generalize to new data. The weights are optimized to minimize the loss 
    \begin{align}
        L &:= (\mathbf{Y}-\mathbf{Y}^{targ})^2,
    \end{align}
    where $\mathbf{Y}^{targ}$ is the target vector. To consider shot noise of the measurement, Gaussian distributed noise $\mathcal{N} \in \mathbb{R}^{N_u, N_R}$ will be added on the state matrix 
    \begin{align}
        \mathbf{S} &\leftarrow \mathbf{S} + \eta \mathcal{N}.
    \end{align}
   The noise term $\eta\mathcal{N}$ can be used as a regularization parameter when computing the state matrix, this is typically addressed as regularization by noise.
   The readout weights $\mathbf{W}^{out}$ are then computed by
    \begin{align}
        \mathbf{W}^{out}&=(\mathbf{S}_{Tr}^T\mathbf{S}_{Tr})^{-1}\mathbf{S}_{Tr}^T\mathbf{Y}^{targ}.
    \end{align}
\end{enumerate} 

\textbf{Transverse field Ising Hamiltonian}\\
Four transverse field Ising Hamiltonian with different inter-spin couplings will be used as quantum reservoirs and analyzed in regards to spread complexity, fidelity, Krylov observability, Krylov expressivity and task performance.
\begin{align}
{H}_{I\alpha} = \sum_{i=1,j>i}^{N_S}J_{ij}X_iX_j+\sum_{i=1}^{N_S}hZ_i,
\label{eq:Ising_Ham}
\end{align}
where $\alpha\in \{1,2,3,4\}$ is the index referencing each of the four Hamiltonian, $N_S$ is the number of qubits, $X_i, Y_i$ and $Z_i$ are the $\sigma_x$, $\sigma_y$ and $\sigma_z$ Pauli matrices of the $i$th qubit given by
\begin{align}
    X_i, Y_i, Z_i = \Big( \bigotimes_{k=1}^{i-1}I_2 \Big)\otimes \sigma_{x,y,z} \otimes\Big( \bigotimes_{k=i+1}^{N_S}I_2 \Big),
\end{align} 

and $h=0.5$ is set for all Hamiltonians. The number of pairwise distinct eigenvalues $d$ and the inter-spin couplings $J_{i,j}$ are shown in \cref{table:couplings}.\\
\hfill \break

\begin{table}[H]
\centering
\begin{tabular}{lcccc}
\toprule
 & ~$H_{I1}~$ & $~H_{I2}~$ & $~H_{I3}~$ & $~H_{I4}~$ \\
\midrule
\# of eigenvalues $d~~~$ & 9 & 16 & 15 & 16 \\
$J_{1,2}$             &  0.50 & 0.40 & 0.35 & 0.35 \\
$J_{1,3}$             &  0.50 & 0.50 & 0.40 & 0.40 \\
$J_{1,4}$             &  0.50 & 0.50 & 0.45 & 0.45 \\
$J_{2,3}$             &  0.50 & 0.50 & 0.50 & 0.50 \\
$J_{2,4}$             &  0.50 & 0.50 & 0.55 & 0.55 \\
$J_{3,4}$             &  0.50 & 0.50 & 0.60 & 0.65 \\
\bottomrule
\end{tabular}
\caption{The number of pairwise distinct eigenvalues $d$ and the inter-spin couplings $J_{i,j}$ for the different Ising reservoirs are shown.}
\label{table:couplings}
\end{table}

\textbf{Lorenz Task}\\
The Lorenz63 system is a commonly used to construct time series prediction benchmark tasks \cite{LOR63}. The dynamics of the system are governed by \cref{eq:Lor}
\begin{align}
    \dot{x} &= a(y-x) \nonumber \\
    \dot{y} &= x(b-z)-y \nonumber \\
    \dot{z} &= xy - cz.
    \label{eq:Lor}
\end{align}
With the parameters set to $a=10$, $b=28$, $c=8/3$, the Lorenz63 system exhibits chaotic dynamics. We simulate Eqs.~\eqref{eq:Lor} with an integration time step $dt = 0.001$. To construct the input and target data for the task, we down sample to a  discretized step of $\Delta t=0.02$, which results in the time series $x_n=x(n\Delta t)$, $y_n=y(n\Delta t)$, and $z_n=z(n\Delta t)$. We consider two time series prediction tasks. The first considers the time series $x_n$ as input and tries to predict the $p$th future step $x_{n+p}$, referred to as the LXX task. In our study we choose to predict $\Delta t=0.1$ into the future, which results in a $p=5$ step ahead prediction. The second task takes $x_n$ as input but tries to predict $z_{n}$. This task is referred to as the LXZ task or cross-prediction task.
\\\\
\textbf{Error Measure}\\
The normalized root mean squared error (NRMSE) between output $Y$ of the reservoir and target $Y^{targ}$ is defined as 
\begin{align}
    \mathrm{NRMSE}= \sqrt{\frac{(\mathbf{Y}-\mathbf{Y}^{\mathrm{targ}})^2}{N\mathrm{var}(\mathbf{Y}^{\mathrm{targ}})}} = \sqrt{1-
C(\mathbf{Y}, \mathbf{Y}^{\mathrm{targ}})}
\end{align}
where the capacity or Pearson correlation is defined as 
\begin{align}
&C(\textbf{Y}, \textbf{Y}^{\mathrm{targ}}) = \frac{\mathrm{cov}(\textbf{Y}, \textbf{Y}^{\mathrm{targ}})}{\sigma^2(\textbf{Y})\sigma^2(\textbf{Y}^{\mathrm{targ}})}.
\label{eq:Pearson}
\end{align}
 $\textrm{cov}(\cdot)$ and $\sigma (\cdot)$ are the covariance and standard deviation, respectively.
\\\\

\textbf{Information processing capacity}\\
The information processing capacity (IPC) is a measure of how well a reservoir can generalize prior data onto a set of orthogonal functions \cite{DAM12}. In this study, Legendre polynomials are used as target functions, which requires the input series to be sampled from a uniform distribution, i.e., $u_n \in \mathcal{U}([-1,1])$\cite{KUB21}. The IPC does not predict task performance, but is a measure to gain understanding of the reservoir and the underlying behavior of reservoir computers. \cite{CIN24} showed that the memory–nonlinearity trade-off can be used to increase the nonlinear term of the IPC, which is proved to be beneficial for chaotic time-series prediction tasks. 

The first-order IPC, denoted \( \mathrm{IPC}_1 \), is also known as linear IPC, memory capacity, or linear short-term memory. It quantifies how well the system can recall previous data \cite{JAE02}. Let $\mathbf{u} = (u_{n_1}, u_{n_1+1}, \ldots, u_{n_2})$ with $n_2 > n_1$. The target is to reconstruct the time series $m$ steps into the past:
\[
\mathbf{Y}^{\mathrm{targ}}(-m) = (u_{n_1-m}, u_{n_1+1-m}, \ldots, u_{n_2-m}).
\]
The first-order IPC is then defined as
\begin{align}
    \mathrm{IPC}_1 = \sum_{m} C(\mathbf{Y}, \mathbf{Y}^{\mathrm{targ}}(-m)),
\end{align}
where $\mathbf{Y}$ is the output of the reservoir.

In practice, the target is given by
\[
\mathbf{Y}^{\mathrm{targ}}(-m) = (l_1(u_{n_1-m}), l_1(u_{n_1+1-m}), \ldots, l_1(u_{n_2-m})),
\]
where $l_1$ is the first-order Legendre polynomial, defined as $l_1(x) = x$. Therefore, the first-order $\mathrm{IPC}_1$ is equivalent to the memory capacity. The Legendre polynomials are defined as:
\begin{align}
    l_0(x) &= 1, \quad l_1(x) = x, \quad l_2(x) = \frac{1}{2}(3x^2 - 1) \nonumber \\
    l_n(x) &= \frac{2n + 1}{n + 1} x l_n(x) - \frac{n}{n + 1} l_{n - 1}(x), \quad \text{for } n \geq 2
\end{align}

For readability, we define:
\[
\mathbf{l}_k(-m) = (l_k(u_{n_1-m}), l_k(u_{n_1+1-m}), \ldots, l_k(u_{n_2-m})).
\]

The IPC can be generalized to higher orders. For second order, $\mathrm{IPC}_2$ is computed by:
\begin{align}
    \mathrm{IPC}_{2,1} &= \sum_{m} C(\mathbf{Y}, \mathbf{l}_2(-m)) \nonumber \\
    \mathrm{IPC}_{2,2} &= \sum_{\substack{m_1, m_2 \\ m_1 > m_2}} C(\mathbf{Y}, \mathbf{l}_1(-m_1) \mathbf{l}_1(-m_2)) \nonumber \\
    \mathrm{IPC}_2 &= \mathrm{IPC}_{2,1} + \mathrm{IPC}_{2,2}.
\end{align}
We note that there are two second-order contributions. In the case of $\mathrm{IPC}_{2,2}$, the constraint $m_1 > m_2$ is necessary because the case $m_1 = m_2$ is already covered by $\mathrm{IPC}_{2,1}$, and because the product $l_1(m_1) l_1(m_2)$ is commutative, making $m_1 > m_2$ sufficient to represent all unique combinations.

This idea can be extended to higher-order IPCs. The third-order IPC consists of three types of hyper-tasks, involving:
\[
l_3(m), \quad l_1(m_1) l_2(m_2), \quad \text{and} \quad l_1(m_1) l_1(m_2) l_1(m_3).
\]
We see that third-order IPC involves three indices $m_1, m_2$, and $m_3$, and all valid combinations must be considered. Each combination requires training the reservoir.
If only the past fifteen inputs are considered, there are already $\binom{15}{3} = 455$ distinct target combinations for $m_1, m_2$, and $m_3$. This highlights the computational cost of IPC, as it requires training the reservoir thousands of times if even higher IPCs have to be computed.
Higher-order $\mathrm{IPC}_i$ are defined analogously, as discussed in detail in \cite{JAE01, CIN24}. The total IPC is given by the sum of all individual orders:
\begin{align}
    \mathrm{IPC} = \sum_{i} \mathrm{IPC}_i.
    \label{eq:IPC}
\end{align}

While $\mathrm{IPC}$ is often referred to as memory capacity in the literature \cite{GOL20, KOE20a, BAU22c}, in this work we use the term memory capacity specifically for the first-order $\mathrm{IPC}_1$.
One important identity is that the IPC is upper bounded by the readout dimension $N_R$, i.e. $IPC\leq N_R$. 
The trends of IPC and task performance often resemble each other when the readout dimension is sufficiently large. 
In \cite{KOE21, KOE22}, analytical evidence is provided showing the influence of eigenvalues on $\mathrm{IPC}$ in classical reservoir computing. Furthermore, it has been demonstrated that $\mathrm{IPC}$ can help predict task performance in reservoir computing \cite{HUE22, HUE22a}. The concept has also been extended to quantum reservoir computing in \cite{MAR20, CIN24}.\\

The computation of the IPC requires simulating the state matrix $\mathbf{S}$. For this reason, we will briefly discuss the number of matrix multiplications needed to construct $\mathbf{S}$. 
Each reservoir evolution requires the following computational steps:
\begin{align}
    \ket{\Phi_n} &= \sqrt{\frac{1 - u_n}{2}} \ket{0} + \sqrt{\frac{1 + u_n}{2}} \ket{1} \nonumber \\
    \rho_n &= e^{-iHt} \left( \ket{\Phi_n}\bra{\Phi_n} \otimes \mathrm{Tr}_1(\rho_{n-1}) \right) e^{iHt} \nonumber \\
    \ev{Z_i(t)} &= \mathrm{Tr}(Z_i \rho_n)
\end{align}
The computation of the trace requires 4 matrix multiplications, with an additional 2 from the exponentials, 1 from the encoded state, and another $K$ for the number of observables measured. This results in a total of $7 + K$ matrix multiplications per evolution step.
Let $N_{\mathbf{u}} = N_{\mathrm{Tr}} + N_{\mathrm{Te}}$, where $N_{\mathrm{Tr}}$ and $N_{\mathrm{Te}}$ represent the number of training and testing steps, respectively, and let $V$ be the number of multiplexing operations. Then the total number of matrix multiplications is given by:
\begin{align}
N_{\mathrm{state}} = (7+K)N_{\mathbf{u}}V
\label{eq:N_state}
\end{align}

%\pagebreak
\textbf{Simulation parameters}\\
In \cref{table2:para} the simulation parameters for the computation of the results are listed. The information processing  capacity is computed up to the fourth order. 
\begin{table}[H]
\centering
\begin{tabular}{lc}
\toprule
\textbf{Parameter} & \textbf{Value} \\
\midrule
$N_{\mathrm{In}}$ (input length)         & 10{,}000 \\
$N_{\mathrm{Tr}}$ (training length)      & 25{,}000 \\
$N_{\mathrm{Te}}$ (test length)          & 5{,}000  \\
Number of spins $N_S$                    & 4      \\
Regularization $\eta$                   & $10^{-5}$ \\
Buffer $N_b$                             & 100    \\
\bottomrule
\end{tabular}
\caption{QRC and QELM simulation parameters.}
\label{table2:para}
\end{table}

\section{Krylov Spaces in Quantum Mechanics}\label{sec:Krylov}
This chapter introduces Krylov spaces in relation to quantum evolution. We first discuss spread complexity \cite{BAL22}, a measure which quantifies the spread over the Krylov basis. Krylov expressivity is defined as a quantum-mechanically measurable quantity that quantifies the effective phase space dimension of evolved quantum states \cite{CIN24a}. Next, operator complexity is explored, defining a measure of how operators spread over a Krylov operator basis \cite{PAR19}. Building on this, Krylov observability is introduced as a measure that effectively captures the phase-space dimension of a set of operators ${O_1,..,O_K}$, considering the number of measurements involved in constructing the effective Krylov space. In \cite{CIN25}, we demonstrate that Krylov observability effectively captures the behavior of the $\mathrm{IPC} $ and the generalization properties of the quantum reservoir.

\subsection{Krylov Spaces for State Evolution}
The observation of the time-evolution operator being a map onto a Krylov space was first discussed in \cite{PAR19,BAL22}. There, the proof for \cref{eq:krylov1} was given, which we will be going over briefly.
\\
\textbf{Spread Complexity}\\
The Schrödinger equation for the time-independent Hamiltonian $H$ with initial condition $\ket{\Psi(0)}$ is given by:
\begin{align}
    \partial_{t} \ket{\Psi(t)} = -iH\ket{\Psi(t)} \nonumber \\
    \ket{\Psi(0)} := \ket{\Psi_0}.
\end{align}
The solution to this equation is
\begin{align}
    \ket{\Psi(t)} = e^{-iHt}\ket{\Psi_0}=\sum_{k=0}^\infty \frac{(-iHt)^k}{k!} \ket{\Psi_0},
\end{align}
where the series representation of the matrix exponent was used. Defining the linear function $f(\ket{\Psi_0)}:=-iH$ leads to
\begin{align}
    \ket{\Psi(t)} &= e^{-iHt}\ket{\Psi_0} \nonumber \\ 
    &=\sum_{k=0}^\infty (-iH)^k\frac{t^k}{k!}\ket{\Psi_0} =\sum_{k=0}^\infty f^k(\ket{\Psi_0})\frac{t^k}{k!}.
\end{align}
For any time $t\in \mathbb{R}$ it holds that
\begin{align*}
    \ket{\Psi(t)} \in \mathrm{Span}\{f^0(\ket{\Psi_0}), f^1(\ket{\Psi_0})t, f^2(\ket{\Psi_0})\frac{t^2}{2!}, \ldots\}
\end{align*}
Since $t^k/k!>0$ for $t>0$ it follows that 
\begin{align*}
    \ket{\Psi(t)} \in \mathrm{Span}\{f^0(\ket{\Psi_0}), f^1(\ket{\Psi_0}),\ldots\} := K_{\infty}.
\end{align*}
$f$ being a linear function and considering the Krylov space property implies that there exists a $m\leq N$, such that 
\begin{align}
    K_{\infty} &= \mathrm{Span}\{f^0(\ket{\Psi_0}), f^1(\ket{\Psi_0}), f^2(\ket{\Psi_0}), \ldots\} \nonumber \\
    &=\mathrm{Span}\{f^0(\ket{\Psi_0}), f^1(\ket{\Psi_0}), \ldots\, f^{m-1}(\ket{\Psi_0}) \nonumber \\
    &= \mathrm{K}_m 
    \label{eq:krylov1}
\end{align}
$m$ is called the grade of $\ket{\Psi_0}$ in regards to $f:=-iH$ \cite{PAR19, BAL22}.
Building upon this \cite{BAL22} proposed the spread complexity, which is a measure of how the state spreads over the Krylov basis of $\mathrm{K}_m$. This is achieved by orthonormalization through the Lanczos algorithm of the vectors 
$f^{j}(\ket{\Psi_0})$, which results in 
\begin{align}
\mathrm{K}_m=\mathrm{Span}\{\ket{k_0},..,\ket{k_{m-1}}\}.
\label{eq:krylov2}
\end{align}
The time evolved state is reconstructed through the basis representation $\ket{k_i}$ as
\begin{align}
    \ket{\Psi(t)} = \sum_{n=0}^{m-1}\bra{k_n}\ket{\Psi(t)}\ket{k_n}.
\end{align}
With $\alpha_n(t)=\bra{k_n}\ket{\Psi(t)}$, the spread complexity is defined as
\begin{align}
    \mathcal{K}_S(t) = \sum_{n=0}^{m-1} (n+1)\abs{\alpha_{n}(t)}^2.
    \label{eq:spread_complexity}
\end{align}
Spread complexity characterizes how an initial state evolves over a basis constructed from different powers of the Hamiltonian. This metric provides valuable insights and helps in the understanding of the time evolution operator,  but faces some challenges as an expressivity measure for quantum machine learning and quantum reservoir computing. First, the requirement for knowledge of the system's Hamiltonian poses a challenge because the quantum machine learning network is typically not described by a Hamiltonian but by a series of quantum circuits. Additionally, the different powers of the Hamiltonian must be computed classically, which becomes increasingly difficult as the number of qubits grows. To address these challenges, we demonstrated that time-evolved states can be used to construct a basis from which the complexity measure can be defined, exhibiting the same behavior in the researched systems \cite{CIN24a}. \\\\
\textbf{Krylov Expressivity}\\
In \cite{CIN24a} we demonstrated that instead of using the Krylov space 
\begin{align}
\mathrm{K}_m=\mathrm{Span}\Big(\ket{\Psi_0}, H\ket{\Psi_0},  \ldots, H^{m-1}\ket{\Psi_0}\Big),\nonumber 
\end{align}
a set of time-evolved states can be employed to construct the space
\begin{align}
\mathrm{G}_m:=\mathrm{Span}\Big(e^{-iHt_0}\ket{\Psi_0}, \ldots, e^{-iHt_{m-1}}\ket{\Psi_0}\Big).\nonumber 
\end{align}
The basis of $\mathrm{G}_m$ are evolved states and therefore quantum-mechanically measurable. The global phase does not change the space dimension and detailed proofs are given in \cite{CIN24a}. In \cite{CIN24a} we showed that the grade \( m \) of the Krylov space $\mathrm{K}_m$ is equal to the number of pairwise distinct eigenvalues of the Hamiltonian \( d \), i.e., \( m = d \). Building on this, Krylov expressivity is defined, which is upper bounded by the grade $m$. \\
The computation of \textbf{Krylov expressivity} requires the knowledge of the grade $m$. For this, time evolved states $\ket{g_i}=\exp (-iHt_i)\ket{\Psi_0}$ are computed, where some $T_K$ is picked and $t_i=(i+1)T_K/(N+1)$. For exactly the grade $m$, it will hold that $\ket{g_{m}}\in \mathrm{Span}(\ket{g_1},\ket{g_2},.., \ket{g_{m-1}})=\mathrm{G}_m$. We then simulate the time evolved state until some time $T$ and calculate the sampling times $\tau_i=(i+1)T/m$ with $i=0,..,m-1$. For small $T$ it is expected that the vectors $\ket{g_i}$ are close to each other, i.e. for all $i$, $\ket{g_i}\approx \ket{g_{i+1}}$ would hold. Analytically speaking $\ket{g_i}$ are independent, but numerically almost equal. For the computation of the Krylov expressivity, a measure of similarity between two time-evolved states $\ket{g_i}$ and $\ket{g_{i+1}}$ is required. Here, we proposed the computation of the fidelity for pure states $\mathrm{F}(\ket{g_i}, \ket{g_{i+1}})$ but other measures can also be utilized.
\begin{align}
    \lambda_{i} :=  \mathrm{F}\Big(\ket{g_i}\bra{g_i}, \ket{g_{i+1}} \bra{g_{i+1}} \Big)=\abs{\bra{g_i}\ket{g_{i+1}}}.
    \label{eq:fidelity}
\end{align}
The \textbf{Krylov expressivity} $\mathcal{E}_{K}$ for $\lambda$, where $\lambda \in [0,1]$ holds, is given by
\begin{align}
    m_{{eff}_i} &= \begin{cases}
        1 &\mathrm{if } \lambda_i < \lambda\\
        1-\frac{1}{1-\lambda}\cdot(\lambda_i-\lambda)  &\mathrm{if } \lambda_i \geq \lambda
    \end{cases} \nonumber \\
    \mathcal{E}_{K} &= 1 +\sum_{i=1}^{m-1}m_{{eff}_i}.
    \label{eq:eff_dim}
\end{align}
The first vector $\ket{g_1}$ adds a dimension of $1$. For $\lambda_i < \lambda$ we say that the two vectors are linearly independent and the Krylov expressivity is increased by one. In the region $\lambda_i \geq \lambda$, the Krylov expressivity $m_{{eff}_i}$ is interpolated to consider the difference between analytical independence and the independence that is needed for computation. We will explore the Krylov expressivity $\mathcal{E}_{K}$ in dependence of the time scale $T$ and which influence this has on the performance in quantum reservoir computing. For the following discussion, we set \( \lambda = \frac{1}{\sqrt{2}} \), inspired by a 3\,dB fall-off.

In Krylov complexity, the Krylov basis
\begin{align}
    \mathrm{K}_m = \mathrm{Span}\{\ket{\psi_0}, H\ket{\psi_0}, H^2\ket{\psi_0}, \ldots, H^{m-1}\ket{\psi_0}\}
\end{align}
is computed, and after normalization, the state at a given time $\ket{\psi(T)}$ is represented in the orthonormalized Krylov space. The complexity is then defined by how the state evolves within this Krylov space.

Krylov expressivity, on the other hand, defines a time-dependent Krylov space for each time $T$, given by
\begin{align}
    \mathrm{G}_m(T) = \mathrm{Span}\left\{\ket{\psi_0}, e^{iH\frac{T}{m}}\ket{\psi_0}, \ldots, e^{iHT}\ket{\psi_0} \right\}.
\end{align}
Krylov expressivity is then defined as the effective phase space dimension of $\mathrm{G}_m(T)$ and is therefor time-dependent. The same interpretation applies to Krylov operator complexity and Krylov observability, which will both be introduced in the following.

\subsection{Krylov Spaces for Operator Evolution}
Where spread complexity tries to quantify the spread of states, Operator complexity quantifies the spread of time-evolved operators \cite{PAR19} and is defined in the following paragraph.\\\\
\textbf{Operator complexity}\\
The evolution of operators is governed by  
\begin{align}
    \partial_t O(t) = i[H, O(t)].
\end{align}
The solution to this equation is
\begin{align}
    O(t) &= e^{iHt}Oe^{-iHt} = \sum_{k=0}^\infty \frac{(it)^k}{k!}\mathcal{L}^k(O),
\end{align}
where $\mathcal{L}(O)=HO-OH$ is the Liouvillian superoperator and linear in $O$, i.e.
\begin{align}
    \mathcal{L}(\lambda O_1+O_2)&=H(\lambda O_1+O_2)-(\lambda O_1+O_2)H \nonumber \\
    &= \lambda HO_1 - \lambda HO_1 + HO_2 - HO_2 \nonumber \\
    &= \lambda \mathcal{L}(O_1)+\mathcal{L}(O_2) 
\end{align}
Any time evolved operator $O(t)$ has therefore to be in the span of the powers of the Liouvillian $\mathcal{L}^n$
\begin{align}
    O(t) \in \mathrm{Span}\{\mathcal{L}^0(O),\mathcal{L}^1(O),\mathcal{L}^2(O),..\}.
\end{align}
By using the linear property of $\mathcal{L}$ and the Krylov space property, authors of 
\cite{PAR19} showed that there exists a $M\in \mathbb{N}$ such that 
\begin{align}
    O(t)\in \mathrm{Span}\{\mathcal{L}^0(O),\mathcal{L}^1(O),.., \mathcal{L}^{M-1}(O),..\}= \mathcal{L}_M
\end{align}
holds true. Similar to the Krylov state complexity the authors perform an orthonormalization procedure to construct the space
\begin{align}
    \mathcal{L}_M &:= \mathrm{Span}\{\mathcal{L}^0(O),\mathcal{L}^1(O),.., \mathcal{L}^{M-1}(O)\} \nonumber \\
    &=\mathrm{Span}\{\mathcal{W}_0,\mathcal{W}_1,.., \mathcal{W}_{M-1}\},
\end{align}
where $\{\mathcal{W}_i\}_{i=..,M-1}$ is the orthonormal basis constructed  through the Lanczos algorithm. In the next step the time evolved operator is expressed in the Krylov basis as 
\begin{align}
    O(t) = \sum_{n=0}^{M-1}i^n\beta_n(t)\mathcal{W}_n,
\end{align}
where $\beta_n(t) = (O(t), \mathcal{W}_n)$ is a scalar product defined on the operator space. Operator complexity $\mathcal{K}_O$ is then defined as
\begin{align}
    \mathcal{K}_O(t) = \sum_{n=0}^{M-1} (n+1)\abs{\beta_n(t)}^2.
    \label{eq:operator_complexity2}
\end{align}
 \\\\
 
\textbf{Krylov Observability}\\
One aspect that operator complexity lacks is a way to define an effective space dimension and a method to consider multiple operators. This is especially crucial for quantum machine learning, where a set of observables is measured, and quantum reservoir computing, where multiplexing is used to increase the readout dimension. In \cite{CIN25} we show that instead of using the Krylov space of the different powers of the Liouvillian $\mathcal{L}^k$
\begin{align}
    O(t) \in \mathcal{L}_M := \mathrm{Span}\{\mathcal{L}^0(O),\mathcal{L}^1(O),.., \mathcal{L}^{M-1}(O)\} \nonumber 
\end{align}
, we can use time-evolved observables to construct an equivalent space.
\begin{align}
    O(t) \in \mathcal{F}_M = \mathrm{Span}\{ \Tilde{O}(t_0),\Tilde{O}(t_1),.., \Tilde{O}(t_{M-1})\} = \mathcal{L}_M
\end{align} 

Building on this, we present an algorithm (\cref{alg:obs}) that enables the computation of a minimal space $\mathcal{F}$ such that any time-evolved operator $O_i(t) \in \mathcal{F}$ holds for all $O_i \in \{O_1, O_2, \dots, O_K\}$ and all $t$. This algorithm further returns the contribution of each operator $O_i$ as a space $\mathcal{F}_i$ with the following properties
\begin{align}\label{eq:Fprop}
    &\mathcal{F} = \bigcup_{k=1}^K \mathcal{F}_k = \bigcup_{k=1}^K \tilde{\mathcal{F}}_k~,~~  \mathrm{dim} \left( \bigcup_{j=1}^l \tilde{\mathcal{F}}_j \right) = \sum_{j=1}^l \mathrm{dim}(\tilde{\mathcal{F}}_j). \nonumber
\end{align}
The computation of \textbf{Krylov observability} considers \( K \) observables \( O_1, \dots, O_K \). For each observable \( O_k \), the spaces \( {\mathcal{F}}_k \) are computed with \( \mathrm{dim}({\mathcal{F}}_k) = M_k \). With \( R_k = \mathrm{min}(V, M_k) \) and \( \tau_{k} = T / M_k \), we can then define
$$\mathcal{G}_k = \{O_k(\tau_1), O_k(\tau_2), \dots, O_k(\tau_{R_k})\}, $$
where $V$ is the number of measurements. The observability of the \( k \)-th observable \( O_k \) is defined as 
\begin{align}
    \kappa_k(T) = 1 + \sum_{j=1}^{R_k - 1} (1 - \mathrm{F}(O_k(\tau_j), O_k(\tau_{j+1})))
\end{align}
, where  \( \mathrm{F} \) is the normalized fidelity between the two time-evolved operators given by
\begin{align}
    \mathrm{F}(A,B) = \abs{\mathrm{Tr}\Big(\frac{A^\dagger B}{\norm{A}\norm{B}}\Big)}.
    \label{eq:overlap_O}
\end{align}
The $\textbf{Krylov ~Observability}$ \( \mathcal{O}_K(T) \) of \( V \) multiplexed observables \( O_1, \dots, O_K \) is defined as  
\begin{align}
    \mathcal{O}_K(T) = \sum_{k=1}^K \kappa_k(T) \cite{CIN25}.
\end{align}

To compute the Krylov observability, we require the orthonormalization of the space $\mathcal{F}$. To construct a linearly independent basis, an algorithm such as the Gram–Schmidt procedure may be applied. In that case, each time-evolved operator at time $t_n$ must be orthonormalized with respect to all previous basis elements, requiring a total of $(n - 1)$ matrix multiplications for each $O(t_n)$. Given $V \cdot K$ matrices, the total number of matrix multiplications due to Gram–Schmidt is:
\begin{align}
     \sum_{n=1}^{VK} (n - 1) = \frac{VK(VK - 1)}{2}
\end{align}
Next, we require the computation of the time-evolved operators via $O(t_n) = e^{iHt} O e^{-iHt}$, which requires at most $VK$ matrix multiplications.
Lastly, for the computation of the overlaps $\mathrm{F}$ (\cref{eq:overlap_O}), another $VK$ matrix multiplications are required.
The resulting number of matrix multiplications required is then given by
\begin{align}
    N_{\mathrm{obs}} = \frac{VK(VK + 3)}{2}
    \label{eq:N_obs}
\end{align}

The ratio $r$ of the number of matrix multiplications for Krylov complexity to the construction of the state matrix is 
\begin{align}
    r = \frac{N_{\mathrm{obs}}}{N_{\mathrm{state}}} = \frac{V^2K^2 + 3VK}{2(7 + K)N_{\mathbf{u}}V} \in \mathcal{O}\left(\frac{VK}{N_{\mathbf{u}}}\right)
\label{eq:ratio}
\end{align}
In typical reservoir computing approaches, the number of inputs $N_{\mathbf{u}}$ is much larger than the readout dimension $N_R = VK$, i.e., $N_{\mathbf{u}} \gg VK$, thus resulting in $r \ll 1$.

%%%%%%%%%%%%%%%%%%%%%%%%%% HEREEEE %%%%%%%%%%%%%%%%%%%%%%%%%%%%%%%%%%%%%%%
\subsection{Dimension of Krylov Spaces}\label{subsec:dimension_KrylovSpace}

The construction of Krylov spaces is typically performed until a cutoff is observed in the algorithm, as discussed in \cref{app1:obs}. Due to numerical errors, the actual dimension considered may increase. Therefore, it is necessary to understand the grades \( m \) and \( M \) of the Krylov spaces 
\begin{align}
    \mathrm{K}_m &= \mathrm{Span}\{\ket{\psi_0}, H\ket{\psi_0}, \ldots, H^{m-1}\ket{\psi_0}\}, \nonumber \\
    \mathcal{L}_M &:= \mathrm{Span}\{\mathcal{L}^0(O), \mathcal{L}^1(O), \ldots, \mathcal{L}^{M-1}(O)\}.
\end{align}
It is also important to determine what these grades should be without relying on iterative algorithms like Lanczos, which are prone to numerical errors. In \cite{CIN24}, we showed that the grade \( m \) is upper bounded by the number of pairwise distinct eigenenergies and stated the following theorem.

\begin{theorem}
\label{theorem:E_d}
Let \( H \in \mathbb{C}^{N \times N} \) be a Hermitian Hamiltonian with \( d \) pairwise distinct eigenvalues \( \varepsilon_0, \varepsilon_1, \ldots, \varepsilon_{d-1} \), and let \( \{ \ket{\phi_j} \} \) be an orthonormal eigenbasis of \( H \), satisfying
\begin{align}
    H \ket{\phi_j} = \varepsilon_j \ket{\phi_j}.
\end{align}
Then the time-evolved state \( \ket{\Psi(t)} = e^{-iHt} \ket{\Psi_0} \) lies in a \( d \)-dimensional subspace \( \mathrm{E}_d \subseteq \mathbb{C}^N \), i.e.,
\begin{align}
    \ket{\Psi(t)} \in \mathrm{E}_d := \mathrm{Span} \left\{ \ket{\xi_0}, \ket{\xi_1}, \ldots, \ket{\xi_{d-1}} \right\},
\end{align}
where the vectors \( \ket{\xi_p} \) are defined by
\begin{align}
    \ket{\xi_p} := \frac{1}{\sqrt{|J_p|}} \sum_{j \in J_p} \alpha_j \ket{\phi_j}, ~~~ \mathrm{with~ } \alpha_j := \bra{\phi_j}\ket{\Psi_0},
\end{align}

and \( J_p := \{ j \mid \varepsilon_j = \varepsilon_p \} \) denotes the set of indices corresponding to the degenerate eigenspace of eigenvalue \( \varepsilon_p \). The normalization factor \( |J_p| \) is the cardinality of the set \( J_p \).
The tarting state \( \ket{\psi_0} \) can be represented in the basis \( \{\ket{\xi_p}\}_p \), with \( \gamma_p = \bra{\xi_p}\ket{\psi_0} \), as
\begin{align}
    \ket{\psi_0} = \sum_{p=0}^{d-1} \gamma_p \ket{\xi_p}.
\end{align}
Let \( n_1 \) denote the number of coefficients for which \( \gamma_p = 0 \). Then, the number of linearly independent vectors is reduced to \( d - n_1 \). It also holds that for the Krylov state space \( \mathrm{K}_m = \mathrm{Span}\{\ket{\psi_0}, H\ket{\psi_0}, \ldots, H^{m-1}\ket{\psi_0}\} \),
\begin{align}
    m = d - n_1
\end{align}
holds.
\end{theorem}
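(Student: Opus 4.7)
The plan is to diagonalize in the eigenbasis of $H$, group terms that share the same distinct eigenvalue $\varepsilon_p$, and read off both the containment in $E_d$ and the Krylov grade directly from that grouped representation. The key structural observation that makes the second half almost automatic is that each $\ket{\xi_p}$ is itself an eigenvector of $H$ with eigenvalue $\varepsilon_p$, so the Krylov iterates $H^k \ket{\Psi_0}$ have coefficients that depend on $k$ only through the pure powers $\varepsilon_p^k$. The rank of the Krylov sequence is then controlled by a Vandermonde determinant.

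For the first claim, I expand $\ket{\Psi_0} = \sum_j \alpha_j \ket{\phi_j}$, apply $e^{-iHt}$, and split the sum over the degeneracy partition $\{J_p\}$:
\begin{align}
e^{-iHt}\ket{\Psi_0} \;=\; \sum_{p=0}^{d-1} e^{-i\varepsilon_p t}\! \sum_{j \in J_p}\!\alpha_j \ket{\phi_j} \;=\; \sum_{p=0}^{d-1} \sqrt{|J_p|}\, e^{-i\varepsilon_p t}\, \ket{\xi_p}.
\end{align}
This places $\ket{\Psi(t)}$ in $\mathrm{Span}\{\ket{\xi_p}\}_p$. The $\ket{\xi_p}$ are pairwise orthogonal because the index sets $J_p$ are disjoint, so this span is at most $d$-dimensional, attaining $d$ whenever each $\ket{\xi_p}$ is nonzero.

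For the Krylov grade I compute $H\ket{\xi_p} = \tfrac{1}{\sqrt{|J_p|}} \sum_{j \in J_p} \alpha_j \varepsilon_j \ket{\phi_j} = \varepsilon_p \ket{\xi_p}$, using that every $j \in J_p$ carries eigenvalue $\varepsilon_p$. Rewriting $\ket{\Psi_0} = \sum_p \gamma_p \ket{\xi_p}$ then yields $H^k \ket{\Psi_0} = \sum_p \gamma_p \varepsilon_p^k \ket{\xi_p}$, so indices $p$ with $\gamma_p = 0$ are silent throughout the Krylov sequence. Restricting to the active set $P = \{p : \gamma_p \neq 0\}$ of size $d - n_1$ gives the upper bound $m \le d - n_1$. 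For the matching lower bound I assemble the first $d - n_1$ iterates as rows of the $(d-n_1)\times(d-n_1)$ matrix with entries $\gamma_p\, \varepsilon_p^k$, which factors as a diagonal of nonzero $\gamma_p$ times a Vandermonde matrix on the $\varepsilon_p$ with $p \in P$; since the retained eigenvalues are still pairwise distinct, this matrix is nonsingular, the iterates are linearly independent, and $m = d - n_1$.

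The main obstacle I anticipate is precisely the lower-bound Vandermonde step: the upper bound is essentially a book-keeping exercise in the spectral basis, but the lower bound requires carefully arguing that, after pruning to $P$, the surviving eigenvalues remain pairwise distinct (they do, as a subset of the original distinct list) and that the Vandermonde on these values is nondegenerate. A minor point of care is that the stated prefactor $1/\sqrt{|J_p|}$ yields a unit vector $\ket{\xi_p}$ only in non-generic cases; interpreting this as an implicit renormalization is harmless, since all span and dimension statements above are invariant under rescaling each $\ket{\xi_p}$ by a nonzero scalar.
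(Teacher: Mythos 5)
Your proposal is correct, and the first half (the expansion in the eigenbasis, the grouping over the degeneracy partition, and the containment in $E_d$) coincides with the paper's argument. For the Krylov grade $m = d - n_1$, however, you take a genuinely different and more direct route. The paper never writes the Krylov iterates explicitly; instead it proves two separate minimality lemmas --- that $E_{d-n_1}$ is spanned by $d-n_1$ time-evolved states (via the invertibility of a matrix of exponentials $e^{-i\varepsilon_p t_j}$, asserted as a ``generalized Vandermonde'' matrix) and that $\mathrm{K}_m$ cannot be compressed (via the invertibility of a matrix of Taylor coefficients $t_i^j/j!$) --- and then concludes $m = d - n_1$ by equating the dimensions of two minimal spaces containing the same trajectory. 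You instead observe that each $\ket{\xi_p}$ is an eigenvector of $H$ with eigenvalue $\varepsilon_p$, so that $H^k\ket{\Psi_0} = \sum_p \gamma_p\,\varepsilon_p^k\,\ket{\xi_p}$, and both bounds on $m$ drop out of a single classical Vandermonde determinant on the surviving, still pairwise distinct, eigenvalues. Your version is shorter and rests only on the textbook Vandermonde fact, whereas the paper's invertibility claims for the exponential and Taylor-coefficient matrices are stated with less justification (invertibility of $(e^{-i\varepsilon_p t_j})_{p,j}$ for arbitrary distinct times is not literally a Vandermonde statement and deserves more care than the paper gives it); the paper's route, on the other hand, is the one that generalizes directly to the operator-space analogue in its Theorem 2, which is presumably why the authors structured it that way. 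Your remark about the $1/\sqrt{|J_p|}$ prefactor not actually normalizing $\ket{\xi_p}$ is a fair catch of a cosmetic inconsistency in the statement itself, and your handling of it (dimension statements are scale-invariant) is the right one.
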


\begin{proof}
See \cref{app2:proof_Km_dim}.
\end{proof}
\medskip
The following theorem shows a similar identity for operator spaces $\mathcal{L}_M$. 
\begin{theorem}
\label{theorem:L_M_grade}
Let \( H \in \mathbb{C}^{N \times N} \) be a Hamiltonian with eigenbasis \( \{\ket{\phi_j}\} \) and corresponding eigenvalues \( \varepsilon_j \), and let \( O \) be an operator on the same Hilbert space. Define the Liouvillian Krylov space
\[
\mathcal{L}_M := \mathrm{Span}\{\mathcal{L}^0(O), \mathcal{L}^1(O), \ldots, \mathcal{L}^{M-1}(O)\},
\]
where \( \mathcal{L}(O) = [H, O] \) is the Liouvillian superoperator.

Define the transition frequencies \( \omega_{mn} := \varepsilon_m - \varepsilon_n \), and let \( \{\omega_P\}_{P=0}^{N_{\omega}-1} \) be the set of all pairwise distinct values taken by \( \omega_{mn} \). For each \( \omega_P \), define the index set
\[
J_P := \left\{ (m,n) \,\middle|\, \omega_{mn} = \omega_P \right\},
\]
and the corresponding matrix
\[
\sigma_P := \sum_{(m,n) \in J_P} \bra{\phi_m} O \ket{\phi_n} \ket{\phi_m} \bra{\phi_n}.
\]
Let \( N_1 \) is the number of vanishing contributions \( \sigma_P = 0 \), then the time-evolved operator is given by
\[
O(t) = \sum_{P \in S} e^{i\omega_P t} \sigma_P,
\]
with \( S = \{P \mid \sigma_P \neq 0\} = \{s_0, s_1, \ldots, s_{N_{\omega} - N_1 - 1}\} \). The operator lies in the span
\[
O(t) \in \mathrm{Span}\{\sigma_{s_0}, \sigma_{s_1}, \ldots, \sigma_{s_{N_{\omega} - N_1 - 1}}\}=\mathcal{P}_{N_\omega - N_1}.
\]
Further, the grade \( M \) of the Krylov space \( \mathcal{L}_M \) is given by
\[
M = N_{\omega} - N_1.
\]

\end{theorem}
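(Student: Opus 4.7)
The plan is to diagonalize the Liouvillian on the operator space generated by $O$, identify the $\sigma_P$'s as eigenvectors of $\mathcal{L}$, and then use a Vandermonde argument to equate the span of Liouvillian powers with the span of nonzero $\sigma_P$'s.

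First I would expand $O$ in the energy eigenbasis, $O = \sum_{m,n} \langle \phi_m | O | \phi_n \rangle \, |\phi_m\rangle\langle\phi_n|$, and observe that the outer products $|\phi_m\rangle\langle\phi_n|$ are eigenvectors of $\mathcal{L}$ with eigenvalue $\omega_{mn} = \varepsilon_m - \varepsilon_n$. Grouping the $(m,n)$ pairs sharing the same transition frequency $\omega_P$ gives exactly the operators $\sigma_P$ defined in the statement, and $\mathcal{L}(\sigma_P) = \omega_P \sigma_P$. Iterating, $\mathcal{L}^k(O) = \sum_{P} \omega_P^k \sigma_P = \sum_{P \in S} \omega_P^k \sigma_P$, since the vanishing $\sigma_P$'s drop out. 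This immediately yields the time-evolved form $O(t) = \sum_{P \in S} e^{i\omega_P t}\sigma_P$ by matching the Taylor series of $e^{i \mathcal{L} t}$, and shows the inclusion $\mathcal{L}_M \subseteq \mathcal{P}_{N_\omega - N_1}$.

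Next I would argue that the nonzero $\sigma_P$ are linearly independent. This follows from the fact that the index sets $J_P$ are pairwise disjoint (each pair $(m,n)$ contributes to exactly one $\omega_P$), so the $\sigma_P$'s have mutually disjoint supports in the $|\phi_m\rangle\langle\phi_n|$ basis and therefore cannot combine to zero unless each is zero. Hence $\dim \mathcal{P}_{N_\omega - N_1} = N_\omega - N_1$.

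To upgrade the inclusion to equality, I would invoke the Vandermonde structure of the coefficients. The matrix $V_{k,P} = \omega_P^k$ for $k = 0,\ldots,N_\omega-N_1-1$ and $P \in S$ is a square Vandermonde matrix in the pairwise distinct $\{\omega_P\}_{P \in S}$, hence invertible. Consequently each $\sigma_P$ with $P \in S$ can be written as a linear combination of $\mathcal{L}^0(O), \mathcal{L}^1(O), \ldots, \mathcal{L}^{N_\omega - N_1 - 1}(O)$, giving the reverse inclusion $\mathcal{P}_{N_\omega-N_1} \subseteq \mathcal{L}_{N_\omega - N_1}$. Combined with the upper bound, this forces $\mathcal{L}_M = \mathcal{P}_{N_\omega - N_1}$ and $M = N_\omega - N_1$, which is the claim. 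The main subtlety, and the step I expect requires the most care, is justifying that counting \emph{nonzero} $\sigma_P$'s (and not merely distinct frequencies) gives the correct grade: this is exactly what the disjoint-support linear independence argument together with the Vandermonde invertibility on the index set $S$ provides, mirroring how $n_1$ enters the bound in Theorem~\ref{theorem:E_d}.
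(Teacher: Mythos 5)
Your proof is correct, but it takes a genuinely different and arguably cleaner route than the paper's. The paper never diagonalizes the Liouvillian: it works with the Heisenberg-evolved operator $O(t)=e^{iHt}Oe^{-iHt}$, shows $O(t)\in\mathcal{P}_{N_\omega-N_1}$ by grouping frequencies, and then proves \emph{separately} that $\mathcal{P}_{N_\omega-N_1}$ and $\mathcal{L}_M$ are each minimal spaces containing all $O(t)$ (Lemmas~\ref{lemma:Pm_minimal} and~\ref{lemma:L=min}, using generalized Vandermonde matrices in $e^{i\omega_P t_j}$ and in the sampled times $t_a^k/k!$ respectively, the latter via a truncated Taylor series with an $\varepsilon\to 0$ limiting argument); equality of dimensions then follows from uniqueness of the minimal space. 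You instead observe that the matrix units $\ket{\phi_m}\bra{\phi_n}$ are eigenvectors of $\mathcal{L}$ with eigenvalue $\omega_{mn}$, so that $\mathcal{L}(\sigma_P)=\omega_P\sigma_P$ and $\mathcal{L}^k(O)=\sum_{P\in S}\omega_P^k\sigma_P$, which gives the inclusion $\mathcal{L}_M\subseteq\mathcal{P}_{N_\omega-N_1}$ immediately and the reverse inclusion from invertibility of the ordinary Vandermonde matrix $(\omega_P^k)$ in the pairwise distinct $\omega_P$, $P\in S$. This replaces both of the paper's minimality lemmas with a single algebraic identity, avoids the time-discretization and truncation subtleties in Lemma~\ref{lemma:L=min}, and makes transparent why only the \emph{nonzero} $\sigma_P$ count (they are the surviving eigencomponents of $O$ under $\mathcal{L}$, and are linearly independent by disjointness of the $J_P$). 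What the paper's longer route buys is a statement that is agnostic to how the invariant space is generated — it shows any $N_\omega-N_1$ generic time samples of $O(t)$ already span the space, which is the fact actually used to justify the measurable construction of $\mathcal{F}_M$ from time-evolved observables; your argument establishes the grade $M$ but not directly that time-sampled operators form a basis.
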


\begin{proof}
See \cref{app3:proof_L_M_grade}.
\end{proof}

%%%%%%%%%%%%%%%%%%%%%%%%%%%%%%%% HEREEEE %%%%%%%%%%%%%%%%%%%%%%%%%%%%%%%%%%%%%
\section{Results}\label{sec:Results}

\begin{figure}[t]
\centering
    \hspace*{-0.5 cm}
    \centering
    \includegraphics[scale=1]{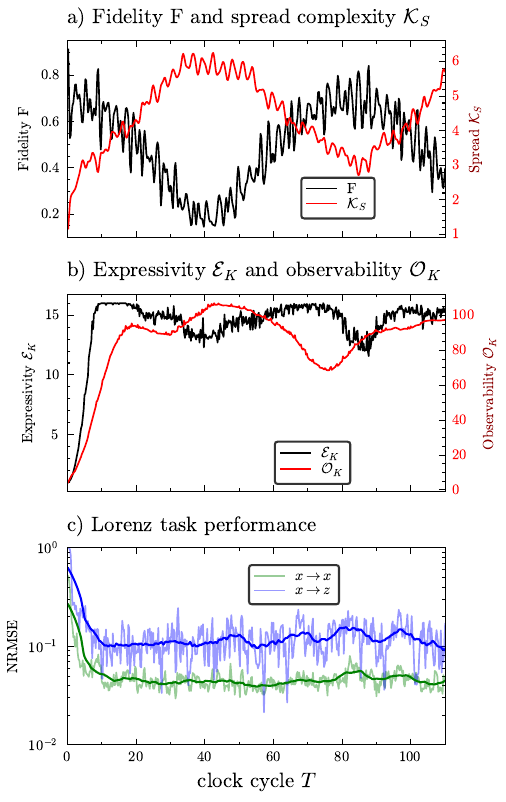}
    \caption[fidelity, spread complexity and Lorenz task performance]{Fidelity $\mathrm{F}$ (black) and spread complexity $\mathcal{K}_S$ (red) in (a), Krylov observability $\mathcal{O}_K$ (red) and Krylov expressivity $\mathcal{E}_K$ in (b) and the Lorenz task performance for the five-step ahead prediction ($\Delta t=0.1$) of the $x$-variable (green) and the cross prediction of the $z$- variable (blue) in (c) in dependence of the clock cycle $T$, when measuring one site of the $H_{I4}$ Hamiltonian. The bold line in (c) shows the first-order Savitzky--Golay filtered $\mathrm{NRMSE}$, used to better illustrate the saturation in task performance (computed using \texttt{scipy.signal.savgol\_filter}). }
    \label{fig:1_fid}
\end{figure}
The aim of this work is to utilize and compare expressivity measures to deepen the understanding of quantum reservoir computing. To begin, we simulate spread complexity and fidelity to investigate whether task performance can be explained by these measures. \cref{fig:1_fid}.a illustrates fidelity and spread complexity averaged over twenty random starting states. We see that spread complexity $\mathcal{K}_S$ increases, while fidelity $F$ decreases with larger clock cycles. Additionally, we observe an anti-proportional relationship between the two measures. When fidelity reaches a minimum at $T \approx 36$, spread complexity exhibits a maximum, and when fidelity increases, spread complexity decreases, and vice versa. 

The oscillations of $\mathrm{F}$ and $\mathcal{K}_S$ arise due to the periodic nature of the quantum system, where the state evolves as
\begin{align}
    \ket{\Psi(t)} := \sum_{n=0}^{N-1} e^{-i\varepsilon_n t} \ket{\phi_n} \bra{\phi_n}\ket{ \psi_0},
\end{align}
where $H\ket{\phi_n} = \varepsilon_n \ket{\phi_n}$ is the eigenvalue equation and $N = \dim(\mathcal{H})$ is the dimension of the Hilbert space. This is a superposition of $N$ periodic functions and is therefore periodic (or quasi-periodic) in nature.

The fidelity is given by $\mathrm{F} = \abs{\bra{\Psi(t)}\ket{\Psi_0}}$. Due to the quasi-periodic structure, there exist times $\tau$ for which $\ket{\Psi(\tau)} \approx \ket{\Psi_0}$. At such times, the fidelity $\mathrm{F} = \abs{\bra{\Psi(\tau)}\ket{\Psi_0}}$ will be close to one, which is observed at $\tau \approx 80$.

On the other hand, Krylov spread complexity $\mathcal{K}_S$ measures the spread over the Krylov basis. The first basis state represents the initial state, $\ket{k_0} = \ket{\Psi_0}$. In the construction of Krylov spread complexity (\cref{eq:spread_complexity}), the amplitude with respect to $\ket{k_0}$ contributes with the weakest weight of $1$ and is given by $\abs{\alpha_0(t)} = \abs{\bra{\Psi(t)}\ket{\Psi_0}} = \mathrm{F}$. Since the normalization condition $\sum_n \abs{\alpha_n(t)}^2 = 1$ holds, a large value of $\abs{\alpha_0(t)}$ implies that the contributions from the other $\abs{\alpha_n(t)}$ must be smaller. This, in turn, implies that the amplitudes of the higher-weight contributions are reduced, thereby lowering $\mathcal{K}_S$. This explains the dip in Krylov spread complexity, which varies inversely with the fidelity.

In \cref{fig:1_fid}.c, the $\mathrm{NRMSE}$ values of the Lorenz tasks for the five-step prediction of the $x$ variable (green) and the cross-prediction of the $z$ variable (blue) are presented. We observe a reduction in error, followed by a saturation in task performance. The initial decrease in $\mathrm{NRMSE}$ can be explained using fidelity, which captures how far the state evolves from the initial state. This however only tells us that the system has to evolve, to perform any operations as at $T=0$ the reservoir evolution is simply the identity. Meanwhile, spread complexity quantifies how the time-evolved state spreads over time in the Krylov basis. Higher spread complexity therefore implies a broader spread within the Krylov basis at the corresponding clock cycle. However, neither fidelity nor spread complexity alone can explain the saturation of $\mathrm{NRMSE}$. The oscillatory behavior observed in fidelity and spread complexity (\cref{fig:1_fid}.a is not reflected in task performance (\cref{fig:1_fid}.c.
\cref{fig:1_fid}.b shows Krylov expressivity $\mathcal{E}_K$ and Krylov observability $\mathcal{O}_K$. Both measures show an increase followed by saturation, around which they oscillate, similar to the task performance in \cref{fig:1_fid}.c. This shows that Krylov expressivity and Krylov observability capture the saturation in task performance, while only limited insights can be gained from fidelity and spread complexity.
\begin{figure*}[t]
\hspace*{-1cm}
    %\hspace*{-0.5 cm}
    %\centering
    \includegraphics[scale=1]{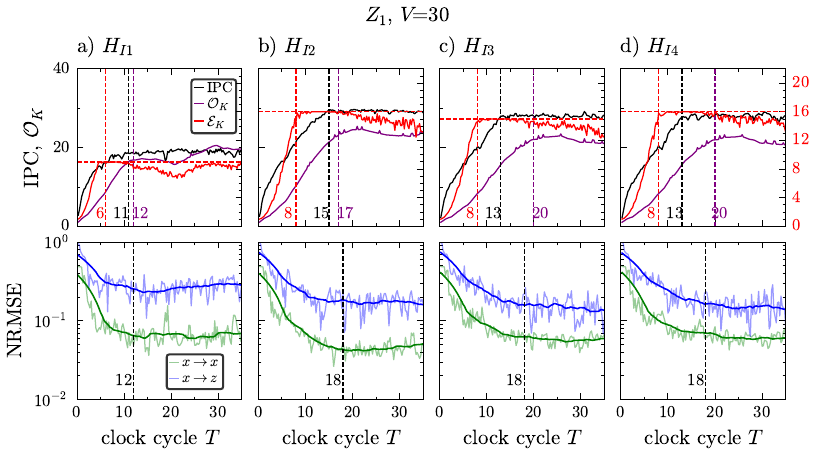}
    \caption[Expressivity, Observability and Lorenz task performance]{The first row shows the information processing capacity $\mathrm{IPC}$ (black), the Krylov expressivity $\mathcal{E}_K$ (red) and the Krylov observability $\mathcal{O}_K$ (purple) and the second row shows the Lorenz task performance for the five-step ahead prediction ($\Delta t=0.1$) of the $x$-variable (green) and the cross prediction of the $z$- variable (blue) in dependence of the clock cycle $T$ for the different quantum reservoirs \textbf{a)} $H_{I1}$, \textbf{b)} $H_{I2}$,\textbf{c)} $H_{I3}$ and \textbf{d)} $H_{I4}$. The state matrix is constructed by measuring the $Z_1$ observable $V=30$ times. The vertical dotted lines in the first row represent when IPC (black), $\mathcal{O}_K$ (purple) and $\mathcal{E}_K$ (red) saturate and the time is given next to the line. The vertical black line in the second row, indicates when the Lorenz task performance saturates. The bold line in (c) shows the first-order Savitzky--Golay filtered $\mathrm{NRMSE}$, used to better illustrate the saturation in task performance (computed using \texttt{scipy.signal.savgol\_filter}).}
    \label{fig:2_Z1}
\end{figure*}

\subsection{Specific Hamiltonians}
\cref{fig:2_Z1} shows the results when only the first observable $Z_1$ is measured, while in \cref{fig:2_Z1234}, measurements of the observables $Z_1$, $Z_2$, $Z_3$, and $Z_4$ are considered. In all cases, the observables are multiplexed $V=30$ times. A summary of the results is provided in \cref{appendix:saturation}. The Krylov expressivity $\mathcal{E}_K$ of each Hamiltonian $H_{I1}$, $H_{I2}$, $H_{I3}$, and $H_{I4}$ is upper-bounded by $9$, $16$, $15$, and $16$, respectively. This upper bound corresponds to the grade of the Krylov space or the number of pairwise distinct eigenvalues. \cref{fig:2_Z1}.a shows the result for $H_{I1}$. We observe that $\mathrm{NRMSE}$ of the Lorenz task, $\mathrm{IPC}$, and $\mathcal{O}_K$ all saturate around $T=12$, while $\mathcal{E}_K$ reaches saturation at $T=6$.
Increasing the complexity of the Hamiltonian slightly by modifying one of the inter-spin couplings $J_{i.j}$ results in $H_{I2}$. This Hamiltonian reaches a higher $\mathrm{IPC}$, Krylov expressivity $\mathcal{E}_K$ and Krylov observability $\mathcal{O}_K$. Furthermore the Lorenz task errors are slightly decreased. 
When using pairwise distinct inter-spin couplings $J_{i,j}$ with equal spacing, the resulting Hamiltonian is $H_{I3}$ (\cref{fig:2_Z1}.c. Here, $\mathcal{E}_K$, $\mathrm{IPC}$, and $\mathcal{O}_K$ each saturate at $T_{\mathrm{sat}} \approx 8$, $13$, and $20$, respectively. The Lorenz task performance, however, reaches saturation at $T_{\mathrm{sat}} \approx 18$, which best aligns with the behavior of  Krylov observability $\mathcal{O}_K$.\\
For $H_{I1}$ and $H_{I2}$, $\mathrm{IPC}$ and $\mathcal{O}_K$ show similar behavior, but a discrepancy appears in $H_{I3}$. This occurs because $\mathrm{IPC}$ is upper-bounded by the readout dimension of $N_R=30$, to which the $\mathrm{IPC}$ then saturates $\mathrm{IPC}\leq N_R$. In this case,  Krylov observability $\mathcal{O}_K$ best explains the behavior of task performance.
The worse task performance of $H_{I1}$ compared to the other Hamiltonians is explained by the fact that all metrics $\mathrm{IPC}$, $\mathcal{E}_K$, and $\mathcal{O}_K$ are all smaller. \\
When comparing $H_{I3}$ and $H_{I2}$ we note that $\mathcal{E}_K(H_{I3})<\mathcal{E}_K(H_{I2})$. At the same time, we observe that the task performance for the cross-prediction task is better for $H_{I3}$, while it is worse for the prediction of the $x$ variable. To further investigate this discrepancy, we introduce $H_{I4}$. This Hamiltonian is nearly identical to $H_{I3}$, differing only in one inter-spin coupling, which results in a Krylov expressivity equal to that of $H_{I2}$. Despite this adjustment, the $\mathrm{IPC}$, $\mathcal{O}_K$, $\mathcal{E}_K$, and Lorenz task errors for $H_{I4}$ remain nearly identical to those of $H_{I3}$. 
\begin{figure*}[t]
\hspace*{-1cm}
    %\hspace*{-0.5 cm}
    %\centering
    \includegraphics[scale=1]{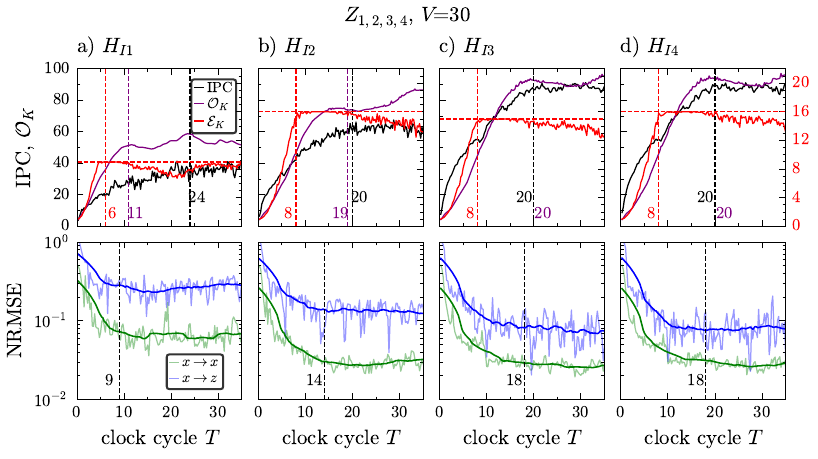}
    \caption[Expressivity, Observability and Lorenz task performance]{Same as in \cref{fig:2_Z1}, except that the state matrix is constructed by measuring the $Z_1$, $Z_2$, $Z_3$, and $Z_4$ observables $V = 30$ times.}
    \label{fig:2_Z1234}
\end{figure*}

To discuss this further, we consider a typical quantum reservoir where each observable is measured, as shown in \cref{fig:2_Z1234}. This results in a readout dimension of $N_R=120$. In this setup, $H_{I1}$ again exhibits largest errors. While the Lorenz task performance remains unchanged, the information processing capacity increases from $\mathrm{IPC}=20$ (\cref{fig:2_Z1}) to $\mathrm{IPC}=40$ in \cref{fig:2_Z1234}.a.
For $H_{I2}$, $H_{I3}$, and $H_{I4}$ (\cref{fig:2_Z1234}.b-d, we observe that $\mathrm{IPC}$ and $\mathcal{O}_K$ both perform similarly well in explaining the saturation of the $\mathrm{NRMSE}$ of the Lorenz task. \\
The improved performance of $\mathrm{IPC}$ in \cref{fig:2_Z1234} compared to \cref{fig:2_Z1} is due to the system being sampled with $N_R=120$ readout nodes. In this case the $\mathrm{IPC}$ is upper-bounded by $\mathrm{IPC}\leq N_R=120$, which is sufficiently large to not observe the upper bound by the readout dimension.
\begin{table}[H]
\centering
\begin{tabular}{lcccc}
\toprule
 & 
 \begin{tabular}[c]{@{}c@{}}$Z_1$\\ $x \rightarrow x$\\ $\mathrm{NRMSE}_{\mathrm{sat}}$\end{tabular} &
 \begin{tabular}[c]{@{}c@{}}All sites\\ $x \rightarrow x$\\ $\mathrm{NRMSE}_{\mathrm{sat}}$\end{tabular} &
 \begin{tabular}[c]{@{}c@{}}$Z_1$\\ $x \rightarrow z$\\ $\mathrm{NRMSE}_{\mathrm{sat}}$\end{tabular} &
 \begin{tabular}[c]{@{}c@{}}All sites\\ $x \rightarrow z$\\ $\mathrm{NRMSE}_{\mathrm{sat}}$\end{tabular} \\
\midrule
$H_1$ & 0.08 & 0.08 & 0.30 & 0.30 \\
$H_2$ & 0.04 & 0.03 & 0.20 & 0.15 \\
$H_3$ & 0.06 & 0.03 & 0.16 & 0.08 \\
$H_4$ & 0.06 & 0.03 & 0.16 & 0.08 \\
\bottomrule
\end{tabular}
\caption{$\mathrm{NRMSE}$ for the Lorenz task of the four Hamiltonians $H_{I1},~H_{I2},~H_{I3}$ and $H_{I4}$, when measuring one site $Z_1$ (row one and three) and when measuring all sites (row two and four).}
\label{tab:error123}
\end{table}

\cref{tab:error123} shows the $\mathrm{NRMSE}$ of the Lorenz task at saturation, when only $Z_1$ is measured and when all four sites are measured. For $H_{I1}$, we observe that the number of sites measured does not change task behavior. For $H_{I2}$, we observe a slightly smaller $\mathrm{NRMSE}$ when all sites are measured. $H_{I3}$ and $H_{I4}$ exhibit identical task performance, with the error of both tasks halved when more observables are measured.
To discuss the differing behavior of expressivity and observability between $H_{I2}$ and $H_{I3}$, we take a closer look at task performance when all sites are measured (\cref{fig:2_Z1234}). We observe that the cross-prediction task for $H_{I3}$ yields an error of $\mathrm{NRMSE} = 0.08$, while for $H_{I2}$ the error is higher, at $\mathrm{NRMSE} = 0.15$. The Krylov expressivity for $H_{I2}$, with $\mathcal{E}_K(H_{I2}) = 16$, is slightly higher than that of $H_{I3}$, which has $\mathcal{E}_K(H_{I3}) = 15$. 

At first glance, this may seem counterintuitive: $H_{I2}$ maps input data into a larger Krylov space, suggesting that more information should be accessible. However, although the input is projected into a higher-dimensional space, information can only be extracted through the set of measured observables. In this case, the number of independent measurements for $H_{I2}$ is actually smaller than for $H_{I3}$, which is mirrored by the smaller Krylov observability $\mathcal{O}_K(H_{I2}) = 60$ versus $\mathcal{O}_K(H_{I3}) = 85$.

To further probe this behavior, the $\mathrm{IPC}$ and $\mathcal{O}_K$ are computed for various clock cycles, virtual nodes, and numbers of readout observables for $H_{I2}$ (\cref{fig3:obsH2}) and $H_{I3}$ (\cref{fig3:obsH3}). In the first and second rows, only the first site and the first two sites are measured for the construction of the state matrix, respectively, while the third row shows the results when all sites are measured.
For $H_{I2}$ (\cref{fig3:obsH2}), we observe $\mathrm{IPC}<85$ and Krylov observability $\mathcal{O}_K<85$. In contrast, $H_{I3}$ shows a maximum of $105$ (\cref{fig3:obsH3}), indicating that the larger $\mathrm{IPC}$ is due to the increased Krylov observability $\mathcal{O}_K$.
Furthermore, an increase in the number of observables measured shifts the best performance from larger clock cycles and a larger number of measurements to smaller clock cycles and fewer measurements. Each plot includes the Pearson correlation coefficient $P_C$ (\cref{eq:Pearson}) between $\mathrm{IPC}$ and $\mathcal{O}_K$. For $H_{I2}$, the correlation coefficients are $P_C = 0.94$ when one or two observables are measured, and $P_C = 0.96$ when all observables are considered. This indicates a strong and consistent correlation across all configurations. The corresponding correlation coefficients for $H_{I3}$ are $P_C = 0.91$ for one observable and $P_C = 0.95$ for both two and four observables.
\begin{figure}[t]
\centering
    \hspace*{-0.5 cm}
    \centering
    \includegraphics[scale=01]{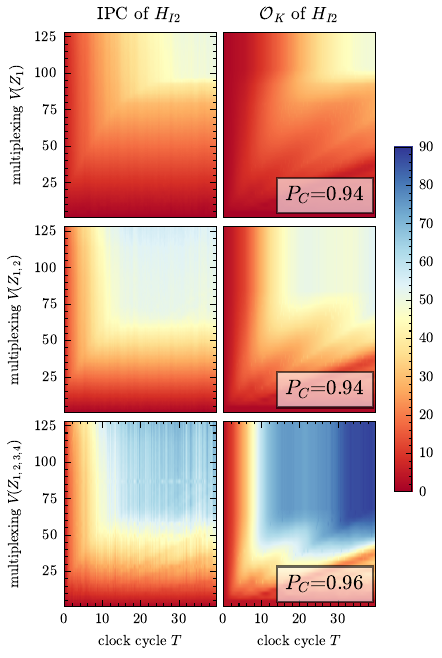}
    \caption[Observability]{Observability $\mathcal{O}_K$ (right column) and $\mathrm{IPC}$ (left column) are color-coded according the the color bar, depending on multiplexing $V$ and clock cycle $T$ for $H_{I2}$. The Pearson correlation factor $P_C$ between the two images in each row is calculated, indicating an almost identical behavior between $\mathcal{O}_K$ and $\mathrm{IPC}$.}
    \label{fig3:obsH2}
\end{figure}

\begin{figure}[t]
\centering
    \hspace*{-0.5 cm}
    \centering
    \includegraphics[scale=01]{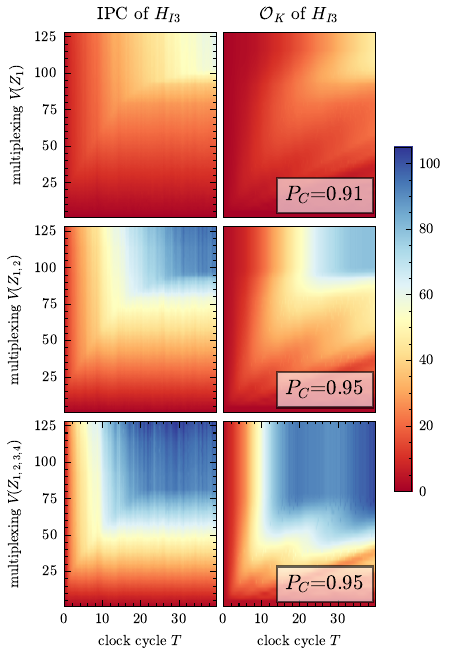}
    \caption[Observability]{Same as in \cref{fig3:obsH3} for $H_{I3}$.}
    \label{fig3:obsH3}
\end{figure}

%%%%%%%%%%%%%%%%%%%%%%%%%%%%%%% CHANGE THIS %%%%%%%%%%%%%%%%%%%%%%%%%%%%%%%%%%%%%%%%
\begin{table}[H]
\centering
\begin{tabular}{llccccc}
\toprule
\textbf{Hamiltonian} & \textbf{Observable} & $~d~$ & $~d^2~$ & $~N_{\omega}~$ & $~N_1~$ & $~M~$ \\
\midrule
$H_{I2}$ & $Z_1$ & 16 & 256 & 237 & 176 & 61 \\
$H_{I2}$ & $Z_2$ & 16 & 256 & 237 & 176 & 61 \\
$H_{I2}$ & $Z_3$ & 16 & 256 & 237 & 158 & 79 \\
$H_{I2}$ & $Z_4$ & 16 & 256 & 237 & 158 & 79 \\
\midrule
$H_{I3}$ & $Z_1$ & 15 & 225 & 211 & 112 & 99 \\
$H_{I3}$ & $Z_2$ & 15 & 225 & 211 & 112 & 99 \\
$H_{I3}$ & $Z_3$ & 15 & 225 & 211 & 112 & 99 \\
$H_{I3}$ & $Z_4$ & 15 & 225 & 211 & 112 & 99 \\
\bottomrule
\end{tabular}
\caption{Spectral and dynamical statistics for each (Hamiltonian, Observable) pair. $d$ is the number of pairwise distinct eigenvalues, $d^2$ is the number of eigenvalue pairs, $N_{\omega}$ the number of distinct transition frequencies, $N_1$ the number of zero $\sigma_P$ contributions, and $M$ the resulting Krylov dimension.}
\label{tab:spectral_summary}
\end{table}
While Krylov observability captures the discrepancy between \( H_{I2} \) and \( H_{I3} \), a deeper explanation is provided by the analysis in \cref{subsec:dimension_KrylovSpace}. There, the Krylov grades of \( \mathcal{L}_M \) and \( \mathrm{K}_m \) are defined in terms of the number of distinct eigenvalues \( d \), transition frequencies \( N_{\omega} \), and vanishing contributions \( N_1 \), where \( \sigma_P = 0 \). The rank of the Liouvillian Krylov space is given by \( M = N_{\omega} - N_1 \).
We compute these quantities for all observables \( Z_1 \) to \( Z_4 \) and summarize the results in \cref{tab:spectral_summary}. Although \( H_{I2} \) has more distinct eigenvalues (\( d = 16 \)) and frequencies (\( N_{\omega} = 237 \)), it also exhibits significantly more vanishing contributions: \( N_1 = 176 \) for \( Z_1 \), \( Z_2 \) and \( N_1 = 158 \) for \( Z_3 \), \( Z_4 \). In contrast, \( H_{I3} \) shows fewer zero terms with \( N_1 = 112 \) and \( N_{\omega} = 211 \) across all observables.

The resulting Krylov grades are \( M(H_{I2}) = 61 \) for \( Z_1 \) and \( Z_2 \), and \( M = 79 \) for \( Z_3 \) and \( Z_4 \), while \( H_{I3} \) consistently yields \( M = 99 \) across all observables. This supports the higher expressivity of \( H_{I3} \), as reflected by its larger \( \mathrm{IPC} \) values, and highlights how differences in Krylov space dimensions influence performance. Results for the remaining Hamiltonians are provided in \cref{app4:KrylovSpaceDim}.

%%%%%%%%%%%%%%%%%%%%%%%%%%%%%%% CHANGE THIS %%%%%%%%%%%%%%%%%%%%%%%%%%%%%%%%%%%%%%%%

In a final experiment, we compute the $\mathrm{IPC}$ and $\mathcal{O}_K$ for an Ising Hamiltonian with five sites and pairwise distinct couplings. The results are presented in \cref{fig4:Ns5}, where one site (first row), three sites (second row), and all five sites (third row) are measured in the $z$-direction. Simulations are performed up to $V = 220$ virtual nodes and clock cycles up to $T = 200$. Computing the $\mathrm{IPC}$ for each parameter configuration requires hundreds of hours for the five-site system and tens of hours for the four-site system, which limits our ability to simulate even larger systems. Once again, we observe almost identical behavior between $\mathrm{IPC}$ and $\mathcal{O}_K$, as indicated by a correlation factor of $P_C = 0.97$.

\begin{figure}[t]
\centering
    \hspace*{-0.5 cm}
    \centering
    \includegraphics[scale=01]{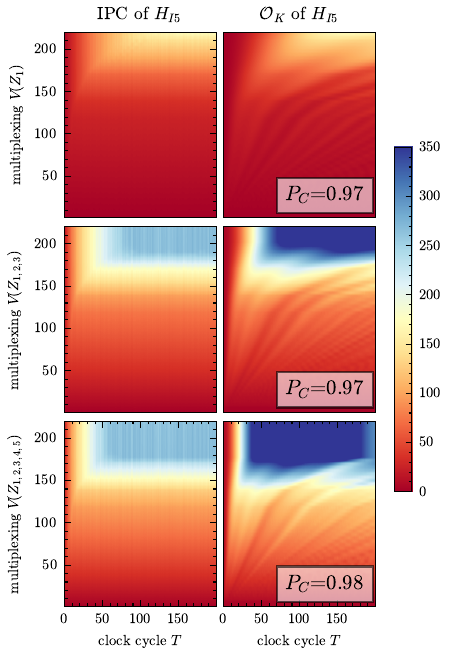}
    \caption[Observability]{Same as in \cref{fig3:obsH3} for for a reservoir with five sites $N_S=5$.}
    \label{fig4:Ns5}
\end{figure}

\subsection{Hamiltonian with Random Coupling}
Until now, specific Hamiltonians have been analyzed with respect to task performance, typically restricted to very small systems in order to describe behavior across the full parameter space. However, most research simulates larger spin systems with random inter-spin coupling. The inter-spin couplings of such quantum reservoirs can be sampled from distributions to compute statistics and reveal general trends in task performance. 

In this regard, we now analyze the behavior of the transverse field Ising model \cref{eq:Ising_Ham}, where the inter-spin couplings are sampled uniformly in the interval $[0.25, 0.75]$, i.e., $J_{ij} \in \mathcal{U}([0.25, 0.75])$, with $h=0.5$. These parameters are commonly used for QRC purposes \cite{FUJ17}. The state matrix is constructed by measuring all spin sites of a system with $N_S=6$ sites, repeated $V$ times.

\begin{figure*}[t]
\hspace*{-1cm}
    \includegraphics[scale=1]{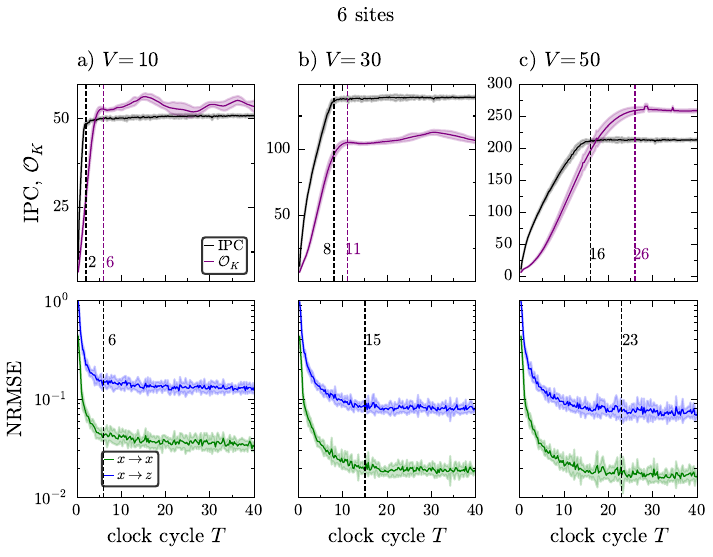}
    \caption[Expressivity, Observability and Lorenz Task Performance]{The first row shows the information processing capacity (IPC) in black and the Krylov observability $\mathcal{O}_K$ in purple. The second row shows the Lorenz task performance: the five-step ahead prediction ($\Delta t=0.1$) of the $x$-variable (green) and the cross prediction of the $z$-variable (blue), both plotted against the clock cycle $T$, for different numbers of measurements: \textbf{a)} $V=10$, \textbf{b)} $V=30$, and \textbf{c)} $V=50$. The state matrix is constructed by measuring all sites in the Pauli-$z$ direction. The inter-spin couplings are sampled uniformly with $J_{ij} \in \mathcal{U}([0.25, 0.75])$. The bold lines in this plot represent the average over 10 different Hamiltonians, and the shaded regions indicate the standard deviation of each curve.}
    \label{fig:review_1}
\end{figure*}

Figure~\ref{fig:review_1} presents the results of simulating the transverse field Ising model with random inter-spin couplings. Three configurations for the number of measurements are shown: \( V = 10 \) (a), \( V = 30 \) (b), and \( V = 50 \) (c). For each case, 10 Hamiltonians were randomly sampled, and the average (bold line) and standard deviation (shaded fill) were computed. The top row in each subfigure displays the information processing capacity (IPC) in black and the Krylov observability \( \mathcal{O}_K \) in purple.

The bottom row shows task performance for the Lorenz forecasting task. The green curve represents the five-step ahead prediction of the \( x \)-variable with a prediction step size of \( \Delta t = 0.1 \), while the blue curve shows the cross prediction of the \( z \)-variable from the measured \( x \)-dynamics.

With $V=10$ measurements (\cref{fig:review_1}.a), both Krylov observability $\mathcal{O}_K$ and the information processing capacity IPC exhibit a rapid increase and reach visible saturation at $T=6$ and $T=2$, respectively. Task performance also decreases rapidly and reaches a saturation region around $T=6$, similarly to Krylov observability.

Increasing the number of measurements to $V=30$ yields the results shown in the second row (\cref{fig:review_1}.b). Here, IPC reaches a maximum of approximately $170$ at $T=8$, likely due to the increased number of readout nodes. $\mathcal{O}_K$ saturates around $T=11$, which better corresponds to the saturation of Lorenz task performance observed around $T=15$.

Lastly, the system is measured $V=50$ times, with results shown in the third row (\cref{fig:review_1}.c). In this case, IPC and $\mathcal{O}_K$ reach saturation at $T=16$ and $T=26$, respectively, while visible saturation in Lorenz task performance occurs at $T=23$. Once again, Krylov observability better captures the trend in task performance compared to IPC, especially when the system is undersampled, thereby showcasing the generality of these results.

\subsection{Computational Cost}
We conclude this section with a brief discussion on the computational cost associated with constructing the state matrix $\mathbf{S}$ for the computation of IPC and Krylov observability, as discussed in \cref{sec:QRC} and \cref{sec:Krylov}.

In our case, with $N_{\mathrm{Tr}} = 25{,}000$, $N_{\mathrm{Te}} = 5{,}000$, $V = 30$, and four measured observables ($K = 4$), the total number of matrix multiplications required for constructing the state matrix is given by \cref{eq:N_state} to
\begin{align}
N_{\mathrm{state}} &= (7 + K) N_{\mathbf{u}} V \nonumber \\
&= (7 + 4) \cdot 30{,}000 \cdot 30 = 9.9 \cdot 10^{6}.
\end{align}
The number of matrix multiplications required for computing Krylov observability by \cref{eq:N_obs} is:
\begin{align}
N_{\mathrm{obs}} &= \frac{VK(VK + 3)}{2} \nonumber \\
&= \frac{30 \cdot 4 (30 \cdot 4 + 3)}{2} = 7380 \quad  
\end{align}
The computation of Krylov observability therefore requires only
\[
r = \frac{N_{\mathrm{obs}}}{N_{\mathrm{state}}} = 0.75\%
\]
of the matrix multiplications required for constructing the state matrix, demonstrating a three-orders-of-magnitude speed-up.  It is important to note that this estimate does not include the actual training across thousands of tasks, which involves inversion of the state matrix and optimization of readout weights. These steps are typically more computationally intensive than the construction of the state matrix itself, further amplifying the efficiency gains offered by the Krylov operator complexity compared to IPC.

%%%%%%%%%%%%%%%%%%%%%%%%%%%%%%

\section{Discussion and Conclusion}
This work investigates Krylov-based information measures to explain and predict task performance in quantum reservoir computing, using the Lorenz63 system for chaotic time-series prediction. Initially, fidelity and spread complexity are computed, but they fail to explain task performance saturation \cref{fig:1_fid}. \\
We then extended our research to Krylov expressivity $\mathcal{E}_K$ \cite{CIN24a} and Krylov observability $\mathcal{O}_K$ \cite{CIN25} and show that both measures show saturation for larger clock cycles. 
The measures are compared to the information processing capacity $\mathrm{IPC}$, which captures how well the system can retain and map data non-linearly. We test four quantum reservoirs in an undersampled regime, where Krylov observability $\mathcal{O}_K$ outperforms $\mathrm{IPC}$ in explaining the trend in task performance. In this case, $\mathrm{IPC}$ reaches a maximum, while the error for the Lorenz tasks continues to decrease with larger clock cycles $T$.\\
The quantum reservoirs are then simulated when all sites are measured, as shown in \cref{fig:2_Z1234}. The first observation from this experiment is that $\mathrm{IPC}$ and $\mathcal{O}_K$ exhibit almost identical behavior. Another significant result is the superior Lorenz task performance and larger $\mathrm{IPC}$ of $H_{I3}$ compared to $H_{I2}$, despite $H_{I3}$ having a smaller Krylov expressivity. While Krylov expressivity provides insights into the space onto which the data is mapped, the amount of information extracted from that space is smaller for $H_{I2}$, as evidenced by the smaller Krylov observability $\mathcal{O}_K$ of $H_{I2}$ compared to $H_{I3}$. To better understand this, we discuss the grade of Krylov state spaces in accordance with \cite{CIN24a} in \cref{theorem:E_d}, and introduce a similar identity for Krylov operator spaces in \cref{theorem:L_M_grade}. This is then used to show that the smaller Krylov operator space of $H_{I2}$ is due to an increased number of zero contributions compared to $H_{I3}$, which explains the overall larger Krylov observability for $H_{I3}$.

To confirm that our measures are not dependent on system size, we computed the information processing capacity and Krylov observability for a five-qubit system over various clock-cycles and number of measurements, as shown in \cref{fig4:Ns5}, achieving correlation factors between IPC and Krylov observability of $P_C \geq 0.97$. We then simulate a six-site Ising model with random inter-spin couplings, as is commonly done in quantum reservoir computing. This allows for statistical analysis of these measures, where we show that the results match the behavior observed in the four- and five-site systems, in which Krylov observability better explains the trend in task behavior when the system is undersampled (\cref{fig:review_1}).
Since Krylov expressivity considers how input states are mapped onto the Krylov space, information about the input encoding can be gained. In quantum reservoir computing, this might not be of much importance, because the state evolves over time. In quantum machine learning, however, where input encoding is one of the main parts to be optimized, Krylov expressivity can be used to effectively understand and compare various encoding strategies. One approach would be to encode the input data $x_i$ through a unitary circuit $U_E(x_i)$ into the quantum reservoir or quantum machine learning network. The initial states in the computation of Krylov expressivity can then be sampled from the set of encoded states, i.e., $\ket{x} \in \{ U_E(x_i)\ket{s} \mid x_i \in \mathcal{X} \}$, where $\mathcal{X}$ is the set of inputs.
The utility of Krylov observability might be used to further gain understanding in quantum machine learning, such as barren plateaus or as a quantum-mechanical information measure that can be utilized in the understanding of quantum dynamics, further advancing knowledge in chaos, decoherence, or thermalization.

%\clearpage
\appendix
\section{Algorithm for the construction of the spaces}\label{app1:obs}
\begin{algorithm}[H]]\label{alg:cap}
\caption{Construction of Observability Spaces}
\begin{algorithmic}[1]
\State \( \mathcal{I}_O = \{1, 2, \dots, K\} \)
\State \( \mathcal{T} = \{t_1, t_2, \dots, t_R\} \)
\State \( \mathcal{F}^{(B)} = \emptyset \)
\State \( \tilde{\mathcal{F}}_1^{(B)}, \tilde{\mathcal{F}}_2^{(B)}, \dots, \tilde{\mathcal{F}}_K^{(B)} = \emptyset, \emptyset, \dots, \emptyset \)
\While {$t_j \in \mathcal{T}$}
    \For{$k \in \mathcal{I}_O $}
        \If {$O_k(t_j) \notin \mathcal{F}$} 
        \State \( \mathcal{F}^{(B)} = \mathcal{F}^{(B)} \cup O_k(t_j) \)
        \State \( \tilde{\mathcal{F}}_k^{(B)} = \tilde{\mathcal{F}}_k^{(B)} \cup O_k(t_j) \)
        \EndIf
    \EndFor
\EndWhile
\State \( \mathcal{F}^{(B)} \leftarrow \mathrm{ONB}(\mathcal{F}^{(B)}) \)
\end{algorithmic}
\end{algorithm}
Consider the basis \( \mathcal{F}^B \) and \( \mathcal{F}^B_i \) of the spaces \( \mathcal{F} \) and \( \mathcal{F}_i \), i.e., \( \mathcal{F} = \mathrm{Span}(\mathcal{F}^B) \) and \( \mathcal{F}_i = \mathrm{Span}(\mathcal{F}^B_i) \). Then \cref{alg:cap} constructs spaces such that the properties given in \cref{eq:Fprop} are defined.

\section{Saturation}
\label{appendix:saturation}
\cref{table3:summary} gives information when each measure exhibits a saturation point.
% Make sure this is in your preamble:
\begin{table*}
\begin{tabular}{lcccccccc}
\hline
%\rowcolor[HTML]{DAE8FC} 
 &
  \begin{tabular}[c]{@{}c@{}}$\mathcal{E}_K$\\ $T_{\mathrm{sat}}$\end{tabular} &
  \begin{tabular}[c]{@{}c@{}}$\mathcal{O}_K$\\ $T_{\mathrm{sat}}$\end{tabular} &
  \begin{tabular}[c]{@{}c@{}}$\mathrm{IPC}$\\ $T_{\mathrm{sat}}$\end{tabular} &
  \begin{tabular}[c]{@{}c@{}}$\mathrm{IPC}$\\ saturation point\end{tabular} &
  \begin{tabular}[c]{@{}c@{}}$x\rightarrow x$\\ $T_{\mathrm{sat}}$\end{tabular} &
  \begin{tabular}[c]{@{}c@{}}$x\rightarrow z$ \\ $T_{\mathrm{sat}}$\end{tabular} &
  \begin{tabular}[c]{@{}c@{}}$x\rightarrow x$,\\ $\mathrm{NRMSE}_{\mathrm{sat}}$\end{tabular} &
  \begin{tabular}[c]{@{}c@{}}$x\rightarrow z$\\ $\mathrm{NRMSE}_{\mathrm{sat}}$\end{tabular} \\ \hline\hline
$H_1$, $Z_1$             & 6 & 12 & 12 & 20 & 12                         & 12                         & 0.08 & 0.3  \\ 
$H_1$, all $Z_i$ & 6 & 10 & 34 & 40 & 12                         & 12                         & 0.08 & 0.3  \\ \hline
%\rowcolor[HTML]{DAE8FC} 
$H_2$, $Z_1$             & 8 & 17 & 15 & 30 & 18                         & 18                         & 0.04 & 0.2  \\ 
%\rowcolor[HTML]{DAE8FC} 
$H_2$, all $Z_i$ & 8 & 20 & 22 & 60 &  18 &  18 & 0.03 & 0.15 \\ \hline
$H_3$, $Z_1$             & 8 & 20 & 13 & 30 & 18                         & 18                         & 0.06 & 0.16 \\ 
$H_3$, all $Z_i$ & 8 & 20 & 20 & 87 & 20                         & 20                         & 0.03 & 0.08 \\ \hline
%\rowcolor[HTML]{DAE8FC} 
$H_4$, $Z_1$             & 8 & 20 & 13 & 30 & 18                         & 18                         & 0.06 & 0.16 \\ 
%\rowcolor[HTML]{DAE8FC} 
$H_4$, all $Z_i$ & 8 & 20 & 20 & 87 & 20                         & 20                         & 0.03 & 0.08 \\ \hline
\end{tabular}
\caption{Information about the saturation behavior, saturation time and saturation points of the different Hamiltonians.}
\label{table3:summary}
\end{table*}

\section{Proofs for Krylov State Spaces and \cref{theorem:E_d}}\label{app2:proof_Km_dim}
\setcounter{theorem}{0} % if original theorem number is 3.1 (the '1' here is the last digit)

\begin{theorem}[Repeated]
%\label{theorem:E_d}
Let \( H \in \mathbb{C}^{N \times N} \) be a Hermitian Hamiltonian with \( d \) pairwise distinct eigenvalues \( \varepsilon_0, \varepsilon_1, \ldots, \varepsilon_{d-1} \), and let \( \{ \ket{\phi_j} \} \) be an orthonormal eigenbasis of \( H \), satisfying
\begin{align}
    H \ket{\phi_j} = \varepsilon_j \ket{\phi_j}.
\end{align}
Then the time-evolved state \( \ket{\Psi(t)} = e^{-iHt} \ket{\Psi_0} \) lies in a \( d \)-dimensional subspace \( \mathrm{E}_d \subseteq \mathbb{C}^N \), i.e.,
\begin{align}
    \ket{\Psi(t)} \in \mathrm{E}_d := \mathrm{Span} \left\{ \ket{\xi_0}, \ket{\xi_1}, \ldots, \ket{\xi_{d-1}} \right\},
\end{align}
where the vectors \( \ket{\xi_p} \) are defined by
\begin{align}
    \ket{\xi_p} := \frac{1}{\sqrt{|J_p|}} \sum_{j \in J_p} \alpha_j \ket{\phi_j}, ~~~ \mathrm{with~ } \alpha_j := \bra{\phi_j}\ket{\Psi_0},
\end{align}

and \( J_p := \{ j \mid \varepsilon_j = \varepsilon_p \} \) denotes the set of indices corresponding to the degenerate eigenspace of eigenvalue \( \varepsilon_p \). The normalization factor \( |J_p| \) is the cardinality of the set \( J_p \).
The tarting state \( \ket{\psi_0} \) can be represented in the basis \( \{\ket{\xi_p}\}_p \), with \( \gamma_p = \bra{\xi_p}\ket{\psi_0} \), as
\begin{align}
    \ket{\psi_0} = \sum_{p=0}^{d-1} \gamma_p \ket{\xi_p}.
\end{align}
Let \( n_1 \) denote the number of coefficients for which \( \gamma_p = 0 \). Then, the number of linearly independent vectors is reduced to \( d - n_1 \). It also holds that for the Krylov state space \( \mathrm{K}_m = \mathrm{Span}\{\ket{\psi_0}, H\ket{\psi_0}, \ldots, H^{m-1}\ket{\psi_0}\} \),
\begin{align}
    m = d - n_1
\end{align}
holds.

\end{theorem}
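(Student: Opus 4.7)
The plan is to decompose the initial state in the eigenbasis of $H$, regroup the sum by distinct eigenvalues, and then use a Vandermonde-type argument to pin down the Krylov grade.

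First I would write $\ket{\Psi_0} = \sum_j \alpha_j \ket{\phi_j}$ with $\alpha_j = \bra{\phi_j}\ket{\Psi_0}$, and propagate it in time to obtain
\begin{align}
    \ket{\Psi(t)} = e^{-iHt}\ket{\Psi_0} = \sum_{j} \alpha_j e^{-i\varepsilon_j t}\ket{\phi_j}.
\end{align}
I would then partition the index set by distinct eigenvalues using the sets $J_p = \{j \mid \varepsilon_j = \varepsilon_p\}$ (which are pairwise disjoint and cover the whole index range). Pulling the common factor $e^{-i\varepsilon_p t}$ out of each block yields
\begin{align}
    \ket{\Psi(t)} = \sum_{p=0}^{d-1} e^{-i\varepsilon_p t} \sqrt{|J_p|}\,\ket{\xi_p},
\end{align}
which immediately places $\ket{\Psi(t)}$ in the span of $\{\ket{\xi_0}, \dots, \ket{\xi_{d-1}}\} = \mathrm{E}_d$ and proves the first claim.

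Next I would observe the two structural facts needed for the second claim: (i) the $\ket{\xi_p}$ are mutually orthogonal because they are supported on disjoint eigenspaces of $H$, so they form a linearly independent set whenever they are nonzero; and (ii) each $\ket{\xi_p}$ is itself an eigenvector of $H$ with eigenvalue $\varepsilon_p$, since $H$ acts as $\varepsilon_p$ on every $\ket{\phi_j}$ with $j \in J_p$. Writing $\ket{\Psi_0} = \sum_p \gamma_p \ket{\xi_p}$ with $\gamma_p$ proportional to $\sum_{j \in J_p} |\alpha_j|^2$ (so that $\gamma_p = 0$ exactly when all $\alpha_j$ vanish on $J_p$), I would then get the clean formula
\begin{align}
    H^k\ket{\Psi_0} = \sum_{p=0}^{d-1} \gamma_p\, \varepsilon_p^{\,k}\, \ket{\xi_p}, \quad k = 0, 1, 2, \ldots .
\end{align}

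For the Krylov grade, I would restrict attention to the set $S = \{p \mid \gamma_p \neq 0\}$ of size $d - n_1$, since the vectors $\ket{\xi_p}$ with $p \notin S$ never contribute. Writing the coordinates of $\ket{\Psi_0}, H\ket{\Psi_0}, \ldots, H^{k-1}\ket{\Psi_0}$ in the orthogonal set $\{\ket{\xi_p}\}_{p\in S}$, I would obtain a $(d-n_1)\times k$ matrix whose entries are $\gamma_p \varepsilon_p^{\,\ell}$. Factoring out the nonzero $\gamma_p$ from each row leaves a Vandermonde matrix in the pairwise distinct values $\{\varepsilon_p\}_{p\in S}$. Standard nonvanishing of the Vandermonde determinant then shows that $\ket{\Psi_0}, H\ket{\Psi_0}, \dots, H^{d-n_1-1}\ket{\Psi_0}$ are linearly independent, while $H^{d-n_1}\ket{\Psi_0}$ is forced to lie in their span because its coordinate vector lives in a $(d-n_1)$-dimensional space. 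This gives $m = d - n_1$.

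The main obstacle is bookkeeping rather than depth: carefully justifying that the $(d - n_1)$-block extracted by discarding indices with $\gamma_p = 0$ really yields a Vandermonde matrix in the \emph{distinct} $\varepsilon_p$, and verifying that no further cancellations can occur. Once that is set up, invertibility of the Vandermonde matrix closes the argument and the grade identity $m = d - n_1$ follows.
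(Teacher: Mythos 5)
Your proposal is correct, and the first half (eigenbasis expansion, regrouping by the index sets $J_p$, and the conclusion $\ket{\Psi(t)} \in \mathrm{E}_d$) coincides with the paper's argument. For the grade identity $m = d - n_1$, however, you take a genuinely different and more direct route. The paper never works with the Krylov vectors $H^k\ket{\Psi_0}$ explicitly; instead it proves two separate minimality lemmas --- that $\mathrm{E}_{d-n_1}$ is the smallest space containing all time-evolved states (via a generalized Vandermonde matrix in the variables $e^{-i\varepsilon_p t_j}$ at distinct sampling times), and that $\mathrm{K}_m$ is likewise minimal (via a truncated Taylor expansion and a Vandermonde matrix in the times $t_i^j/j!$) --- and then equates the dimensions of the two minimal spaces. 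You instead observe that each $\ket{\xi_p}$ is itself an eigenvector of $H$ with eigenvalue $\varepsilon_p$, so that $H^k\ket{\Psi_0} = \sum_p \gamma_p \varepsilon_p^k \ket{\xi_p}$, restrict to the support set $S$ of nonzero coefficients, and invoke the ordinary Vandermonde determinant in the pairwise distinct eigenvalues $\{\varepsilon_p\}_{p\in S}$ to get linear independence of the first $d-n_1$ Krylov vectors and dependence of the next one. This is the classical cyclic-subspace argument; it avoids time discretization and period considerations entirely and is arguably cleaner, whereas the paper's detour through time-evolved states has the virtue of simultaneously establishing the measurable-basis claim ($d$ time-evolved states span $\mathrm{E}_d$) that the authors reuse elsewhere. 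One small caveat you already flag implicitly: with the paper's unnormalized definition of $\ket{\xi_p}$, the coefficient $\gamma_p = \bra{\xi_p}\ket{\Psi_0}$ is proportional to $\sum_{j\in J_p}|\alpha_j|^2$ rather than being the exact expansion coefficient, but since it vanishes exactly when $\ket{\xi_p}$ does, the counting of $n_1$ and hence the conclusion is unaffected.
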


\begin{proof}
Starting with the initial state \( \ket{\Psi_0} \in \mathbb{C}^N \), expand it in the eigenbasis of \( H \):

\begin{align}
    \ket{\Psi_0} = \sum_j \alpha_j \ket{\phi_j}, \alpha_j = \bra{\phi_j} \ket{ \Psi_0}
\end{align}

The time-evolved state is then given by
\begin{align}
    \ket{\Psi(t)} = e^{-iHt} \ket{\Psi_0} = \sum_j e^{-i \varepsilon_j t} \alpha_j \ket{\phi_j}.
\end{align}
Since the eigenvalues \( \varepsilon_j \) take only \( d \) distinct values, we partition the index set \( \{0, 1, \ldots, N-1\} \) into disjoint subsets \( J_p \), where
\[
J_p := \{ j \in \{0, \ldots, N-1\} \mid \varepsilon_j = \varepsilon_p \}, \quad \mathrm{for ~} p = 0, \ldots, d-1.
\]
This lets us rewrite the time-evolved state as:
\begin{align}
    \ket{\Psi (t)} &= \sum_{j} e^{-i \varepsilon_j t} \alpha_j \ket{\phi_j}
    = \sum_{p=0}^{d-1} \sum_{j \in J_p} e^{-i \varepsilon_p t} \alpha_j \ket{\phi_j} \nonumber \\
    &= \sum_{p=0}^{d-1} e^{-i \varepsilon_p t} \left( \sum_{j \in J_p} \alpha_j \ket{\phi_j} \right).
\end{align}
Define the time-independent vectors
\begin{align}
    \ket{\xi_p} := \frac{1}{\sqrt{|J_p|}} \sum_{j \in J_p} \alpha_j \ket{\phi_j},
\end{align}
which represent normalized superpositions. Then, the time-evolved state becomes
\begin{align}
    &\ket{\Psi(t)} = \sum_{p=0}^{d-1} e^{-i \varepsilon_p t} \sqrt{|J_p|} \ket{\xi_p} \nonumber \\
    \Rightarrow &\ket{\Psi(t)} \in \mathrm{Span}\{\ket{\xi_0}, \ldots, \ket{\xi_{d-1}}\} = \mathrm{E}_d.
\end{align}
Thus, \( \ket{\Psi(t)} \) lies in the \( d \)-dimensional subspace \( \mathrm{E}_d \) for all \( t \in \mathbb{R} \).
Given the starting state \( \ket{\psi_0} \), it can be represented in the basis \( \{\ket{\xi_p}\}_p \) with \( \gamma_p = \bra{\xi_p}\ket{\psi_0} \), as
\begin{align}
    \ket{\psi_0} = \sum_{p=0}^{d-1} \gamma_p \ket{\xi_p}.
\end{align}
This is a superposition of \( d \) linearly independent vectors. Let \( n_1 \) denote the number of coefficients for which \( \gamma_p = 0 \). Then, the number of linearly independent vectors is reduced to \( d - n_1 \). Since \( \ket{\xi_p} \) is given by
\begin{align}
    \ket{\xi_p} = \sum_{j \in J_p} \alpha_j \ket{\phi_j},
\end{align}
it follows that \( \gamma_p = 0 \) only if all \( \alpha_j = \bra{\phi_j}\ket{\Psi_0} = 0 \) for all \( j \in J_p \). 
Since \cref{lemma:d=min} and \cref{lemma:m=min} show that both $\mathrm{K}_m$ and $\mathrm{E}_{d-n_1}$ consist of the smallest possible number of basis states, it follows the dimension of both must be equal, i.e. \( m = d - n_1 \). 
\end{proof}

\begin{lemma}[$\mathrm{E}_m$ consists of the minimum number of basis states.]\label{lemma:d=min}
Given an initial state $\ket{\Psi_0}$, a Hamiltonian $H$, and the corresponding space of eigenstates $\mathrm{E}_d$, such that $\ket{\Psi(t)} \in \mathrm{K}_m$ for all $t \in \mathbb{R}$,  
there exists no basis $\mathcal{B}$ with $\mathrm{dim}(\mathcal{B}) < d$ such that any time-evolved state is in the span of $\mathcal{B}$.
\begin{proof}
Assume times $t_0 < t_1 < \ldots < t_{d-1}$ and the states evolved at those times are given by 
\begin{align}
    \ket{\Psi(t_j)} &= e^{-iHt_j}\ket{\Psi_0} 
    = \sum_{k=0}^{N} e^{-i\varepsilon_k t_j} \ket{\phi_k} \bra{\phi_k} \ket{\Psi_0} \nonumber \\
    &= \sum_{p=0}^{d-1} e^{-i\varepsilon_p t_j} \sum_{j \in J_p} \alpha_j \ket{\phi_j} 
    = \sum_{p=0}^{d-1} e^{-i\varepsilon_p t_j} \ket{\xi_p} \nonumber \\
    &= 
    \begin{pmatrix}
        \ket{\xi_0} & \ket{\xi_1} & \ldots & \ket{\xi_{d-1}}
    \end{pmatrix}
    \begin{pmatrix}
        e^{-i\varepsilon_0 t_j} \\
        e^{-i\varepsilon_1 t_j} \\
        \vdots \\
        e^{-i\varepsilon_{d-1} t_j}
    \end{pmatrix}
\end{align}
Here, $\ket{\xi_p} = \sum_{j \in J_p} \alpha_j \ket{\phi_j}$ and $\alpha_j = \bra{\phi_j} \ket{\Psi_0}$ is used. We do not normalize $\ket{\xi_i}$, since it does not change anything but allows better readability. Writing the $d$ states in terms of the $\ket{\xi_p}$ basis results in
\begin{align}
\begin{pmatrix}
    \ket{\Psi(t_0)} &  \ldots & \ket{\Psi(t_{d-1})}
\end{pmatrix}
=
\begin{pmatrix}
    \ket{\xi_0} & \ldots & \ket{\xi_{d-1}}
\end{pmatrix}
\Sigma \nonumber \\
\Sigma = 
\begin{pmatrix}
    e^{-i\varepsilon_0 t_0} & e^{-i\varepsilon_0 t_1} & \ldots & e^{-i\varepsilon_0 t_{d-1}} \\
    e^{-i\varepsilon_1 t_0} & e^{-i\varepsilon_1 t_1} & \ldots & e^{-i\varepsilon_1 t_{d-1}} \\
    \vdots & \vdots & \ddots & \vdots \\
    e^{-i\varepsilon_{d-1} t_0} & e^{-i\varepsilon_{d-1} t_1} & \ldots & e^{-i\varepsilon_{d-1} t_{d-1}}
\end{pmatrix}
\label{eq:mat}
\end{align}

If the matrix $\Sigma$ is invertible, then it follows that any $d$ time-evolved states span $\mathrm{E}_d$. If $\abs{\varepsilon_j} < \pi$, we can write $x_j^t = (e^{-i\varepsilon_j})^t$. If this is not the case, the Hamiltonian can be rescaled as $H \leftarrow H / \abs{\varepsilon_{\max}}$ and time as $t \leftarrow t \cdot \abs{\varepsilon_{\max}}$, which results in the same dynamics. With this, the matrix in \cref{eq:mat} can be rewritten as
\begin{align}
\begin{pmatrix}
    \ket{\Psi(t_0)} &  \ldots & \ket{\Psi(t_{d-1})}
\end{pmatrix}
=
\begin{pmatrix}
    \ket{\xi_0} &  \ldots & \ket{\xi_{d-1}}
\end{pmatrix}
\Sigma
\end{align}
where
\begin{align}
\Sigma = \begin{pmatrix}
    x_0^{t_0} & x_0^{t_1} & \ldots & x_0^{t_{d-1}} \\
    x_1^{t_0} & x_1^{t_1} & \ldots & x_1^{t_{d-1}} \\
    \vdots & \vdots & \ddots & \vdots \\
    x_{d-1}^{t_0} & x_{d-1}^{t_1} & \ldots & x_{d-1}^{t_{d-1}}
\end{pmatrix}
\end{align}

For pairwise distinct times, the columns are linearly independent if $t_{d-1} < T_P$, where $T_P$ is the period of the system. $\Sigma$ is a generalized Vandermonde matrix, for which an inverse exists. Therefore,
\begin{align}
\begin{pmatrix}
    \ket{\Psi(t_0)} & \ldots & \ket{\Psi(t_{d-1})}
\end{pmatrix}
\Sigma^{-1}
=
\begin{pmatrix}
    \ket{\xi_0} &  \ldots & \ket{\xi_{d-1}}
\end{pmatrix}
\end{align}

Since all time-evolved states lie in $\mathrm{E}_d$, and since these $d$ time-evolved states span the space, it follows that $\mathrm{E}_d$ consists of the minimum number of basis vectors.
\end{proof}
\end{lemma}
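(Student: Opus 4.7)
The plan is to prove the contrapositive: by exhibiting $d$ time-evolved states that are linearly independent, one forces any basis containing the full time-evolved trajectory to have dimension at least $d$. First I would expand $\ket{\Psi_0}$ in the eigenbasis of $H$ and regroup the sum over pairwise distinct eigenvalues, writing $\ket{\Psi(t)}=\sum_{p=0}^{d-1}e^{-i\varepsilon_p t}\ket{\xi_p}$ with $\ket{\xi_p}=\sum_{j\in J_p}\alpha_j\ket{\phi_j}$. Because the $\ket{\xi_p}$ live in mutually orthogonal eigenspaces, they are linearly independent whenever all $\ket{\xi_p}\neq 0$, which is precisely the hypothesis underlying the $d$-dimensional space $\mathrm{E}_d$ (so $n_1=0$).

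Next I would sample the trajectory at $d$ distinct times $t_0<t_1<\ldots<t_{d-1}$ and assemble the matrix identity
\begin{equation}
\bigl(\,\ket{\Psi(t_0)}\;\cdots\;\ket{\Psi(t_{d-1})}\,\bigr)
=\bigl(\,\ket{\xi_0}\;\cdots\;\ket{\xi_{d-1}}\,\bigr)\,\Sigma,\qquad
\Sigma_{p,j}=e^{-i\varepsilon_p t_j}.
\end{equation}
The technical heart of the argument is to show that $\Sigma$ is invertible for a generic choice of sampling times. Up to a harmless rescaling of $H$ ensuring $|\varepsilon_p|<\pi$, the matrix takes the generalized Vandermonde form $\Sigma_{p,j}=x_p^{\,t_j}$ with $x_p=e^{-i\varepsilon_p}$ pairwise distinct. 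Invertibility then follows from the classical fact that distinct exponentials are linearly independent (equivalently, the Wronskian of $\{e^{-i\varepsilon_p t}\}_{p=0}^{d-1}$ never vanishes), provided the $t_j$ are chosen within a single (quasi-)period $T_P$ to rule out aliasing.

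Once $\Sigma^{-1}$ exists, multiplying the matrix identity from the right expresses each $\ket{\xi_p}$ as a linear combination of the sampled evolved states, proving $\mathrm{Span}\{\ket{\Psi(t_0)},\ldots,\ket{\Psi(t_{d-1})}\}=\mathrm{E}_d$. Hence these $d$ time-evolved states are linearly independent, so any basis $\mathcal{B}$ containing every evolved state must already contain $d$ independent vectors, giving $\dim(\mathcal{B})\geq d$, which is the contrapositive of the claim.

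The main obstacle I anticipate is making the invertibility of $\Sigma$ precise: because the exponents $t_j$ are real rather than integer, one cannot quote the elementary Vandermonde determinant verbatim. I would handle this either by invoking the Wronskian argument above, or by specializing the sampling grid to $t_j=j\delta$ for small $\delta>0$, which reduces $\Sigma$ to an ordinary Vandermonde matrix in the variables $e^{-i\varepsilon_p\delta}$ (still pairwise distinct for sufficiently small $\delta$). A secondary subtlety is degeneracy in the spectrum, but this is already absorbed into the statement by working with the $d$ grouped building blocks $\ket{\xi_p}$ rather than with all $N$ eigenstates of $H$.
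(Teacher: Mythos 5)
Your proposal is correct and follows essentially the same route as the paper's proof: regroup the spectral decomposition into the $d$ vectors $\ket{\xi_p}$, sample the trajectory at $d$ distinct times, and invert the generalized Vandermonde matrix $\Sigma_{p,j}=e^{-i\varepsilon_p t_j}$ to show the sampled states span $\mathrm{E}_d$. Your added care about justifying invertibility for real exponents (via the Wronskian of distinct exponentials or an arithmetic grid $t_j=j\delta$) is a welcome refinement of a step the paper asserts without detail.
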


\begin{lemma}[$\mathrm{K}_m$ consists of the minimum number of basis states.]\label{lemma:m=min}
Assume $\ket{\Psi(t)}$ is a time-evolved state under Hamiltonian $H$, and assume that the corresponding Krylov space is given by $\mathrm{K}_m$, such that $\ket{\Psi(t)} \in \mathrm{K}_m$ for all $t \in \mathbb{R}$. Then it holds that there exists no basis $\mathcal{B}$ with $\mathrm{dim}(\mathcal{B}) < m$ such that $\ket{\Psi(t)} \in \mathrm{Span}\{\mathcal{B}\}$.
\begin{proof}
    The proof proceeds as follows. Take $L > m-1$ time-evolved states $\ket{\Psi(t_j)} = e^{-iHt_j} \ket{\Psi_0}$ at times $0 = t_0 < t_1 < t_2 < \ldots < t_L < T_P$. Pick $L$ very large, such that
    \begin{align}
        \ket{\Psi(t_j)} = \sum_{k=0}^{\infty} (-iH)^k \frac{t_j^k}{k!} \ket{\Psi_0} = \sum_{k=0}^{L} (-iH)^k \frac{t_j^k}{k!} \ket{\Psi_0} + \varepsilon,
    \end{align}
    holds with $\varepsilon\rightarrow 0$ for $L\rightarrow \infty $. Introduce the substitution:
    \begin{align}
        h_j(\ket{\Psi_0}) &:= \sum_{k=0}^{L} f^k(\ket{\Psi_0}) \frac{t_j^k}{k!}, \\
        \text{with } f^k(\ket{\Psi_0}) &:= (-iH)^k \ket{\Psi_0}. \nonumber
    \end{align}
    For readability, the dependence on \( \ket{\Psi_0} \) is ignored in the following calculations. The vectors \( h_j \) can be written as
    \begin{align}
        h_i = \sum_{j=0}^{n} f^j \frac{t_i^j}{j!} 
        = 
        \begin{pmatrix}
            f^0 & f^1 & \ldots & f^{n}    
        \end{pmatrix}
        \begin{pmatrix}
            1\\
            t_i / 1!\\
            t_i^2 / 2! \\
            \vdots \\
            t_i^{n} / {n}!
        \end{pmatrix}.
    \end{align}
    Writing all vectors $h_0, \ldots, h_{n}$ yields
    \begin{align}
         \begin{pmatrix}
             h_0 & h_1 & \ldots & h_{n}
         \end{pmatrix}
         = 
         \begin{pmatrix}
            f^0 & f^1 & \ldots & f^{n}    
         \end{pmatrix}
         \Theta, \nonumber \\
         \Theta = \begin{pmatrix}
                1 & 1 & \ldots & 1 \\
                t_1 / 1! & t_2 / 1! & \ldots & t_{n} / 1! \\
                t_1^2 / 2! & t_2^2 / 2! & \ldots & t_{n}^2 / 2! \\
                \vdots & \vdots & \ddots & \vdots \\
                t_1^{n} / {n}! & t_2^{n} / {n}! & \ldots & t_{n}^{n} / {n}!
            \end{pmatrix}.
    \end{align}
    Since all times $t_i$ are pairwise distinct by definition, the columns of $\Theta$ are linearly independent. Therefore, $\Theta$ is invertible, and its inverse $\Theta^{-1}$ exists:
    \begin{align}
        \begin{pmatrix}
             h_0 & h_1 & \ldots & h_{n}
         \end{pmatrix} \Theta^{-1}
         = 
         \begin{pmatrix}
            f^0 & f^1 & \ldots & f^{n}    
         \end{pmatrix}.
    \end{align}
    Since all vectors $\{f^0, f^1, \ldots, f^{L}\}$ can be represented as linear combinations of vectors $\{h_0, h_1, \ldots, h_{L}\}$, the spans of both sides are equal:
    \begin{align}
        H_L^L &= \mathrm{Span}\{h_0, h_1, \ldots, h_{n}\} \nonumber \\
         &= \mathrm{Span}\{f^0, f^1, \ldots, f^{L}\} = K_L. \label{eq:help1}
    \end{align}
    Since $K_L$ contains only $m$ linearly independent vectors, $H_L^L$ must also contain only $m$ linearly independent vectors ($H_m^L = K_m$). This implies that there cannot exist a basis $\mathcal{B}$ with $\mathrm{dim}(\mathcal{B}) < m$ such that $\ket{\Psi(t)} \in \mathrm{Span}\{\mathcal{B}\}$. If such a basis existed, then the $L$ vectors $h_j$ would not be able to represent the vectors $f^j$, and the matrix $\Theta$ would not be invertible, which would be contradicting the fact that $\Theta$ is indeed invertible.
\end{proof}
\end{lemma}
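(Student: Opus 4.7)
The plan is to prove minimality of $\mathrm{K}_m$ constructively, by exhibiting a finite collection of time-evolved states whose span already equals $\mathrm{K}_m$. If such a collection can be produced, then any subspace $\mathcal{B}$ containing every $\ket{\Psi(t)}$ must contain these $m$ linearly independent vectors and hence satisfy $\dim\mathcal{B}\geq m$, which is the content of the lemma.

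The first step I would take is to Taylor-expand $\ket{\Psi(t)}=\sum_{k\geq 0}(-it)^k H^k\ket{\Psi_0}/k!$, pick $L+1$ pairwise distinct times $t_0,\ldots,t_L$ with $L+1\geq m$, and truncate at order $L$ to produce vectors $h_j$. These truncations satisfy the matrix identity $(h_0\ \cdots\ h_L)=(f^0\ \cdots\ f^L)\,\Theta$ with $f^k:=(-iH)^k\ket{\Psi_0}$ and $\Theta_{kj}=t_j^k/k!$. Factoring the $1/k!$ out of each row reduces $\det\Theta$ to a standard Vandermonde determinant in the distinct $t_j$, so $\Theta$ is invertible; consequently the spans of the $h_j$ and of the $f^k$ coincide, and both equal $\mathrm{K}_{L+1}=\mathrm{K}_m$, where the last equality uses the stabilization that defines the Krylov grade.

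The delicate point is bridging the truncations $h_j$ and the exact states $\ket{\Psi(t_j)}$. I would exploit the $H$-invariance of $\mathrm{K}_m$ to rewrite the infinite Taylor series as a finite combination $\ket{\Psi(t)}=\sum_{k=0}^{m-1}c_k(t)f^k$ with analytic $c_k$. Equivalently, the eigen-decomposition from \cref{theorem:E_d} restricted to $\mathrm{K}_m$ yields $\ket{\Psi(t)}=\sum_{p=0}^{m-1}e^{-i\varepsilon_p t}\ket{\xi_p}$ with pairwise distinct $\varepsilon_p$ after reduction. Evaluating this expression at $m$ distinct times then produces a generalized Vandermonde matrix in the variables $e^{-i\varepsilon_p}$, and its invertibility would imply that $\{\ket{\Psi(t_j)}\}_{j=0}^{m-1}$ is a basis of $\mathrm{K}_m$.

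The main obstacle will be establishing invertibility of this generalized Vandermonde matrix cleanly, because the bases $e^{-i\varepsilon_p}$ live on the unit circle and the times must be chosen so that no accidental coincidence $e^{-i\varepsilon_p t_j}=e^{-i\varepsilon_q t_j}$ spoils independence. I would handle this either by rescaling the Hamiltonian so that $\abs{\varepsilon_p}<\pi$, making $t\mapsto e^{-i\varepsilon_p t}$ injective over a fundamental period, or by restricting the $t_j$ to a short window below the quasi-period of the evolution; either recipe reduces the determinant to a nonzero Vandermonde expression. With invertibility in place, the desired conclusion $\dim\mathcal{B}\geq m$ follows immediately from the chain $\dim\mathcal{B}\geq\dim\mathrm{Span}\{\ket{\Psi(t_j)}\}_{j=0}^{m-1}=m$.
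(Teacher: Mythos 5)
Your proposal is correct and its first half (Taylor truncation, the coefficient matrix $\Theta$ with entries $t_j^k/k!$, Vandermonde invertibility, and the resulting span equality $\mathrm{Span}\{h_j\}=\mathrm{Span}\{f^k\}=\mathrm{K}_m$) is exactly the argument the paper gives for this lemma. Where you genuinely diverge is in how the argument is closed: the paper never rigorously bridges the truncated vectors $h_j$ to the exact states $\ket{\Psi(t_j)}$, disposing of the truncation error only with an informal ``$\varepsilon\rightarrow 0$ for $L\rightarrow\infty$'' remark, whereas you explicitly flag this as the delicate point and patch it by passing to the finite representation $\ket{\Psi(t)}=\sum_{p}e^{-i\varepsilon_p t}\ket{\xi_p}$ and a generalized Vandermonde matrix in $e^{-i\varepsilon_p t_j}$. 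That patch is, in substance, the paper's companion \cref{lemma:d=min} (there used to prove minimality of $\mathrm{E}_d$), so your proof amounts to replacing the paper's limiting step by the eigendecomposition route and inheriting its unit-circle caveat, which you handle with the same rescaling $\abs{\varepsilon_p}<\pi$ and restriction of times below the quasi-period that the paper uses. What your version buys is rigor — exhibiting $m$ exactly (not approximately) linearly independent time-evolved states immediately forces $\dim\mathcal{B}\geq m$ — at the cost of importing \cref{theorem:E_d}; one small caution is circularity: you should not assume $m=d-n_1$ when counting the surviving $\ket{\xi_p}$, since that identity is what \cref{theorem:E_d} ultimately deduces from these minimality lemmas, but this is easily avoided by showing both $\mathrm{K}_m$ and $\mathrm{Span}\{\ket{\xi_p}\colon\gamma_p\neq 0\}$ equal the span of all time-evolved states and concluding the dimensions agree.
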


\section{Proofs for Operator Spaces and  \cref{theorem:L_M_grade}}\label{app3:proof_L_M_grade}
\setcounter{theorem}{1} % if original theorem number is 3.1 (the '1' here is the last digit)

\begin{theorem}[Repeated]
\label{theorem:L_M_grade}

Let \( H \in \mathbb{C}^{N \times N} \) be a Hamiltonian with eigenbasis \( \{\ket{\phi_j}\} \) and corresponding eigenvalues \( \varepsilon_j \), and let \( O \) be an operator on the same Hilbert space. Define the Liouvillian Krylov space
\[
\mathcal{L}_M := \mathrm{Span}\{\mathcal{L}^0(O), \mathcal{L}^1(O), \ldots, \mathcal{L}^{M-1}(O)\},
\]
where \( \mathcal{L}(O) = [H, O] \) is the Liouvillian superoperator.

Define the transition frequencies \( \omega_{mn} := \varepsilon_m - \varepsilon_n \), and let \( \{\omega_P\}_{P=0}^{N_{\omega}-1} \) be the set of all pairwise distinct values taken by \( \omega_{mn} \). For each \( \omega_P \), define the index set
\[
J_P := \left\{ (m,n) \,\middle|\, \omega_{mn} = \omega_P \right\},
\]
and the corresponding matrix
\[
\sigma_P := \sum_{(m,n) \in J_P} \bra{\phi_m} O \ket{\phi_n} \ket{\phi_m} \bra{\phi_n}.
\]
Let \( N_1 \) is the number of vanishing contributions \( \sigma_P = 0 \), then the time-evolved operator is given by
\[
O(t) = \sum_{P \in S} e^{i\omega_P t} \sigma_P,
\]
with \( S = \{P \mid \sigma_P \neq 0\} = \{s_0, s_1, \ldots, s_{N_{\omega} - N_1 - 1}\} \). The operator lies in the span
\[
O(t) \in \mathrm{Span}\{\sigma_{s_0}, \sigma_{s_1}, \ldots, \sigma_{s_{N_{\omega} - N_1 - 1}}\}=\mathcal{P}_{N_\omega - N_1}.
\]
Further, the grade \( M \) of the Krylov space \( \mathcal{L}_M \) is given by
\[
M = N_{\omega} - N_1.
\]
\end{theorem}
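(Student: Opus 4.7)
The plan is to diagonalize the Liouvillian in the eigenbasis of $H$, identify the $\sigma_P$ as an eigenbasis-like decomposition of $O(t)$, and then use a Vandermonde argument to prove the equality $M = N_\omega - N_1$. First, I would expand $O$ in the eigenbasis as $O = \sum_{m,n} O_{mn}\ket{\phi_m}\bra{\phi_n}$ with $O_{mn}=\bra{\phi_m}O\ket{\phi_n}$. Then I would compute the action of the Liouvillian, observing that
\begin{align}
\mathcal{L}(\ket{\phi_m}\bra{\phi_n}) = [H,\ket{\phi_m}\bra{\phi_n}] = \omega_{mn}\,\ket{\phi_m}\bra{\phi_n},
\end{align}
so the rank-one outer products $\ket{\phi_m}\bra{\phi_n}$ are eigenmatrices of $\mathcal{L}$ with eigenvalues $\omega_{mn}$. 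Iterating gives $\mathcal{L}^k(O) = \sum_{m,n}\omega_{mn}^k\,O_{mn}\ket{\phi_m}\bra{\phi_n}$, and grouping terms that share a common transition frequency $\omega_P$ yields $\mathcal{L}^k(O) = \sum_{P=0}^{N_\omega-1}\omega_P^k\,\sigma_P$. Exponentiating, $O(t) = \sum_P e^{i\omega_P t}\sigma_P$, which reduces to the sum over $S=\{P\mid\sigma_P\neq 0\}$ and proves the first claim, namely $O(t)\in\mathcal{P}_{N_\omega-N_1}$.

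Next I would prove the two inclusions $\mathcal{L}_M\subseteq\mathcal{P}_{N_\omega-N_1}$ and $\mathcal{P}_{N_\omega-N_1}\subseteq\mathcal{L}_{N_\omega-N_1}$. The first is immediate from $\mathcal{L}^k(O) = \sum_{P\in S}\omega_P^k\sigma_P$, which shows every basis element of $\mathcal{L}_M$ lies in the span of $\{\sigma_{s_0},\ldots,\sigma_{s_{N_\omega-N_1-1}}\}$; hence $M\leq N_\omega-N_1$. For the reverse inclusion I would stack the first $N_\omega - N_1$ Liouvillian powers as
\begin{align}
\begin{pmatrix} \mathcal{L}^0(O) & \mathcal{L}^1(O) & \cdots & \mathcal{L}^{N_\omega-N_1-1}(O)\end{pmatrix}
= \begin{pmatrix}\sigma_{s_0} & \sigma_{s_1} & \cdots & \sigma_{s_{N_\omega-N_1-1}}\end{pmatrix}\,V,
\end{align}
where $V_{ij}=\omega_{s_i}^{j}$ is a Vandermonde matrix built from the pairwise distinct nonzero frequencies $\omega_{s_i}$. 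Since the $\omega_{s_i}$ are distinct, $V$ is invertible, so each $\sigma_{s_i}$ is a linear combination of the $\mathcal{L}^k(O)$, proving the reverse inclusion. Combining the two bounds forces $M = N_\omega-N_1$.

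The main obstacle I anticipate is a subtle definitional point: one needs the $\sigma_P$ (for $P\in S$) to be genuinely linearly independent as matrices, so that the dimension of $\mathcal{P}_{N_\omega-N_1}$ really is $N_\omega-N_1$ and not less. This is true because the $\sigma_P$ live in disjoint frequency sectors; they are supported on disjoint sets of outer products $\ket{\phi_m}\bra{\phi_n}$ (since each pair $(m,n)$ belongs to exactly one $J_P$), and those outer products are linearly independent in the operator space under the Hilbert--Schmidt inner product. This observation, together with the Vandermonde step, is what closes the argument; the remaining work is a careful but routine transcription of the eigenbasis expansion and verification that the $\sigma_P$-decomposition is unique.
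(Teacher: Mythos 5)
Your proof is correct, and it takes a genuinely different route from the paper's. You diagonalize the Liouvillian itself --- observing that the outer products $\ket{\phi_m}\bra{\phi_n}$ are eigenoperators of $\mathcal{L}$ with eigenvalues $\omega_{mn}$, so that $\mathcal{L}^k(O)=\sum_{P\in S}\omega_P^k\,\sigma_P$ --- and then close the argument with a single ordinary Vandermonde matrix $V_{ij}=\omega_{s_i}^{\,j}$ in the distinct active frequencies, giving both inclusions $\mathcal{L}_M\subseteq\mathcal{P}_{N_\omega-N_1}$ and $\mathcal{P}_{N_\omega-N_1}\subseteq\mathrm{Span}\{\mathcal{L}^0(O),\ldots,\mathcal{L}^{N_\omega-N_1-1}(O)\}$ directly. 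The paper instead derives $O(t)=\sum_P e^{i\omega_P t}\sigma_P$ from $e^{iHt}Oe^{-iHt}$ and then invokes two auxiliary minimality lemmas (\cref{lemma:Pm_minimal} and \cref{lemma:L=min}): the first shows that $N_\omega-N_1$ time-evolved operators at distinct times span $\mathcal{P}_{N_\omega-N_1}$ via a generalized Vandermonde matrix in $e^{i\omega_P t_j}$, and the second recovers the Liouvillian powers from truncated Taylor expansions of $O(t)$ via a second Vandermonde-type matrix; equality of the two minimal dimensions then yields $M=N_\omega-N_1$. Your version is more economical --- it avoids the time discretization, the truncation-and-limit argument of \cref{lemma:L=min}, and the detour through time-evolved operators --- and it makes explicit why the $\sigma_P$ with $P\in S$ are linearly independent (they have disjoint supports in the Hilbert--Schmidt-orthogonal basis $\{\ket{\phi_m}\bra{\phi_n}\}$, since each pair $(m,n)$ lies in exactly one $J_P$), a point the paper only asserts. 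One small correction: the nodes $\omega_{s_i}$ of your Vandermonde matrix need only be pairwise distinct, not nonzero; $S$ is defined by $\sigma_P\neq 0$, and the zero frequency is typically contained in $S$, which does not affect invertibility.
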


\begin{proof}
The time-evolved operator is given by
\[
O(t) = e^{iHt} O e^{-iHt} = \sum_{m,n} e^{i(\varepsilon_m - \varepsilon_n)t} O_{mn} \ket{\phi_m} \bra{\phi_n},
\]
with matrix elements \( O_{mn} = \bra{\phi_m} O \ket{\phi_n} \).
Define the transition frequencies \( \omega_{mn} := \varepsilon_m - \varepsilon_n \), and let \( \{\omega_P\}_{P=0}^{N_{\omega}-1} \) be the set of all pairwise distinct values taken by \( \omega_{mn} \). For each \( \omega_P \), define the index set
\[
J_P := \{(m,n) \mid \omega_{mn} = \omega_P\}.
\]
Then the operator can be grouped as
\begin{align*}
O(t) &= \sum_{P=0}^{N_{\omega}-1} e^{i\omega_P t} \sum_{(m,n) \in J_P} O_{mn} \ket{\phi_m} \bra{\phi_n} \nonumber \\
&= \sum_{P=0}^{N_{\omega}-1} e^{i\omega_P t} \sigma_P,
\end{align*}
where we define
\[
\sigma_P := \sum_{(m,n) \in J_P} O_{mn} \ket{\phi_m} \bra{\phi_n}.
\]
\textit{The functions \( e^{i \omega_P t} \) are linearly independent over \( \mathbb{R} \) when the frequencies \( \omega_P \) are pairwise distinct.}
Due to the Hermiticity of \( H \), the vectors \( \ket{\phi_m} \) are linearly independent, which implies the linear independence of \( \sigma_P \). Let \( N_1 \) denote the number of such zero contributions, i.e. \( \sigma_P = 0 \). Then the sum reduces to
\[
O(t) = \sum_{P \in S} e^{i\omega_P t} \sigma_P,
\]
where \( S = \{P \mid \sigma_P \neq 0\} = \{s_0, s_1, \ldots, s_{N_{\omega} - N_1 - 1}\} \), and the operator lies in the span
\[
O(t) \in \mathrm{Span}\{\sigma_{s_0}, \sigma_{s_1}, \ldots, \sigma_{s_{N_{\omega} - N_1 - 1}}\}=\mathcal{P}_{N_\omega - N_1}.
\]
By \cref{lemma:Pm_minimal} it holds that this space is minimal, i.e, there exists no basis \( \mathcal{B} \) with \( \dim(\mathcal{B}) < N_{\omega} - N_1 \) such that \( O(t) \in \mathrm{Span}(\mathcal{B}) \).
On the other hand, the time-evolved operator lies in the Liouvillian Krylov space,
\[
O(t) \in \mathcal{L}_M = \mathrm{Span}\{\mathcal{L}^0(O), \mathcal{L}^1(O), \ldots, \mathcal{L}^{M-1}(O)\}.
\]
Since due to \cref{lemma:L=min} it holds $\mathcal{L}^{M-1}(O)\}$ is a minimal space as well, we conclude that
\[
M = N_{\omega} - N_1.
\]

\end{proof}

\begin{lemma}[$\mathcal{P}_{N_\omega - N_1}$ consists of the minimum number of basis operators.]\label{lemma:Pm_minimal}
Given an operator \( O \in \mathbb{C}^{N \times N} \), a Hamiltonian \( H \) with eigenvalues \( \varepsilon_j \), and the associated decomposition of the time-evolved operator \( O(t) = e^{iHt} O e^{-iHt} \), let \( \mathcal{P}_{N_\omega - N_1} = \mathrm{Span}\{\sigma_0, \ldots, \sigma_{N_\omega - N_1 - 1}\} \) denote the space generated by the \( N_\omega - N_1 \) non-zero frequency components in the Liouvillian decomposition as discussed in \cref{theorem:L_M_grade}. Then there exists no smaller set of operators that spans all \( O(t) \) for arbitrary time \( t \in \mathbb{R} \).
\begin{proof}
Assume times \( t_0 < t_1 < \ldots < t_{N_\omega - N_1 - 1} \), and write the time-evolved operators in the basis discussed in \cref{theorem:L_M_grade} at those times:
\begin{align}
    O(t_j) &= \sum_{P=0}^{N_\omega - N_1 - 1} e^{i \omega_P t_j} \sigma_P \nonumber \\
    &= \begin{pmatrix}
        \sigma_0 & \sigma_1 & \ldots & \sigma_{N_\omega - N_1 - 1}
    \end{pmatrix}
    \begin{pmatrix}
        e^{i\omega_0 t_j} \\
        e^{i\omega_1 t_j} \\
        \vdots \\
        e^{i\omega_{N_\omega - N_1 - 1} t_j}
    \end{pmatrix}
\end{align}
Stacking the \( N_\omega - N_1 \) operators into a matrix gives
\begin{align}
\begin{pmatrix}
    O(t_0) & \ldots & O(t_{N_\omega - N_1 - 1})
\end{pmatrix}
=
\begin{pmatrix}
    \sigma_0 & \ldots & \sigma_{N_\omega - N_1 - 1}
\end{pmatrix}
\Sigma
\end{align}
where
\begin{align}
\Sigma = \begin{pmatrix}
    e^{i \omega_0 t_0} &  \ldots & e^{i \omega_0 t_{N_\omega - N_1 - 1}} \\
    e^{i \omega_1 t_0} &  \ldots & e^{i \omega_1 t_{N_\omega - N_1 - 1}} \\
    \vdots& \ddots & \vdots \\
    e^{i \omega_{N_\omega - N_1 - 1} t_0} & \ldots & e^{i \omega_{N_\omega - N_1 - 1} t_{N_\omega - N_1 - 1}}
\end{pmatrix}
\end{align}
This matrix \( \Sigma \) is a generalized Vandermonde matrix in the variables \( x_P := e^{i \omega_P} \). If the frequencies \( \omega_P \) are distinct and the time points \( t_j \) are also distinct (and less than the system period \( T_P \)), then \( \Sigma \) is invertible. This holds due to the construction of $\mathcal{P}_{N_\omega - N_1}$ in \cref{theorem:L_M_grade}.

Therefore, we can recover the \( \sigma_P \) as:
\begin{align}
\begin{pmatrix}
    \sigma_0 & \ldots & \sigma_{N_\omega - N_1 - 1}
\end{pmatrix}
=
\begin{pmatrix}
    O(t_0) & \ldots & O(t_{N_\omega - N_1 - 1})
\end{pmatrix}
\Sigma^{-1}
\end{align}

Hence, the \( N_\omega - N_1 \) time-evolved operators span the space \( \mathcal{P}_{N_\omega - N_1} \), and no smaller set of operators can generate the time evolution. Therefore, \( \mathcal{P}_{N_\omega - N_1} \) is minimal.
\end{proof}
\end{lemma}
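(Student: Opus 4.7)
\emph{Plan.} The approach mirrors the strategy used in the state-space minimality result (\cref{lemma:d=min}), now lifted to the operator setting. The objective is to exhibit $N_\omega - N_1$ sampled operators $O(t_0), \ldots, O(t_{N_\omega - N_1 - 1})$ that are linearly independent and lie in $\mathcal{P}_{N_\omega - N_1}$. Once such a family is produced, any spanning set for $\{O(t)\mid t\in\mathbb{R}\}$ must contain at least $N_\omega - N_1$ linearly independent elements, which establishes the minimality claim.

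Concretely, I would start from the frequency decomposition already established in \cref{theorem:L_M_grade}, restricted to the index set $S$ of nonzero $\sigma_P$, giving
\begin{align*}
O(t) = \sum_{P=0}^{N_\omega - N_1 - 1} e^{i \omega_{s_P} t}\,\sigma_{s_P}.
\end{align*}
Picking $N_\omega - N_1$ pairwise distinct sampling times $t_0 < t_1 < \cdots < t_{N_\omega - N_1 - 1}$ and stacking the corresponding operators columnwise yields the matrix factorization
\begin{align*}
\begin{pmatrix} O(t_0) & \cdots & O(t_{N_\omega - N_1 - 1}) \end{pmatrix}
= \begin{pmatrix} \sigma_{s_0} & \cdots & \sigma_{s_{N_\omega - N_1 - 1}} \end{pmatrix}\,\Sigma,
\end{align*}
with $\Sigma_{Pj} = e^{i\omega_{s_P} t_j}$. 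The next step is to invert $\Sigma$. I would choose the evenly spaced schedule $t_j = j\,\Delta t$ and, after the Hamiltonian rescaling used in the proof of \cref{lemma:d=min}, pick $\Delta t$ small enough that the numbers $x_P := e^{i\omega_{s_P}\Delta t}$ are pairwise distinct. With this choice, $\Sigma_{Pj} = x_P^{\,j}$ is a standard Vandermonde matrix with determinant $\prod_{P<Q}(x_Q - x_P) \neq 0$.

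Invertibility of $\Sigma$ then delivers both halves of the argument at once. Right-multiplying the factorization by $\Sigma^{-1}$ expresses each $\sigma_{s_P}$ as a linear combination of the sampled $O(t_j)$, so the $N_\omega - N_1$ sampled operators already span $\mathcal{P}_{N_\omega - N_1}$. Since the $\sigma_{s_P}$ are themselves linearly independent (their defining index sets $J_{s_P}$ are pairwise disjoint, and each nonzero $\sigma_{s_P}$ is a nontrivial combination of outer products $\ket{\phi_m}\bra{\phi_n}$ supported on its own block), the sampled operators $O(t_j)$ must also be linearly independent. Consequently any subspace containing every $O(t)$ must accommodate these $N_\omega - N_1$ linearly independent operators, so no spanning basis of smaller dimension can exist.

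\emph{Main obstacle.} The only nontrivial step is justifying the invertibility of $\Sigma$, since the frequencies $\omega_{s_P}$ are in principle arbitrary real numbers. The rescaling trick borrowed from \cref{lemma:d=min} reduces this to ensuring that a single parameter $\Delta t$ simultaneously separates all pairs $\omega_{s_P} \neq \omega_{s_Q}$ under the map $\omega \mapsto e^{i\omega\Delta t}$. Because injectivity on a finite frequency set is an open condition in $\Delta t$ near $\Delta t = 0$, a valid choice always exists, but this point should be made explicit in order to rule out accidental collisions modulo $2\pi$.
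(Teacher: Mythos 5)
Your proposal is correct and follows essentially the same route as the paper's proof: sample the time-evolved operator at $N_\omega - N_1$ distinct times, factor the stacked operators through the matrix $\Sigma_{Pj}=e^{i\omega_{s_P}t_j}$, invert it to recover the $\sigma_{s_P}$, and conclude minimality from their linear independence. Your version is slightly more careful on two points the paper glosses over — choosing equally spaced times so that $\Sigma$ becomes a standard Vandermonde matrix and explicitly ruling out collisions of $e^{i\omega_{s_P}\Delta t}$ modulo $2\pi$, and spelling out why the $\sigma_{s_P}$ are linearly independent via their disjoint supports — but these are refinements of the same argument, not a different one.
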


\begin{lemma}[$\mathcal{L}_M$ Consists of the Minimum Number of Basis States]
\label{lemma:L=min}
Let $O$ be an operator and $H$ the Hamiltonian under which $O$ evolves. Then, there exists no space $W$ such that $O(t) \in W$ for all $t$ and $\mathrm{dim}(W) < \mathrm{dim}(\mathcal{L}_M)$.
\begin{proof}
Let $0 = t_0 < t_1 < \ldots < t_Q$ be a time discretization with $t_Q < T_P$, where $T_P$ is the period of the system. Define:
\begin{align*}
    \tilde{O}(t_a) = \sum_{k=0}^Q \frac{(it_a)^k}{k!} \mathcal{L}^k(O)
\end{align*}
as an approximation of the time-evolved operator at time $t_a$. Then,
\begin{align*}
    \tilde{O}(t_a) = 
    (i^0 \mathcal{L}^0(O), i^1 \mathcal{L}^1(O), \ldots, i^Q \mathcal{L}^Q(O)) 
    \begin{pmatrix}
        1\\
        t_a\\
        \vdots\\
        {t_a^Q}/{Q!}
    \end{pmatrix}.
\end{align*}

Writing all $\tilde{O}(t_a)$ with pairwise distinct $t_a$ gives:
\begin{align}
    (\tilde{O}(t_0), \ldots, \tilde{O}(t_Q)) = 
    (i^0 \mathcal{L}^0(O), \ldots, i^Q \mathcal{L}^Q(O)) \Theta,
\end{align}
with
\begin{align}
    \Theta = \begin{pmatrix}
        1 & 1 & \ldots & 1 \\
        t_0 & t_1 & \ldots & t_Q \\
        t_0^2/2! & t_1^2/2! & \ldots & t_Q^2/2! \\
        \vdots & \vdots & \ddots & \vdots \\
        t_0^Q/Q! & t_1^Q/Q! & \ldots & t_Q^Q/Q!
    \end{pmatrix}.
\end{align}
This is the same representation as in \cref{lemma:m=min}, where $\Theta$ is a generalized Vandermonde matrix and is thus invertible. Therefore:
\begin{align}
    (\tilde{O}(t_0), \ldots, \tilde{O}(t_Q)) \Theta^{-1} = 
    (i^0 \mathcal{L}^0(O), \ldots, i^Q \mathcal{L}^Q(O)).
\end{align}

This shows that the time-evolved operators can reconstruct the powers of the Liouvillian and since for all time-evolved operators $O(t)\in \mathcal{L}_M$ holds, and each Liouvillian power can be expressed using the time-evolved operators, it follows that $\mathcal{L}_M$ has the minimum possible number of basis elements.
\end{proof}
\end{lemma}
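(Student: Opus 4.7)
The plan is to establish minimality by showing that $M$ time-evolved operators at generic distinct times already span the Krylov space $\mathcal{L}_M$. Once this spanning property is in hand, any subspace $W$ containing every $O(t)$ must in particular contain those $M$ vectors, forcing $\mathcal{L}_M \subseteq W$ and hence $\dim W \geq M$. This mirrors the strategy of \cref{lemma:m=min}, now carried out on operators rather than states, using the scalar Vandermonde structure that arises from the Taylor expansion of $O(t)$ in the Liouvillian.

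First I would fix $M$ pairwise distinct times $0 = t_0 < t_1 < \cdots < t_{M-1} < T_P$ and expand $O(t) = \sum_{k=0}^\infty \frac{(it)^k}{k!}\mathcal{L}^k(O)$. To avoid convergence subtleties, I would follow the paper's truncation device and introduce $\tilde{O}(t_a) := \sum_{k=0}^{Q}\frac{(it_a)^k}{k!}\mathcal{L}^k(O)$ for large $Q \geq M-1$, which assembles into the matrix identity $(\tilde{O}(t_0),\ldots,\tilde{O}(t_Q)) = (\mathcal{L}^0(O),\,i\mathcal{L}^1(O),\ldots,i^Q\mathcal{L}^Q(O))\,\Theta$, where $\Theta$ has entries $\Theta_{kj} = t_j^k/k!$. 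The distinctness of the $t_j$ makes $\Theta$ a generalized Vandermonde matrix, hence invertible, so multiplying by $\Theta^{-1}$ expresses each $i^k\mathcal{L}^k(O)$ as a fixed linear combination of the $\tilde{O}(t_a)$ for $a = 0,\ldots,Q$.

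Transferring this identity from the truncations to the exact $O(t_a)$ then yields $\mathcal{L}^k(O) \in \mathrm{Span}\{O(t_0),\ldots,O(t_{M-1})\}$ for $k = 0,\ldots,M-1$, so $\mathcal{L}_M \subseteq \mathrm{Span}\{O(t_a)\}_{a=0}^{M-1}$; the reverse inclusion is automatic since each $O(t_a)$ lies in $\mathcal{L}_M$ by definition. Minimality is then immediate: if $\dim W < M$ and $O(t) \in W$ for all $t$, then in particular $O(t_0),\ldots,O(t_{M-1}) \in W$, contradicting their span being $M$-dimensional.

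The main obstacle is the passage from the truncation identity back to the exact operators, because $\Theta^{-1}$ depends on $Q$ and one cannot naively take $Q \to \infty$ with frozen coefficients. I would bypass this either by observing that the Krylov relation expresses every $\mathcal{L}^k(O)$ with $k \geq M$ as a linear combination of $\mathcal{L}^0(O),\ldots,\mathcal{L}^{M-1}(O)$, so $O(t) = \sum_{k=0}^{M-1} c_k(t)\mathcal{L}^k(O)$ is a genuine finite sum with analytic coefficients whose $M \times M$ evaluation matrix $C_{ak} = c_k(t_a)$ reduces to an invertible Vandermonde-type determinant at generic times; or, alternatively, by a one-line closed-subspace argument, using that $W$ is a closed subspace of $\mathbb{C}^{N\times N}$ and therefore contains $\frac{d^k}{dt^k}O(t)|_{t=0} = i^k\mathcal{L}^k(O)$ for every $k$, directly giving $\mathcal{L}_M \subseteq W$ without any Vandermonde manipulation.
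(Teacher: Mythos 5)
Your main line is the same as the paper's: truncate the Taylor expansion of $O(t)$ in powers of the Liouvillian, assemble the generalized Vandermonde matrix $\Theta$, and invert it to recover the $\mathcal{L}^k(O)$ from time-evolved operators. Where you differ is that you explicitly flag, and then repair, a gap that the paper's own proof leaves open: the inversion identity only expresses the Liouvillian powers in terms of the \emph{truncations} $\tilde{O}(t_a)$, not the exact $O(t_a)$, and since $\Theta^{-1}$ depends on $Q$ one cannot freeze the coefficients and let $Q \to \infty$; the paper simply asserts the transfer. Both of your fixes are sound. The finite-sum variant works because the grade-$M$ Krylov relation makes $O(t) = \sum_{k=0}^{M-1} c_k(t)\,\mathcal{L}^k(O)$ an exact finite expansion with analytic coefficients satisfying $c_k(t) = (it)^k/k! + \mathcal{O}(t^M)$, whose linear independence guarantees evaluation times with nonsingular evaluation matrix. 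The closed-subspace argument is the cleaner and genuinely different route: any candidate $W$ is a finite-dimensional subspace, hence closed, hence contains all difference quotients and their limits, so it contains $\tfrac{d^k}{dt^k}O(t)\big|_{t=0} = i^k \mathcal{L}^k(O)$ for every $k$, giving $\mathcal{L}_M \subseteq W$ with no sample times and no Vandermonde manipulation at all; it buys rigor and brevity at the cost of departing from the constructive ``which finitely many measurements suffice'' flavor that the paper's version (and your first fix) retains.
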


\section{Krylov space dimension for the Hamiltonians}\label{app4:KrylovSpaceDim}
\begin{table}[H]
\centering
\begin{tabular}{llccccc}
\toprule
\textbf{Hamiltonian} & \textbf{Observable} & $~d~$ & $~d^2~$ & $~N_{\omega}~$ & $~N_1~$ & $~M~$ \\
\midrule
$H_{I1}$ & $Z_1$ & 9  & 81  & 71  & 40  & 31 \\
$H_{I1}$ & $Z_2$ & 9  & 81  & 71  & 40  & 31 \\
$H_{I1}$ & $Z_3$ & 9  & 81  & 71  & 40  & 31 \\
$H_{I1}$ & $Z_4$ & 9  & 81  & 71  & 40  & 31 \\
\midrule
$H_{I2}$ & $Z_1$ & 16 & 256 & 237 & 176 & 61 \\
$H_{I2}$ & $Z_2$ & 16 & 256 & 237 & 176 & 61 \\
$H_{I2}$ & $Z_3$ & 16 & 256 & 237 & 158 & 79 \\
$H_{I2}$ & $Z_4$ & 16 & 256 & 237 & 158 & 79 \\
\midrule
$H_{I3}$ & $Z_1$ & 15 & 225 & 211 & 112 & 99 \\
$H_{I3}$ & $Z_2$ & 15 & 225 & 211 & 112 & 99 \\
$H_{I3}$ & $Z_3$ & 15 & 225 & 211 & 112 & 99 \\
$H_{I3}$ & $Z_4$ & 15 & 225 & 211 & 112 & 99 \\
\midrule
$H_{I4}$ & $Z_1$ & 16 & 256 & 241 & 128 & 113 \\
$H_{I4}$ & $Z_2$ & 16 & 256 & 241 & 128 & 113 \\
$H_{I4}$ & $Z_3$ & 16 & 256 & 241 & 128 & 113 \\
$H_{I4}$ & $Z_4$ & 16 & 256 & 241 & 128 & 113 \\
\bottomrule
\end{tabular}
\caption{Spectral and dynamical statistics for each (Hamiltonian, Observable) pair. $d$ is the number of pairwise distinct eigenvalues, $d^2$ is the number of eigenvalue pairs, $n_{\omega}$ the number of distinct transition frequencies, $N_1$ the number of zero $\sigma_P$ contributions, and $M$ the resulting Krylov dimension.}
\label{tab:spectral_summary_all}
\end{table}
\clearpage

\bibliography{lit}

%apsrev4-2.bst 2019-01-14 (MD) hand-edited version of apsrev4-1.bst
%Control: key (0)
%Control: author (8) initials jnrlst
%Control: editor formatted (1) identically to author
%Control: production of article title (0) allowed
%Control: page (0) single
%Control: year (1) truncated
%Control: production of eprint (0) enabled
\begin{thebibliography}{131}%
\makeatletter
\providecommand \@ifxundefined [1]{%
 \@ifx{#1\undefined}
}%
\providecommand \@ifnum [1]{%
 \ifnum #1\expandafter \@firstoftwo
 \else \expandafter \@secondoftwo
 \fi
}%
\providecommand \@ifx [1]{%
 \ifx #1\expandafter \@firstoftwo
 \else \expandafter \@secondoftwo
 \fi
}%
\providecommand \natexlab [1]{#1}%
\providecommand \enquote  [1]{``#1''}%
\providecommand \bibnamefont  [1]{#1}%
\providecommand \bibfnamefont [1]{#1}%
\providecommand \citenamefont [1]{#1}%
\providecommand \href@noop [0]{\@secondoftwo}%
\providecommand \href [0]{\begingroup \@sanitize@url \@href}%
\providecommand \@href[1]{\@@startlink{#1}\@@href}%
\providecommand \@@href[1]{\endgroup#1\@@endlink}%
\providecommand \@sanitize@url [0]{\catcode `\\12\catcode `\$12\catcode `\&12\catcode `\#12\catcode `\^12\catcode `\_12\catcode `\%12\relax}%
\providecommand \@@startlink[1]{}%
\providecommand \@@endlink[0]{}%
\providecommand \url  [0]{\begingroup\@sanitize@url \@url }%
\providecommand \@url [1]{\endgroup\@href {#1}{\urlprefix }}%
\providecommand \urlprefix  [0]{URL }%
\providecommand \Eprint [0]{\href }%
\providecommand \doibase [0]{https://doi.org/}%
\providecommand \selectlanguage [0]{\@gobble}%
\providecommand \bibinfo  [0]{\@secondoftwo}%
\providecommand \bibfield  [0]{\@secondoftwo}%
\providecommand \translation [1]{[#1]}%
\providecommand \BibitemOpen [0]{}%
\providecommand \bibitemStop [0]{}%
\providecommand \bibitemNoStop [0]{.\EOS\space}%
\providecommand \EOS [0]{\spacefactor3000\relax}%
\providecommand \BibitemShut  [1]{\csname bibitem#1\endcsname}%
\let\auto@bib@innerbib\@empty
%</preamble>
\bibitem [{\citenamefont {Jaeger}(2001)}]{JAE01}%
  \BibitemOpen
  \bibfield  {author} {\bibinfo {author} {\bibfnamefont {H.}~\bibnamefont {Jaeger}},\ }\href {https://doi.org/10.24406/publica-fhg-291111} {\emph {\bibinfo {title} {The ``echo state'' approach to analysing and training recurrent neural networks}}},\ \bibinfo {type} {GMD Report}\ \bibinfo {number} {148}\ (\bibinfo  {institution} {GMD - German National Research Institute for Computer Science},\ \bibinfo {year} {2001})\BibitemShut {NoStop}%
\bibitem [{\citenamefont {Maass}\ \emph {et~al.}(2002)\citenamefont {Maass}, \citenamefont {Natschl{\"{a}}ger},\ and\ \citenamefont {Markram}}]{MAA02}%
  \BibitemOpen
  \bibfield  {author} {\bibinfo {author} {\bibfnamefont {W.}~\bibnamefont {Maass}}, \bibinfo {author} {\bibfnamefont {T.}~\bibnamefont {Natschl{\"{a}}ger}},\ and\ \bibinfo {author} {\bibfnamefont {H.}~\bibnamefont {Markram}},\ }\bibfield  {title} {\bibinfo {title} {Real-time computing without stable states: A new framework for neural computation based on perturbations},\ }\href {https://doi.org/10.1162/089976602760407955} {\bibfield  {journal} {\bibinfo  {journal} {Neural. Comput.}\ }\textbf {\bibinfo {volume} {14}},\ \bibinfo {pages} {2531} (\bibinfo {year} {2002})}\BibitemShut {NoStop}%
\bibitem [{\citenamefont {Verstraeten}\ \emph {et~al.}(2007)\citenamefont {Verstraeten}, \citenamefont {Schrauwen}, \citenamefont {D’Haene},\ and\ \citenamefont {Stroobandt}}]{VER07}%
  \BibitemOpen
  \bibfield  {author} {\bibinfo {author} {\bibfnamefont {D.}~\bibnamefont {Verstraeten}}, \bibinfo {author} {\bibfnamefont {B.}~\bibnamefont {Schrauwen}}, \bibinfo {author} {\bibfnamefont {M.}~\bibnamefont {D’Haene}},\ and\ \bibinfo {author} {\bibfnamefont {D.}~\bibnamefont {Stroobandt}},\ }\bibfield  {title} {\bibinfo {title} {An experimental unification of reservoir computing methods},\ }\href {https://doi.org/https://doi.org/10.1016/j.neunet.2007.04.003} {\bibfield  {journal} {\bibinfo  {journal} {Neural Networks}\ }\textbf {\bibinfo {volume} {20}},\ \bibinfo {pages} {391} (\bibinfo {year} {2007})},\ \bibinfo {note} {echo State Networks and Liquid State Machines}\BibitemShut {NoStop}%
\bibitem [{\citenamefont {Appeltant}\ \emph {et~al.}(2011)\citenamefont {Appeltant}, \citenamefont {Soriano}, \citenamefont {Van~der Sande}, \citenamefont {Danckaert}, \citenamefont {Massar}, \citenamefont {Dambre}, \citenamefont {Schrauwen}, \citenamefont {Mirasso},\ and\ \citenamefont {Fischer}}]{APP11}%
  \BibitemOpen
  \bibfield  {author} {\bibinfo {author} {\bibfnamefont {L.}~\bibnamefont {Appeltant}}, \bibinfo {author} {\bibfnamefont {M.~C.}\ \bibnamefont {Soriano}}, \bibinfo {author} {\bibfnamefont {G.}~\bibnamefont {Van~der Sande}}, \bibinfo {author} {\bibfnamefont {J.}~\bibnamefont {Danckaert}}, \bibinfo {author} {\bibfnamefont {S.}~\bibnamefont {Massar}}, \bibinfo {author} {\bibfnamefont {J.}~\bibnamefont {Dambre}}, \bibinfo {author} {\bibfnamefont {B.}~\bibnamefont {Schrauwen}}, \bibinfo {author} {\bibfnamefont {C.~R.}\ \bibnamefont {Mirasso}},\ and\ \bibinfo {author} {\bibfnamefont {I.}~\bibnamefont {Fischer}},\ }\bibfield  {title} {\bibinfo {title} {Information processing using a single dynamical node as complex system},\ }\href {https://doi.org/10.1038/ncomms1476} {\bibfield  {journal} {\bibinfo  {journal} {Nat. Commun.}\ }\textbf {\bibinfo {volume} {2}},\ \bibinfo {pages} {468} (\bibinfo {year} {2011})}\BibitemShut {NoStop}%
\bibitem [{\citenamefont {M{\"{u}}hlnickel}\ \emph {et~al.}(2024)\citenamefont {M{\"{u}}hlnickel}, \citenamefont {Jaurigue}, \citenamefont {Jaurigue},\ and\ \citenamefont {L{\"{u}}dge}}]{MUE24}%
  \BibitemOpen
  \bibfield  {author} {\bibinfo {author} {\bibfnamefont {L.}~\bibnamefont {M{\"{u}}hlnickel}}, \bibinfo {author} {\bibfnamefont {J.~A.}\ \bibnamefont {Jaurigue}}, \bibinfo {author} {\bibfnamefont {L.~C.}\ \bibnamefont {Jaurigue}},\ and\ \bibinfo {author} {\bibfnamefont {K.}~\bibnamefont {L{\"{u}}dge}},\ }\bibfield  {title} {\bibinfo {title} {Reservoir computing using spin-vcsels - the influence of timescales and data injection schemes},\ }\href {https://doi.org/10.1038/s42005-024-01858-5} {\bibfield  {journal} {\bibinfo  {journal} {Commun. Phys.}\ }\textbf {\bibinfo {volume} {7}},\ \bibinfo {pages} {370} (\bibinfo {year} {2024})}\BibitemShut {NoStop}%
\bibitem [{\citenamefont {Jaurigue}\ and\ \citenamefont {L{\"{u}}dge}(2024)}]{JAU24}%
  \BibitemOpen
  \bibfield  {author} {\bibinfo {author} {\bibfnamefont {L.~C.}\ \bibnamefont {Jaurigue}}\ and\ \bibinfo {author} {\bibfnamefont {K.}~\bibnamefont {L{\"{u}}dge}},\ }\bibfield  {title} {\bibinfo {title} {Reducing reservoir computer hyperparameter dependence by external timescale tailoring},\ }\href {https://doi.org/10.1088/2634-4386/ad1d32} {\bibfield  {journal} {\bibinfo  {journal} {Neuromorph. Comput. Eng.}\ }\textbf {\bibinfo {volume} {4}},\ \bibinfo {pages} {014001} (\bibinfo {year} {2024})}\BibitemShut {NoStop}%
\bibitem [{\citenamefont {Jaurigue}\ \emph {et~al.}(2025)\citenamefont {Jaurigue}, \citenamefont {Robertson}, \citenamefont {Hurtado}, \citenamefont {Jaurigue},\ and\ \citenamefont {L{\"{u}}dge}}]{JAU25}%
  \BibitemOpen
  \bibfield  {author} {\bibinfo {author} {\bibfnamefont {J.~A.}\ \bibnamefont {Jaurigue}}, \bibinfo {author} {\bibfnamefont {J.}~\bibnamefont {Robertson}}, \bibinfo {author} {\bibfnamefont {A.}~\bibnamefont {Hurtado}}, \bibinfo {author} {\bibfnamefont {L.~C.}\ \bibnamefont {Jaurigue}},\ and\ \bibinfo {author} {\bibfnamefont {K.}~\bibnamefont {L{\"{u}}dge}},\ }\bibfield  {title} {\bibinfo {title} {Post-processing methods for delay embedding and feature scaling of reservoir computers},\ }\href {https://doi.org/10.1038/s44172-024-00330-0} {\bibfield  {journal} {\bibinfo  {journal} {Commun. Eng.}\ }\textbf {\bibinfo {volume} {4}},\ \bibinfo {pages} {10} (\bibinfo {year} {2025})}\BibitemShut {NoStop}%
\bibitem [{\citenamefont {Fujii}\ and\ \citenamefont {Nakajima}(2017)}]{FUJ17}%
  \BibitemOpen
  \bibfield  {author} {\bibinfo {author} {\bibfnamefont {K.}~\bibnamefont {Fujii}}\ and\ \bibinfo {author} {\bibfnamefont {K.}~\bibnamefont {Nakajima}},\ }\bibfield  {title} {\bibinfo {title} {Harnessing {Disordered}-{Ensemble} {Quantum} {Dynamics} for {Machine} {Learning}},\ }\href {https://doi.org/10.1103/physrevapplied.8.024030} {\bibfield  {journal} {\bibinfo  {journal} {Phys. Rev. Applied}\ }\textbf {\bibinfo {volume} {8}},\ \bibinfo {pages} {024030} (\bibinfo {year} {2017})},\ \bibinfo {note} {publisher: American Physical Society}\BibitemShut {NoStop}%
\bibitem [{\citenamefont {Nakajima}\ \emph {et~al.}(2019)\citenamefont {Nakajima}, \citenamefont {Fujii}, \citenamefont {Negoro}, \citenamefont {Mitarai},\ and\ \citenamefont {Kitagawa}}]{NAK19}%
  \BibitemOpen
  \bibfield  {author} {\bibinfo {author} {\bibfnamefont {K.}~\bibnamefont {Nakajima}}, \bibinfo {author} {\bibfnamefont {K.}~\bibnamefont {Fujii}}, \bibinfo {author} {\bibfnamefont {M.}~\bibnamefont {Negoro}}, \bibinfo {author} {\bibfnamefont {K.}~\bibnamefont {Mitarai}},\ and\ \bibinfo {author} {\bibfnamefont {M.}~\bibnamefont {Kitagawa}},\ }\bibfield  {title} {\bibinfo {title} {Boosting {Computational} {Power} through {Spatial} {Multiplexing} in {Quantum} {Reservoir} {Computing}},\ }\href {https://doi.org/10.1103/physrevapplied.11.034021} {\bibfield  {journal} {\bibinfo  {journal} {Phys. Rev. Applied}\ }\textbf {\bibinfo {volume} {11}},\ \bibinfo {pages} {034021} (\bibinfo {year} {2019})},\ \bibinfo {note} {publisher: American Physical Society}\BibitemShut {NoStop}%
\bibitem [{\citenamefont {Kutvonen}\ \emph {et~al.}(2020)\citenamefont {Kutvonen}, \citenamefont {Fujii},\ and\ \citenamefont {Sagawa}}]{KUT20}%
  \BibitemOpen
  \bibfield  {author} {\bibinfo {author} {\bibfnamefont {A.}~\bibnamefont {Kutvonen}}, \bibinfo {author} {\bibfnamefont {K.}~\bibnamefont {Fujii}},\ and\ \bibinfo {author} {\bibfnamefont {T.}~\bibnamefont {Sagawa}},\ }\bibfield  {title} {\bibinfo {title} {Optimizing a quantum reservoir computer for time series prediction},\ }\href {https://doi.org/10.1038/s41598-020-71673-9} {\bibfield  {journal} {\bibinfo  {journal} {Sci. Rep.}\ }\textbf {\bibinfo {volume} {10}},\ \bibinfo {pages} {14687} (\bibinfo {year} {2020})}\BibitemShut {NoStop}%
\bibitem [{\citenamefont {Mart{\'{\i}}nez-Pe{\~{n}}a}\ \emph {et~al.}(2021)\citenamefont {Mart{\'{\i}}nez-Pe{\~{n}}a}, \citenamefont {Giorgi}, \citenamefont {Nokkala}, \citenamefont {Soriano},\ and\ \citenamefont {Zambrini}}]{MAR21}%
  \BibitemOpen
  \bibfield  {author} {\bibinfo {author} {\bibfnamefont {R.}~\bibnamefont {Mart{\'{\i}}nez-Pe{\~{n}}a}}, \bibinfo {author} {\bibfnamefont {G.~L.}\ \bibnamefont {Giorgi}}, \bibinfo {author} {\bibfnamefont {J.}~\bibnamefont {Nokkala}}, \bibinfo {author} {\bibfnamefont {M.~C.}\ \bibnamefont {Soriano}},\ and\ \bibinfo {author} {\bibfnamefont {R.}~\bibnamefont {Zambrini}},\ }\bibfield  {title} {\bibinfo {title} {Dynamical {Phase} {Transitions} in {Quantum} {Reservoir} {Computing}},\ }\href {https://doi.org/10.1103/physrevlett.127.100502} {\bibfield  {journal} {\bibinfo  {journal} {Phys. Rev. Lett.}\ }\textbf {\bibinfo {volume} {127}},\ \bibinfo {pages} {100502} (\bibinfo {year} {2021})}\BibitemShut {NoStop}%
\bibitem [{\citenamefont {Xia}\ \emph {et~al.}(2022)\citenamefont {Xia}, \citenamefont {Zou}, \citenamefont {Qiu},\ and\ \citenamefont {Li}}]{XIA22}%
  \BibitemOpen
  \bibfield  {author} {\bibinfo {author} {\bibfnamefont {W.}~\bibnamefont {Xia}}, \bibinfo {author} {\bibfnamefont {J.}~\bibnamefont {Zou}}, \bibinfo {author} {\bibfnamefont {X.}~\bibnamefont {Qiu}},\ and\ \bibinfo {author} {\bibfnamefont {X.}~\bibnamefont {Li}},\ }\bibfield  {title} {\bibinfo {title} {The reservoir learning power across quantum many-body localization transition},\ }\href {https://doi.org/10.1007/s11467-022-1158-1} {\bibfield  {journal} {\bibinfo  {journal} {Front. Phys.}\ }\textbf {\bibinfo {volume} {17}},\ \bibinfo {pages} {33506} (\bibinfo {year} {2022})}\BibitemShut {NoStop}%
\bibitem [{\citenamefont {G{\"{o}}tting}\ \emph {et~al.}(2023)\citenamefont {G{\"{o}}tting}, \citenamefont {Lohof},\ and\ \citenamefont {Gies}}]{GOE23}%
  \BibitemOpen
  \bibfield  {author} {\bibinfo {author} {\bibfnamefont {N.}~\bibnamefont {G{\"{o}}tting}}, \bibinfo {author} {\bibfnamefont {F.}~\bibnamefont {Lohof}},\ and\ \bibinfo {author} {\bibfnamefont {C.}~\bibnamefont {Gies}},\ }\bibfield  {title} {\bibinfo {title} {Exploring quantumness in quantum reservoir computing},\ }\href {https://doi.org/10.1103/physreva.108.052427} {\bibfield  {journal} {\bibinfo  {journal} {Phys. Rev. A}\ }\textbf {\bibinfo {volume} {108}},\ \bibinfo {pages} {052427} (\bibinfo {year} {2023})}\BibitemShut {NoStop}%
\bibitem [{\citenamefont {Nokkala}\ \emph {et~al.}(2024)\citenamefont {Nokkala}, \citenamefont {Giorgi},\ and\ \citenamefont {Zambrini}}]{NOK24}%
  \BibitemOpen
  \bibfield  {author} {\bibinfo {author} {\bibfnamefont {J.}~\bibnamefont {Nokkala}}, \bibinfo {author} {\bibfnamefont {G.~L.}\ \bibnamefont {Giorgi}},\ and\ \bibinfo {author} {\bibfnamefont {R.}~\bibnamefont {Zambrini}},\ }\bibfield  {title} {\bibinfo {title} {Retrieving past quantum features with deep hybrid classical-quantum reservoir computing},\ }\bibfield  {journal} {\bibinfo  {journal} {ArXiv}\ }\textbf {\bibinfo {volume} {2401.16961}},\ \href {https://doi.org/https://doi.org/10.48550/arXiv.2401.16961} {https://doi.org/10.48550/arXiv.2401.16961} (\bibinfo {year} {2024}),\ \Eprint {https://arxiv.org/abs/2401.16961} {arXiv:2401.16961 [quant-ph]} \BibitemShut {NoStop}%
\bibitem [{\citenamefont {Abbas}\ \emph {et~al.}(2024)\citenamefont {Abbas}, \citenamefont {Abdel-Ghani},\ and\ \citenamefont {Maksymov}}]{ABB24a}%
  \BibitemOpen
  \bibfield  {author} {\bibinfo {author} {\bibfnamefont {A.~H.}\ \bibnamefont {Abbas}}, \bibinfo {author} {\bibfnamefont {H.}~\bibnamefont {Abdel-Ghani}},\ and\ \bibinfo {author} {\bibfnamefont {I.~S.}\ \bibnamefont {Maksymov}},\ }\bibfield  {title} {\bibinfo {title} {Classical and {Quantum} {Physical} {Reservoir} {Computing} for {Onboard} {Artificial} {Intelligence} {Systems}: {A} {Perspective}},\ }\bibfield  {journal} {\bibinfo  {journal} {Preprints}\ }\href {https://doi.org/10.20944/preprints202406.1128.v1} {10.20944/preprints202406.1128.v1} (\bibinfo {year} {2024})\BibitemShut {NoStop}%
\bibitem [{\citenamefont {Chen}\ \emph {et~al.}(2020)\citenamefont {Chen}, \citenamefont {Nurdin},\ and\ \citenamefont {Yamamoto}}]{CHE20b}%
  \BibitemOpen
  \bibfield  {author} {\bibinfo {author} {\bibfnamefont {J.}~\bibnamefont {Chen}}, \bibinfo {author} {\bibfnamefont {H.~I.}\ \bibnamefont {Nurdin}},\ and\ \bibinfo {author} {\bibfnamefont {N.}~\bibnamefont {Yamamoto}},\ }\bibfield  {title} {\bibinfo {title} {Temporal information processing on noisy quantum computers},\ }\href {https://doi.org/10.1103/physrevapplied.14.024065} {\bibfield  {journal} {\bibinfo  {journal} {Phys. Rev. Applied}\ }\textbf {\bibinfo {volume} {14}},\ \bibinfo {pages} {024065} (\bibinfo {year} {2020})}\BibitemShut {NoStop}%
\bibitem [{\citenamefont {Suzuki}\ \emph {et~al.}(2022)\citenamefont {Suzuki}, \citenamefont {Gao}, \citenamefont {Pradel}, \citenamefont {Yasuoka},\ and\ \citenamefont {Yamamoto}}]{SUZ22}%
  \BibitemOpen
  \bibfield  {author} {\bibinfo {author} {\bibfnamefont {Y.}~\bibnamefont {Suzuki}}, \bibinfo {author} {\bibfnamefont {Q.}~\bibnamefont {Gao}}, \bibinfo {author} {\bibfnamefont {K.~C.}\ \bibnamefont {Pradel}}, \bibinfo {author} {\bibfnamefont {K.}~\bibnamefont {Yasuoka}},\ and\ \bibinfo {author} {\bibfnamefont {N.}~\bibnamefont {Yamamoto}},\ }\bibfield  {title} {\bibinfo {title} {Natural quantum reservoir computing for temporal information processing},\ }\href {https://doi.org/10.1038/s41598-022-05061-w} {\bibfield  {journal} {\bibinfo  {journal} {Sci. Rep.}\ }\textbf {\bibinfo {volume} {12}},\ \bibinfo {pages} {1353} (\bibinfo {year} {2022})},\ \bibinfo {note} {publisher: Nature Publishing Group}\BibitemShut {NoStop}%
\bibitem [{\citenamefont {Pfeffer}\ \emph {et~al.}(2022)\citenamefont {Pfeffer}, \citenamefont {Heyder},\ and\ \citenamefont {Schumacher}}]{PFE22}%
  \BibitemOpen
  \bibfield  {author} {\bibinfo {author} {\bibfnamefont {P.}~\bibnamefont {Pfeffer}}, \bibinfo {author} {\bibfnamefont {F.}~\bibnamefont {Heyder}},\ and\ \bibinfo {author} {\bibfnamefont {J.}~\bibnamefont {Schumacher}},\ }\bibfield  {title} {\bibinfo {title} {Hybrid quantum-classical reservoir computing of thermal convection flow},\ }\href {https://doi.org/10.1103/physrevresearch.4.033176} {\bibfield  {journal} {\bibinfo  {journal} {Phys. Rev. Research}\ }\textbf {\bibinfo {volume} {4}},\ \bibinfo {pages} {033176} (\bibinfo {year} {2022})}\BibitemShut {NoStop}%
\bibitem [{\citenamefont {Yasuda}\ \emph {et~al.}(2023)\citenamefont {Yasuda}, \citenamefont {Suzuki}, \citenamefont {Kubota}, \citenamefont {Nakajima}, \citenamefont {Gao}, \citenamefont {Zhang}, \citenamefont {Shimono}, \citenamefont {Nurdin},\ and\ \citenamefont {Yamamoto}}]{YAS23}%
  \BibitemOpen
  \bibfield  {author} {\bibinfo {author} {\bibfnamefont {T.}~\bibnamefont {Yasuda}}, \bibinfo {author} {\bibfnamefont {Y.}~\bibnamefont {Suzuki}}, \bibinfo {author} {\bibfnamefont {T.}~\bibnamefont {Kubota}}, \bibinfo {author} {\bibfnamefont {K.}~\bibnamefont {Nakajima}}, \bibinfo {author} {\bibfnamefont {Q.}~\bibnamefont {Gao}}, \bibinfo {author} {\bibfnamefont {W.}~\bibnamefont {Zhang}}, \bibinfo {author} {\bibfnamefont {S.}~\bibnamefont {Shimono}}, \bibinfo {author} {\bibfnamefont {H.~I.}\ \bibnamefont {Nurdin}},\ and\ \bibinfo {author} {\bibfnamefont {N.}~\bibnamefont {Yamamoto}},\ }\bibfield  {title} {\bibinfo {title} {Quantum reservoir computing with repeated measurements on superconducting devices},\ }\href {http://arxiv.org/abs/2310.06706} {\bibfield  {journal} {\bibinfo  {journal} {arxiv}\ } (\bibinfo {year} {2023})}\BibitemShut {NoStop}%
\bibitem [{\citenamefont {Sannia}\ \emph {et~al.}(2024)\citenamefont {Sannia}, \citenamefont {Mart{\'{\i}}nez-Pe{\~{n}}a}, \citenamefont {Soriano}, \citenamefont {Giorgi},\ and\ \citenamefont {Zambrini}}]{SAN24}%
  \BibitemOpen
  \bibfield  {author} {\bibinfo {author} {\bibfnamefont {A.}~\bibnamefont {Sannia}}, \bibinfo {author} {\bibfnamefont {R.}~\bibnamefont {Mart{\'{\i}}nez-Pe{\~{n}}a}}, \bibinfo {author} {\bibfnamefont {M.~C.}\ \bibnamefont {Soriano}}, \bibinfo {author} {\bibfnamefont {G.~L.}\ \bibnamefont {Giorgi}},\ and\ \bibinfo {author} {\bibfnamefont {R.}~\bibnamefont {Zambrini}},\ }\bibfield  {title} {\bibinfo {title} {Dissipation as a resource for {Quantum} {Reservoir} {Computing}},\ }\href {https://doi.org/10.22331/q-2024-03-20-1291} {\bibfield  {journal} {\bibinfo  {journal} {Quantum}\ }\textbf {\bibinfo {volume} {8}},\ \bibinfo {pages} {1291} (\bibinfo {year} {2024})}\BibitemShut {NoStop}%
\bibitem [{\citenamefont {Fry}\ \emph {et~al.}(2023)\citenamefont {Fry}, \citenamefont {Deshmukh}, \citenamefont {Chen}, \citenamefont {Rastunkov},\ and\ \citenamefont {Markov}}]{FRY23}%
  \BibitemOpen
  \bibfield  {author} {\bibinfo {author} {\bibfnamefont {D.}~\bibnamefont {Fry}}, \bibinfo {author} {\bibfnamefont {A.}~\bibnamefont {Deshmukh}}, \bibinfo {author} {\bibfnamefont {S.~Y.}\ \bibnamefont {Chen}}, \bibinfo {author} {\bibfnamefont {V.}~\bibnamefont {Rastunkov}},\ and\ \bibinfo {author} {\bibfnamefont {V.}~\bibnamefont {Markov}},\ }\bibfield  {title} {\bibinfo {title} {Optimizing quantum noise-induced reservoir computing for nonlinear and chaotic time series prediction},\ }\href {https://doi.org/10.1038/s41598-023-45015-4} {\bibfield  {journal} {\bibinfo  {journal} {Sci. Rep.}\ }\textbf {\bibinfo {volume} {13}},\ \bibinfo {pages} {19326} (\bibinfo {year} {2023})}\BibitemShut {NoStop}%
\bibitem [{\citenamefont {Domingo}\ \emph {et~al.}(2023)\citenamefont {Domingo}, \citenamefont {Carlo},\ and\ \citenamefont {Borondo}}]{DOM23}%
  \BibitemOpen
  \bibfield  {author} {\bibinfo {author} {\bibfnamefont {L.}~\bibnamefont {Domingo}}, \bibinfo {author} {\bibfnamefont {G.}~\bibnamefont {Carlo}},\ and\ \bibinfo {author} {\bibfnamefont {F.}~\bibnamefont {Borondo}},\ }\bibfield  {title} {\bibinfo {title} {Taking advantage of noise in quantum reservoir computing},\ }\href {https://doi.org/10.1038/s41598-023-35461-5} {\bibfield  {journal} {\bibinfo  {journal} {Sci. Rep.}\ }\textbf {\bibinfo {volume} {13}},\ \bibinfo {pages} {8790} (\bibinfo {year} {2023})}\BibitemShut {NoStop}%
\bibitem [{\citenamefont {Mujal}\ \emph {et~al.}(2023)\citenamefont {Mujal}, \citenamefont {Mart{\'{\i}}nez-Pe{\~{n}}a}, \citenamefont {Giorgi}, \citenamefont {Soriano},\ and\ \citenamefont {Zambrini}}]{MUJ23}%
  \BibitemOpen
  \bibfield  {author} {\bibinfo {author} {\bibfnamefont {P.}~\bibnamefont {Mujal}}, \bibinfo {author} {\bibfnamefont {R.}~\bibnamefont {Mart{\'{\i}}nez-Pe{\~{n}}a}}, \bibinfo {author} {\bibfnamefont {G.~L.}\ \bibnamefont {Giorgi}}, \bibinfo {author} {\bibfnamefont {M.~C.}\ \bibnamefont {Soriano}},\ and\ \bibinfo {author} {\bibfnamefont {R.}~\bibnamefont {Zambrini}},\ }\bibfield  {title} {\bibinfo {title} {Time-series quantum reservoir computing with weak and projective measurements},\ }\href {https://doi.org/10.1038/s41534-023-00682-z} {\bibfield  {journal} {\bibinfo  {journal} {npj Quantum Inf.}\ }\textbf {\bibinfo {volume} {9}},\ \bibinfo {pages} {16} (\bibinfo {year} {2023})}\BibitemShut {NoStop}%
\bibitem [{\citenamefont {Franceschetto}\ \emph {et~al.}(2024)\citenamefont {Franceschetto}, \citenamefont {P{\l}odzie{\'{n}}}, \citenamefont {Lewenstein}, \citenamefont {Ac{\'{\i}}n},\ and\ \citenamefont {Mujal}}]{FRA24}%
  \BibitemOpen
  \bibfield  {author} {\bibinfo {author} {\bibfnamefont {G.}~\bibnamefont {Franceschetto}}, \bibinfo {author} {\bibfnamefont {M.}~\bibnamefont {P{\l}odzie{\'{n}}}}, \bibinfo {author} {\bibfnamefont {M.}~\bibnamefont {Lewenstein}}, \bibinfo {author} {\bibfnamefont {A.}~\bibnamefont {Ac{\'{\i}}n}},\ and\ \bibinfo {author} {\bibfnamefont {P.}~\bibnamefont {Mujal}},\ }\bibfield  {title} {\bibinfo {title} {Harnessing quantum back-action for time-series processing},\ }\bibfield  {journal} {\bibinfo  {journal} {ArXiv}\ }\textbf {\bibinfo {volume} {2411.03979}},\ \href {https://doi.org/10.48550/arxiv.2411.03979} {10.48550/arxiv.2411.03979} (\bibinfo {year} {2024})\BibitemShut {NoStop}%
\bibitem [{\citenamefont {{\v C}indrak}\ \emph {et~al.}(2024{\natexlab{a}})\citenamefont {{\v C}indrak}, \citenamefont {Donvil}, \citenamefont {L{\"{u}}dge},\ and\ \citenamefont {Jaurigue}}]{CIN24}%
  \BibitemOpen
  \bibfield  {author} {\bibinfo {author} {\bibfnamefont {S.}~\bibnamefont {{\v C}indrak}}, \bibinfo {author} {\bibfnamefont {B.}~\bibnamefont {Donvil}}, \bibinfo {author} {\bibfnamefont {K.}~\bibnamefont {L{\"{u}}dge}},\ and\ \bibinfo {author} {\bibfnamefont {L.~C.}\ \bibnamefont {Jaurigue}},\ }\bibfield  {title} {\bibinfo {title} {Enhancing the performance of quantum reservoir computing and solving the time-complexity problem by artificial memory restriction},\ }\href {https://doi.org/10.1103/physrevresearch.6.013051} {\bibfield  {journal} {\bibinfo  {journal} {Phys. Rev. Res.}\ }\textbf {\bibinfo {volume} {6}},\ \bibinfo {pages} {013051} (\bibinfo {year} {2024}{\natexlab{a}})},\ \bibinfo {note} {arXiv 2306.12876}\BibitemShut {NoStop}%
\bibitem [{\citenamefont {Kobayashi}\ \emph {et~al.}(2024)\citenamefont {Kobayashi}, \citenamefont {Fujii},\ and\ \citenamefont {Yamamoto}}]{KOB24}%
  \BibitemOpen
  \bibfield  {author} {\bibinfo {author} {\bibfnamefont {K.}~\bibnamefont {Kobayashi}}, \bibinfo {author} {\bibfnamefont {K.}~\bibnamefont {Fujii}},\ and\ \bibinfo {author} {\bibfnamefont {N.}~\bibnamefont {Yamamoto}},\ }\bibfield  {title} {\bibinfo {title} {Feedback-{Driven} {Quantum} {Reservoir} {Computing} for {Time}-{Series} {Analysis}},\ }\href {https://doi.org/10.1103/prxquantum.5.040325} {\bibfield  {journal} {\bibinfo  {journal} {PRX Quantum}\ }\textbf {\bibinfo {volume} {5}},\ \bibinfo {pages} {040325} (\bibinfo {year} {2024})}\BibitemShut {NoStop}%
\bibitem [{\citenamefont {Ahmed}\ \emph {et~al.}(2024)\citenamefont {Ahmed}, \citenamefont {Tennie},\ and\ \citenamefont {Magri}}]{AHM24}%
  \BibitemOpen
  \bibfield  {author} {\bibinfo {author} {\bibfnamefont {O.}~\bibnamefont {Ahmed}}, \bibinfo {author} {\bibfnamefont {F.}~\bibnamefont {Tennie}},\ and\ \bibinfo {author} {\bibfnamefont {L.}~\bibnamefont {Magri}},\ }\bibfield  {title} {\bibinfo {title} {Prediction of chaotic dynamics and extreme events: {A} recurrence-free quantum reservoir computing approach},\ }\href {https://doi.org/10.1103/physrevresearch.6.043082} {\bibfield  {journal} {\bibinfo  {journal} {Phys. Rev. Research}\ }\textbf {\bibinfo {volume} {6}},\ \bibinfo {pages} {043082} (\bibinfo {year} {2024})}\BibitemShut {NoStop}%
\bibitem [{\citenamefont {Zhu}\ \emph {et~al.}(2024)\citenamefont {Zhu}, \citenamefont {Ehlers}, \citenamefont {Nurdin},\ and\ \citenamefont {Soh}}]{ZHU24a}%
  \BibitemOpen
  \bibfield  {author} {\bibinfo {author} {\bibfnamefont {C.}~\bibnamefont {Zhu}}, \bibinfo {author} {\bibfnamefont {P.~J.}\ \bibnamefont {Ehlers}}, \bibinfo {author} {\bibfnamefont {H.~I.}\ \bibnamefont {Nurdin}},\ and\ \bibinfo {author} {\bibfnamefont {D.}~\bibnamefont {Soh}},\ }\bibfield  {title} {\bibinfo {title} {Practical and scalable quantum reservoir computing},\ }\href {https://arxiv.org/abs/2405.04799} {\bibfield  {journal} {\bibinfo  {journal} {ArXiv}\ }\textbf {\bibinfo {volume} {2405.04799}} (\bibinfo {year} {2024})},\ \Eprint {https://arxiv.org/abs/2405.04799} {arXiv:2405.04799 [quant-ph]} \BibitemShut {NoStop}%
\bibitem [{\citenamefont {Sch{\"{u}}tte}\ \emph {et~al.}(2025)\citenamefont {Sch{\"{u}}tte}, \citenamefont {G{\"{o}}tting}, \citenamefont {M{\"{u}}ntinga}, \citenamefont {List},\ and\ \citenamefont {Gies}}]{SCH25}%
  \BibitemOpen
  \bibfield  {author} {\bibinfo {author} {\bibfnamefont {N.}~\bibnamefont {Sch{\"{u}}tte}}, \bibinfo {author} {\bibfnamefont {N.}~\bibnamefont {G{\"{o}}tting}}, \bibinfo {author} {\bibfnamefont {H.}~\bibnamefont {M{\"{u}}ntinga}}, \bibinfo {author} {\bibfnamefont {M.}~\bibnamefont {List}},\ and\ \bibinfo {author} {\bibfnamefont {C.}~\bibnamefont {Gies}},\ }\bibfield  {title} {\bibinfo {title} {Expressive {Limits} of {Quantum} {Reservoir} {Computing}},\ }\bibfield  {journal} {\bibinfo  {journal} {arXiv}\ }\href {https://doi.org/10.48550/arxiv.2501.15528} {10.48550/arxiv.2501.15528} (\bibinfo {year} {2025})\BibitemShut {NoStop}%
\bibitem [{\citenamefont {Lorenz}(1963)}]{LOR63}%
  \BibitemOpen
  \bibfield  {author} {\bibinfo {author} {\bibfnamefont {E.~N.}\ \bibnamefont {Lorenz}},\ }\bibfield  {title} {\bibinfo {title} {Deterministic nonperiodic flow},\ }\href@noop {} {\bibfield  {journal} {\bibinfo  {journal} {J. Atmos. Sci.}\ }\textbf {\bibinfo {volume} {20}},\ \bibinfo {pages} {130} (\bibinfo {year} {1963})}\BibitemShut {NoStop}%
\bibitem [{\citenamefont {Dambre}\ \emph {et~al.}(2012)\citenamefont {Dambre}, \citenamefont {Verstraeten}, \citenamefont {Schrauwen},\ and\ \citenamefont {Massar}}]{DAM12}%
  \BibitemOpen
  \bibfield  {author} {\bibinfo {author} {\bibfnamefont {J.}~\bibnamefont {Dambre}}, \bibinfo {author} {\bibfnamefont {D.}~\bibnamefont {Verstraeten}}, \bibinfo {author} {\bibfnamefont {B.}~\bibnamefont {Schrauwen}},\ and\ \bibinfo {author} {\bibfnamefont {S.}~\bibnamefont {Massar}},\ }\bibfield  {title} {\bibinfo {title} {Information processing capacity of dynamical systems},\ }\href {https://doi.org/10.1038/srep00514} {\bibfield  {journal} {\bibinfo  {journal} {Sci. Rep.}\ }\textbf {\bibinfo {volume} {2}},\ \bibinfo {pages} {514} (\bibinfo {year} {2012})}\BibitemShut {NoStop}%
\bibitem [{\citenamefont {Balasubramanian}\ \emph {et~al.}(2022)\citenamefont {Balasubramanian}, \citenamefont {Caputa}, \citenamefont {Magan},\ and\ \citenamefont {Wu}}]{BAL22}%
  \BibitemOpen
  \bibfield  {author} {\bibinfo {author} {\bibfnamefont {V.}~\bibnamefont {Balasubramanian}}, \bibinfo {author} {\bibfnamefont {P.}~\bibnamefont {Caputa}}, \bibinfo {author} {\bibfnamefont {J.~M.}\ \bibnamefont {Magan}},\ and\ \bibinfo {author} {\bibfnamefont {Q.}~\bibnamefont {Wu}},\ }\bibfield  {title} {\bibinfo {title} {Quantum chaos and the complexity of spread of states},\ }\href {https://doi.org/10.1103/physrevd.106.046007} {\bibfield  {journal} {\bibinfo  {journal} {Phys. Rev. D}\ }\textbf {\bibinfo {volume} {106}},\ \bibinfo {pages} {046007} (\bibinfo {year} {2022})}\BibitemShut {NoStop}%
\bibitem [{\citenamefont {Parker}\ \emph {et~al.}(2019)\citenamefont {Parker}, \citenamefont {Cao}, \citenamefont {Avdoshkin}, \citenamefont {Scaffidi},\ and\ \citenamefont {Altman}}]{PAR19}%
  \BibitemOpen
  \bibfield  {author} {\bibinfo {author} {\bibfnamefont {D.~E.}\ \bibnamefont {Parker}}, \bibinfo {author} {\bibfnamefont {X.}~\bibnamefont {Cao}}, \bibinfo {author} {\bibfnamefont {A.}~\bibnamefont {Avdoshkin}}, \bibinfo {author} {\bibfnamefont {T.}~\bibnamefont {Scaffidi}},\ and\ \bibinfo {author} {\bibfnamefont {E.}~\bibnamefont {Altman}},\ }\bibfield  {title} {\bibinfo {title} {A universal operator growth hypothesis},\ }\href {https://doi.org/10.1103/physrevx.9.041017} {\bibfield  {journal} {\bibinfo  {journal} {Phys. Rev. X}\ }\textbf {\bibinfo {volume} {9}},\ \bibinfo {pages} {041017} (\bibinfo {year} {2019})}\BibitemShut {NoStop}%
\bibitem [{\citenamefont {{\v C}indrak}\ \emph {et~al.}(2025)\citenamefont {{\v C}indrak}, \citenamefont {L{\"{u}}dge},\ and\ \citenamefont {Jaurigue}}]{CIN25}%
  \BibitemOpen
  \bibfield  {author} {\bibinfo {author} {\bibfnamefont {S.}~\bibnamefont {{\v C}indrak}}, \bibinfo {author} {\bibfnamefont {K.}~\bibnamefont {L{\"{u}}dge}},\ and\ \bibinfo {author} {\bibfnamefont {L.}~\bibnamefont {Jaurigue}},\ }\bibfield  {title} {\bibinfo {title} {Quantum reservoir computing maps data onto the {Krylov} space},\ }\bibfield  {journal} {\bibinfo  {journal} {arXiv}\ }\href {https://doi.org/10.48550/arxiv.2502.12157} {10.48550/arxiv.2502.12157} (\bibinfo {year} {2025}),\ \bibinfo {note} {arXiv:2502.12157 [quant-ph]}\BibitemShut {NoStop}%
\bibitem [{\citenamefont {{\v C}indrak}\ \emph {et~al.}(2024{\natexlab{b}})\citenamefont {{\v C}indrak}, \citenamefont {Paschke}, \citenamefont {Jaurigue},\ and\ \citenamefont {L{\"{u}}dge}}]{CIN24a}%
  \BibitemOpen
  \bibfield  {author} {\bibinfo {author} {\bibfnamefont {S.}~\bibnamefont {{\v C}indrak}}, \bibinfo {author} {\bibfnamefont {A.}~\bibnamefont {Paschke}}, \bibinfo {author} {\bibfnamefont {L.~C.}\ \bibnamefont {Jaurigue}},\ and\ \bibinfo {author} {\bibfnamefont {K.}~\bibnamefont {L{\"{u}}dge}},\ }\bibfield  {title} {\bibinfo {title} {Measurable {Krylov} {Spaces} and {Eigenenergy} {Count} in {Quantum} {State} {Dynamics}},\ }\href {https://doi.org/10.1007/JHEP10(2024)083} {\bibfield  {journal} {\bibinfo  {journal} {J. High Energ. Phys.}\ }\textbf {\bibinfo {volume} {2024}},\ \bibinfo {pages} {83}}\BibitemShut {NoStop}%
\bibitem [{\citenamefont {Alishahiha}\ and\ \citenamefont {Banerjee}(2023)}]{ALI23}%
  \BibitemOpen
  \bibfield  {author} {\bibinfo {author} {\bibfnamefont {M.}~\bibnamefont {Alishahiha}}\ and\ \bibinfo {author} {\bibfnamefont {S.}~\bibnamefont {Banerjee}},\ }\bibfield  {title} {\bibinfo {title} {A universal approach to {Krylov} state and operator complexities},\ }\href {https://doi.org/10.21468/scipostphys.15.3.080} {\bibfield  {journal} {\bibinfo  {journal} {SciPost Phys.}\ }\textbf {\bibinfo {volume} {15}},\ \bibinfo {pages} {080} (\bibinfo {year} {2023})}\BibitemShut {NoStop}%
\bibitem [{\citenamefont {Bhattacharya}\ \emph {et~al.}(2022)\citenamefont {Bhattacharya}, \citenamefont {Nandy}, \citenamefont {Nath},\ and\ \citenamefont {Sahu}}]{BHA22a}%
  \BibitemOpen
  \bibfield  {author} {\bibinfo {author} {\bibfnamefont {A.}~\bibnamefont {Bhattacharya}}, \bibinfo {author} {\bibfnamefont {P.}~\bibnamefont {Nandy}}, \bibinfo {author} {\bibfnamefont {P.~P.}\ \bibnamefont {Nath}},\ and\ \bibinfo {author} {\bibfnamefont {H.}~\bibnamefont {Sahu}},\ }\bibfield  {title} {\bibinfo {title} {Operator growth and {Krylov} construction in dissipative open quantum systems},\ }\href {https://doi.org/10.1007/jhep12(2022)081} {\bibfield  {journal} {\bibinfo  {journal} {J. High Energy Phys.}\ }\textbf {\bibinfo {volume} {2022}}\bibfield  {number} {\bibinfo  {number} { (12)},\ \bibinfo {pages} {81}},\ }\bibinfo {note} {arXiv:2207.05347 [cond-mat, physics:hep-th, physics:quant-ph]}\BibitemShut {NoStop}%
\bibitem [{\citenamefont {Afrasiar}\ \emph {et~al.}(2023)\citenamefont {Afrasiar}, \citenamefont {Basak}, \citenamefont {Dey},\ and\ \citenamefont {Pal}}]{AFR23}%
  \BibitemOpen
  \bibfield  {author} {\bibinfo {author} {\bibfnamefont {M.}~\bibnamefont {Afrasiar}}, \bibinfo {author} {\bibfnamefont {J.~K.}\ \bibnamefont {Basak}}, \bibinfo {author} {\bibfnamefont {B.}~\bibnamefont {Dey}},\ and\ \bibinfo {author} {\bibfnamefont {K.}~\bibnamefont {Pal}},\ }\bibfield  {title} {\bibinfo {title} {Time evolution of spread complexity in quenched {Lipkin}–{Meshkov}–{Glick} model},\ }\href {https://doi.org/10.1088/1742-5468/ad0032} {\bibfield  {journal} {\bibinfo  {journal} {J. Stat. Mech.}\ }\textbf {\bibinfo {volume} {2023}},\ \bibinfo {pages} {103101} (\bibinfo {year} {2023})},\ \bibinfo {note} {publisher: IOP Publishing}\BibitemShut {NoStop}%
\bibitem [{\citenamefont {Anegawa}\ \emph {et~al.}(2024)\citenamefont {Anegawa}, \citenamefont {Iizuka},\ and\ \citenamefont {Nishida}}]{ANE24}%
  \BibitemOpen
  \bibfield  {author} {\bibinfo {author} {\bibfnamefont {T.}~\bibnamefont {Anegawa}}, \bibinfo {author} {\bibfnamefont {N.}~\bibnamefont {Iizuka}},\ and\ \bibinfo {author} {\bibfnamefont {M.}~\bibnamefont {Nishida}},\ }\bibfield  {title} {\bibinfo {title} {Krylov complexity as an order parameter for deconfinement phase transitions at large {N}},\ }\href {https://doi.org/10.1007/jhep04(2024)119} {\bibfield  {journal} {\bibinfo  {journal} {J. High Energy Phys.}\ }\textbf {\bibinfo {volume} {2024}}\bibinfo  {number} { (4)},\ \bibinfo {pages} {119}}\BibitemShut {NoStop}%
\bibitem [{\citenamefont {Baek}(2022)}]{BAE22}%
  \BibitemOpen
\bibfield  {number} {  }\bibfield  {author} {\bibinfo {author} {\bibfnamefont {S.}~\bibnamefont {Baek}},\ }\bibfield  {title} {\bibinfo {title} {Krylov complexity in inverted harmonic oscillator},\ }\bibfield  {journal} {\bibinfo  {journal} {arXiv}\ }\href {https://doi.org/10.48550/arxiv.2210.06815} {10.48550/arxiv.2210.06815} (\bibinfo {year} {2022}),\ \bibinfo {note} {arXiv:2210.06815 [hep-th, physics:nlin, physics:quant-ph]}\BibitemShut {NoStop}%
\bibitem [{\citenamefont {Barb{\'{o}}n}\ \emph {et~al.}(2019)\citenamefont {Barb{\'{o}}n}, \citenamefont {Rabinovici}, \citenamefont {Shir},\ and\ \citenamefont {Sinha}}]{BAR19}%
  \BibitemOpen
  \bibfield  {author} {\bibinfo {author} {\bibfnamefont {J.~L.~F.}\ \bibnamefont {Barb{\'{o}}n}}, \bibinfo {author} {\bibfnamefont {E.}~\bibnamefont {Rabinovici}}, \bibinfo {author} {\bibfnamefont {R.}~\bibnamefont {Shir}},\ and\ \bibinfo {author} {\bibfnamefont {R.}~\bibnamefont {Sinha}},\ }\bibfield  {title} {\bibinfo {title} {On the evolution of operator complexity beyond scrambling},\ }\href {https://doi.org/10.1007/jhep10(2019)264} {\bibfield  {journal} {\bibinfo  {journal} {J. High Energy Phys.}\ }\textbf {\bibinfo {volume} {2019}}\bibinfo  {number} { (10)},\ \bibinfo {pages} {264}}\BibitemShut {NoStop}%
\bibitem [{\citenamefont {Bhattacharjee}\ \emph {et~al.}(2024)\citenamefont {Bhattacharjee}, \citenamefont {Nandy},\ and\ \citenamefont {Pathak}}]{BHA24a}%
  \BibitemOpen
\bibfield  {number} {  }\bibfield  {author} {\bibinfo {author} {\bibfnamefont {B.}~\bibnamefont {Bhattacharjee}}, \bibinfo {author} {\bibfnamefont {P.}~\bibnamefont {Nandy}},\ and\ \bibinfo {author} {\bibfnamefont {T.}~\bibnamefont {Pathak}},\ }\bibfield  {title} {\bibinfo {title} {Operator dynamics in {Lindbladian} {SYK}: a {Krylov} complexity perspective},\ }\href {https://doi.org/10.1007/jhep01(2024)094} {\bibfield  {journal} {\bibinfo  {journal} {J. High Energy Phys.}\ }\textbf {\bibinfo {volume} {2024}}\bibfield  {number} {\bibinfo  {number} { (1)},\ \bibinfo {pages} {94}},\ }\bibinfo {note} {arXiv:2311.00753 [cond-mat, physics:hep-th, physics:quant-ph]}\BibitemShut {NoStop}%
\bibitem [{\citenamefont {Camargo}\ \emph {et~al.}(2023)\citenamefont {Camargo}, \citenamefont {Jahnke}, \citenamefont {Kim},\ and\ \citenamefont {Nishida}}]{CAM23}%
  \BibitemOpen
  \bibfield  {author} {\bibinfo {author} {\bibfnamefont {H.~A.}\ \bibnamefont {Camargo}}, \bibinfo {author} {\bibfnamefont {V.}~\bibnamefont {Jahnke}}, \bibinfo {author} {\bibfnamefont {K.}~\bibnamefont {Kim}},\ and\ \bibinfo {author} {\bibfnamefont {M.}~\bibnamefont {Nishida}},\ }\bibfield  {title} {\bibinfo {title} {Krylov complexity in free and interacting scalar field theories with bounded power spectrum},\ }\href {https://doi.org/10.1007/jhep05(2023)226} {\bibfield  {journal} {\bibinfo  {journal} {J. High Energy Phys.}\ }\textbf {\bibinfo {volume} {2023}}\bibinfo  {number} { (5)},\ \bibinfo {pages} {226}}\BibitemShut {NoStop}%
\bibitem [{\citenamefont {Cao}(2021)}]{CAO21}%
  \BibitemOpen
\bibfield  {number} {  }\bibfield  {author} {\bibinfo {author} {\bibfnamefont {X.}~\bibnamefont {Cao}},\ }\bibfield  {title} {\bibinfo {title} {A statistical mechanism for operator growth},\ }\href {https://doi.org/10.1088/1751-8121/abe77c} {\bibfield  {journal} {\bibinfo  {journal} {J. Phys. A Math. Theor.}\ }\textbf {\bibinfo {volume} {54}},\ \bibinfo {pages} {144001} (\bibinfo {year} {2021})},\ \bibinfo {note} {publisher: IOP Publishing}\BibitemShut {NoStop}%
\bibitem [{\citenamefont {Caputa}\ and\ \citenamefont {Datta}(2021)}]{CAP21}%
  \BibitemOpen
  \bibfield  {author} {\bibinfo {author} {\bibfnamefont {P.}~\bibnamefont {Caputa}}\ and\ \bibinfo {author} {\bibfnamefont {S.}~\bibnamefont {Datta}},\ }\bibfield  {title} {\bibinfo {title} {Operator growth in 2d {CFT}},\ }\href {https://doi.org/10.1007/jhep12(2021)188} {\bibfield  {journal} {\bibinfo  {journal} {J. High Energy Phys.}\ }\textbf {\bibinfo {volume} {2021}}\bibfield  {number} {\bibinfo  {number} { (12)},\ \bibinfo {pages} {188}},\ }\bibinfo {note} {arXiv:2110.10519 [cond-mat, physics:hep-th, physics:quant-ph]}\BibitemShut {NoStop}%
\bibitem [{\citenamefont {Chattopadhyay}\ \emph {et~al.}(2023)\citenamefont {Chattopadhyay}, \citenamefont {Mitra},\ and\ \citenamefont {Van~Zyl}}]{CHA23}%
  \BibitemOpen
  \bibfield  {author} {\bibinfo {author} {\bibfnamefont {A.}~\bibnamefont {Chattopadhyay}}, \bibinfo {author} {\bibfnamefont {A.}~\bibnamefont {Mitra}},\ and\ \bibinfo {author} {\bibfnamefont {H.~J.~R.}\ \bibnamefont {Van~Zyl}},\ }\bibfield  {title} {\bibinfo {title} {Spread complexity as classical dilaton solutions},\ }\href {https://doi.org/10.1103/physrevd.108.025013} {\bibfield  {journal} {\bibinfo  {journal} {Phys. Rev. D}\ }\textbf {\bibinfo {volume} {108}},\ \bibinfo {pages} {025013} (\bibinfo {year} {2023})}\BibitemShut {NoStop}%
\bibitem [{\citenamefont {Dymarsky}\ and\ \citenamefont {Smolkin}(2021)}]{DYM21}%
  \BibitemOpen
  \bibfield  {author} {\bibinfo {author} {\bibfnamefont {A.}~\bibnamefont {Dymarsky}}\ and\ \bibinfo {author} {\bibfnamefont {M.}~\bibnamefont {Smolkin}},\ }\bibfield  {title} {\bibinfo {title} {Krylov complexity in conformal field theory},\ }\href {https://doi.org/10.1103/physrevd.104.l081702} {\bibfield  {journal} {\bibinfo  {journal} {Phys. Rev. D}\ }\textbf {\bibinfo {volume} {104}},\ \bibinfo {pages} {L081702} (\bibinfo {year} {2021})}\BibitemShut {NoStop}%
\bibitem [{\citenamefont {Fan}(2022)}]{FAN22}%
  \BibitemOpen
  \bibfield  {author} {\bibinfo {author} {\bibfnamefont {Z.}~\bibnamefont {Fan}},\ }\bibfield  {title} {\bibinfo {title} {Universal relation for operator complexity},\ }\href {https://doi.org/10.1103/physreva.105.062210} {\bibfield  {journal} {\bibinfo  {journal} {Phys. Rev. A}\ }\textbf {\bibinfo {volume} {105}},\ \bibinfo {pages} {062210} (\bibinfo {year} {2022})},\ \bibinfo {note} {arXiv:2202.07220 [hep-th, physics:quant-ph]}\BibitemShut {NoStop}%
\bibitem [{\citenamefont {Guo}(2022)}]{GUO22A}%
  \BibitemOpen
  \bibfield  {author} {\bibinfo {author} {\bibfnamefont {S.}~\bibnamefont {Guo}},\ }\bibfield  {title} {\bibinfo {title} {Operator growth in {SU}(2) {Yang}-{Mills} theory},\ }\href {http://arxiv.org/abs/2208.13362} {\bibfield  {journal} {\bibinfo  {journal} {arXiv}\ } (\bibinfo {year} {2022})},\ \bibinfo {note} {arXiv:2208.13362 [hep-th]}\BibitemShut {NoStop}%
\bibitem [{\citenamefont {Hashimoto}\ \emph {et~al.}(2023)\citenamefont {Hashimoto}, \citenamefont {Murata}, \citenamefont {Tanahashi},\ and\ \citenamefont {Watanabe}}]{HAS23}%
  \BibitemOpen
  \bibfield  {author} {\bibinfo {author} {\bibfnamefont {K.}~\bibnamefont {Hashimoto}}, \bibinfo {author} {\bibfnamefont {K.}~\bibnamefont {Murata}}, \bibinfo {author} {\bibfnamefont {N.}~\bibnamefont {Tanahashi}},\ and\ \bibinfo {author} {\bibfnamefont {R.}~\bibnamefont {Watanabe}},\ }\bibfield  {title} {\bibinfo {title} {Krylov complexity and chaos in quantum mechanics},\ }\href {https://doi.org/10.1007/jhep11(2023)040} {\bibfield  {journal} {\bibinfo  {journal} {J. High Energy Phys.}\ }\textbf {\bibinfo {volume} {2023}}\bibinfo  {number} { (11)},\ \bibinfo {pages} {40}}\BibitemShut {NoStop}%
\bibitem [{\citenamefont {He}\ \emph {et~al.}(2022)\citenamefont {He}, \citenamefont {Lau}, \citenamefont {Xian},\ and\ \citenamefont {Zhao}}]{HE22}%
  \BibitemOpen
\bibfield  {number} {  }\bibfield  {author} {\bibinfo {author} {\bibfnamefont {S.}~\bibnamefont {He}}, \bibinfo {author} {\bibfnamefont {P.~H.~C.}\ \bibnamefont {Lau}}, \bibinfo {author} {\bibfnamefont {Z.}~\bibnamefont {Xian}},\ and\ \bibinfo {author} {\bibfnamefont {L.}~\bibnamefont {Zhao}},\ }\bibfield  {title} {\bibinfo {title} {Quantum chaos, scrambling and operator growth in \$\$ {T}{\textbackslash}overline\{{T}\} \$\$deformed {SYK} models},\ }\href {https://doi.org/10.1007/jhep12(2022)070} {\bibfield  {journal} {\bibinfo  {journal} {J. High Energy Phys.}\ }\textbf {\bibinfo {volume} {2022}}\bibinfo  {number} { (12)},\ \bibinfo {pages} {70}}\BibitemShut {NoStop}%
\bibitem [{\citenamefont {Heveling}\ \emph {et~al.}(2022)\citenamefont {Heveling}, \citenamefont {Wang},\ and\ \citenamefont {Gemmer}}]{HEV22}%
  \BibitemOpen
\bibfield  {number} {  }\bibfield  {author} {\bibinfo {author} {\bibfnamefont {R.}~\bibnamefont {Heveling}}, \bibinfo {author} {\bibfnamefont {J.}~\bibnamefont {Wang}},\ and\ \bibinfo {author} {\bibfnamefont {J.}~\bibnamefont {Gemmer}},\ }\bibfield  {title} {\bibinfo {title} {Numerically {Probing} the {Universal} {Operator} {Growth} {Hypothesis}},\ }\href {https://doi.org/10.1103/physreve.106.014152} {\bibfield  {journal} {\bibinfo  {journal} {Phys. Rev. E}\ }\textbf {\bibinfo {volume} {106}},\ \bibinfo {pages} {014152} (\bibinfo {year} {2022})},\ \bibinfo {note} {arXiv:2203.00533 [cond-mat, physics:quant-ph]}\BibitemShut {NoStop}%
\bibitem [{\citenamefont {Iizuka}\ and\ \citenamefont {Nishida}(2023)}]{IIZ23}%
  \BibitemOpen
  \bibfield  {author} {\bibinfo {author} {\bibfnamefont {N.}~\bibnamefont {Iizuka}}\ and\ \bibinfo {author} {\bibfnamefont {M.}~\bibnamefont {Nishida}},\ }\bibfield  {title} {\bibinfo {title} {Krylov complexity in the {IP} matrix model},\ }\href {http://arxiv.org/abs/2306.04805} {\bibfield  {journal} {\bibinfo  {journal} {arXiv}\ } (\bibinfo {year} {2023})},\ \bibinfo {note} {arXiv:2306.04805 [hep-th, physics:quant-ph]}\BibitemShut {NoStop}%
\bibitem [{\citenamefont {Jian}\ \emph {et~al.}(2021)\citenamefont {Jian}, \citenamefont {Swingle},\ and\ \citenamefont {Xian}}]{JIA21}%
  \BibitemOpen
  \bibfield  {author} {\bibinfo {author} {\bibfnamefont {S.}~\bibnamefont {Jian}}, \bibinfo {author} {\bibfnamefont {B.}~\bibnamefont {Swingle}},\ and\ \bibinfo {author} {\bibfnamefont {Z.}~\bibnamefont {Xian}},\ }\bibfield  {title} {\bibinfo {title} {Complexity growth of operators in the {SYK} model and in {JT} gravity},\ }\href {https://doi.org/10.1007/jhep03(2021)014} {\bibfield  {journal} {\bibinfo  {journal} {J. High Energy Phys.}\ }\textbf {\bibinfo {volume} {2021}}\bibinfo  {number} { (3)},\ \bibinfo {pages} {14}}\BibitemShut {NoStop}%
\bibitem [{\citenamefont {Kim}\ \emph {et~al.}(2022)\citenamefont {Kim}, \citenamefont {Murugan}, \citenamefont {Olle},\ and\ \citenamefont {Rosa}}]{KIM22}%
  \BibitemOpen
\bibfield  {number} {  }\bibfield  {author} {\bibinfo {author} {\bibfnamefont {J.}~\bibnamefont {Kim}}, \bibinfo {author} {\bibfnamefont {J.}~\bibnamefont {Murugan}}, \bibinfo {author} {\bibfnamefont {J.}~\bibnamefont {Olle}},\ and\ \bibinfo {author} {\bibfnamefont {D.}~\bibnamefont {Rosa}},\ }\bibfield  {title} {\bibinfo {title} {Operator delocalization in quantum networks},\ }\href {https://doi.org/10.1103/physreva.105.l010201} {\bibfield  {journal} {\bibinfo  {journal} {Phys. Rev. A}\ }\textbf {\bibinfo {volume} {105}},\ \bibinfo {pages} {L010201} (\bibinfo {year} {2022})},\ \bibinfo {note} {publisher: American Physical Society}\BibitemShut {NoStop}%
\bibitem [{\citenamefont {Li}\ \emph {et~al.}(2024)\citenamefont {Li}, \citenamefont {Zhu}, \citenamefont {Zhao}, \citenamefont {Duan}, \citenamefont {Zhao}, \citenamefont {Zhang}, \citenamefont {Ma}, \citenamefont {Sun},\ and\ \citenamefont {Lin}}]{LI24}%
  \BibitemOpen
  \bibfield  {author} {\bibinfo {author} {\bibfnamefont {X.}~\bibnamefont {Li}}, \bibinfo {author} {\bibfnamefont {Q.}~\bibnamefont {Zhu}}, \bibinfo {author} {\bibfnamefont {C.}~\bibnamefont {Zhao}}, \bibinfo {author} {\bibfnamefont {X.}~\bibnamefont {Duan}}, \bibinfo {author} {\bibfnamefont {B.}~\bibnamefont {Zhao}}, \bibinfo {author} {\bibfnamefont {X.}~\bibnamefont {Zhang}}, \bibinfo {author} {\bibfnamefont {H.}~\bibnamefont {Ma}}, \bibinfo {author} {\bibfnamefont {J.}~\bibnamefont {Sun}},\ and\ \bibinfo {author} {\bibfnamefont {W.}~\bibnamefont {Lin}},\ }\bibfield  {title} {\bibinfo {title} {Higher-order granger reservoir computing: simultaneously achieving scalable complex structures inference and accurate dynamics prediction},\ }\href {https://doi.org/10.1038/s41467-024-46852-1} {\bibfield  {journal} {\bibinfo  {journal} {Nat. Commun.}\ }\textbf {\bibinfo {volume} {15}},\ \bibinfo {pages} {2506} (\bibinfo {year} {2024})}\BibitemShut {NoStop}%
\bibitem [{\citenamefont {Liu}\ \emph {et~al.}(2023)\citenamefont {Liu}, \citenamefont {Tang},\ and\ \citenamefont {Zhai}}]{LIU23b}%
  \BibitemOpen
  \bibfield  {author} {\bibinfo {author} {\bibfnamefont {C.}~\bibnamefont {Liu}}, \bibinfo {author} {\bibfnamefont {H.}~\bibnamefont {Tang}},\ and\ \bibinfo {author} {\bibfnamefont {H.}~\bibnamefont {Zhai}},\ }\bibfield  {title} {\bibinfo {title} {Krylov complexity in open quantum systems},\ }\href {https://doi.org/10.1103/physrevresearch.5.033085} {\bibfield  {journal} {\bibinfo  {journal} {Phys. Rev. Res.}\ }\textbf {\bibinfo {volume} {5}},\ \bibinfo {pages} {033085} (\bibinfo {year} {2023})}\BibitemShut {NoStop}%
\bibitem [{\citenamefont {Magan}\ and\ \citenamefont {Sim{\'{o}}n}(2020)}]{MAG20}%
  \BibitemOpen
  \bibfield  {author} {\bibinfo {author} {\bibfnamefont {J.~M.}\ \bibnamefont {Magan}}\ and\ \bibinfo {author} {\bibfnamefont {J.}~\bibnamefont {Sim{\'{o}}n}},\ }\bibfield  {title} {\bibinfo {title} {On operator growth and emergent {Poincar{\'{e}}} symmetries},\ }\href {https://doi.org/10.1007/jhep05(2020)071} {\bibfield  {journal} {\bibinfo  {journal} {J. High Energy Phys.}\ }\textbf {\bibinfo {volume} {2020}}\bibinfo  {number} { (5)},\ \bibinfo {pages} {71}}\BibitemShut {NoStop}%
\bibitem [{\citenamefont {M{\"{u}}ck}\ and\ \citenamefont {Yang}(2022)}]{MUC22}%
  \BibitemOpen
\bibfield  {number} {  }\bibfield  {author} {\bibinfo {author} {\bibfnamefont {W.}~\bibnamefont {M{\"{u}}ck}}\ and\ \bibinfo {author} {\bibfnamefont {Y.}~\bibnamefont {Yang}},\ }\bibfield  {title} {\bibinfo {title} {Krylov complexity and orthogonal polynomials},\ }\href {https://doi.org/10.1016/j.nuclphysb.2022.115948} {\bibfield  {journal} {\bibinfo  {journal} {Nucl. Phys. B.}\ }\textbf {\bibinfo {volume} {984}},\ \bibinfo {pages} {115948} (\bibinfo {year} {2022})}\BibitemShut {NoStop}%
\bibitem [{\citenamefont {Nizami}\ and\ \citenamefont {Shrestha}(2023)}]{NIZ23}%
  \BibitemOpen
  \bibfield  {author} {\bibinfo {author} {\bibfnamefont {A.~A.}\ \bibnamefont {Nizami}}\ and\ \bibinfo {author} {\bibfnamefont {A.~W.}\ \bibnamefont {Shrestha}},\ }\bibfield  {title} {\bibinfo {title} {Krylov construction and complexity for driven quantum systems},\ }\href {https://doi.org/10.1103/physreve.108.054222} {\bibfield  {journal} {\bibinfo  {journal} {Phys. Rev. E}\ }\textbf {\bibinfo {volume} {108}},\ \bibinfo {pages} {054222} (\bibinfo {year} {2023})},\ \bibinfo {note} {publisher: American Physical Society}\BibitemShut {NoStop}%
\bibitem [{\citenamefont {Patramanis}(2022)}]{PAT22}%
  \BibitemOpen
  \bibfield  {author} {\bibinfo {author} {\bibfnamefont {D.}~\bibnamefont {Patramanis}},\ }\bibfield  {title} {\bibinfo {title} {Probing the entanglement of operator growth},\ }\href {https://doi.org/10.1093/ptep/ptac081} {\bibfield  {journal} {\bibinfo  {journal} {Prog. Theor. Phys.}\ }\textbf {\bibinfo {volume} {2022}},\ \bibinfo {pages} {063A01} (\bibinfo {year} {2022})}\BibitemShut {NoStop}%
\bibitem [{\citenamefont {Rabinovici}\ \emph {et~al.}(2022{\natexlab{a}})\citenamefont {Rabinovici}, \citenamefont {S{\'{a}}nchez-Garrido}, \citenamefont {Shir},\ and\ \citenamefont {Sonner}}]{RAB22}%
  \BibitemOpen
  \bibfield  {author} {\bibinfo {author} {\bibfnamefont {E.}~\bibnamefont {Rabinovici}}, \bibinfo {author} {\bibfnamefont {A.}~\bibnamefont {S{\'{a}}nchez-Garrido}}, \bibinfo {author} {\bibfnamefont {R.}~\bibnamefont {Shir}},\ and\ \bibinfo {author} {\bibfnamefont {J.}~\bibnamefont {Sonner}},\ }\bibfield  {title} {\bibinfo {title} {Krylov localization and suppression of complexity},\ }\href {https://doi.org/10.1007/jhep03(2022)211} {\bibfield  {journal} {\bibinfo  {journal} {J. High Energy Phys.}\ }\textbf {\bibinfo {volume} {2022}}\bibinfo  {number} { (3)},\ \bibinfo {pages} {211}}\BibitemShut {NoStop}%
\bibitem [{\citenamefont {Rabinovici}\ \emph {et~al.}(2022{\natexlab{b}})\citenamefont {Rabinovici}, \citenamefont {S{\'{a}}nchez-Garrido}, \citenamefont {Shir},\ and\ \citenamefont {Sonner}}]{RAB22a}%
  \BibitemOpen
\bibfield  {number} {  }\bibfield  {author} {\bibinfo {author} {\bibfnamefont {E.}~\bibnamefont {Rabinovici}}, \bibinfo {author} {\bibfnamefont {A.}~\bibnamefont {S{\'{a}}nchez-Garrido}}, \bibinfo {author} {\bibfnamefont {R.}~\bibnamefont {Shir}},\ and\ \bibinfo {author} {\bibfnamefont {J.}~\bibnamefont {Sonner}},\ }\bibfield  {title} {\bibinfo {title} {Krylov complexity from integrability to chaos},\ }\href {https://doi.org/10.1007/jhep07(2022)151} {\bibfield  {journal} {\bibinfo  {journal} {J. High Energy Phys.}\ }\textbf {\bibinfo {volume} {2022}}\bibinfo  {number} { (7)},\ \bibinfo {pages} {151}}\BibitemShut {NoStop}%
\bibitem [{\citenamefont {Vasli}\ \emph {et~al.}(2024)\citenamefont {Vasli}, \citenamefont {Velni}, \citenamefont {Mozaffar}, \citenamefont {Mollabashi},\ and\ \citenamefont {Alishahiha}}]{VAS24}%
  \BibitemOpen
\bibfield  {number} {  }\bibfield  {author} {\bibinfo {author} {\bibfnamefont {M.~J.}\ \bibnamefont {Vasli}}, \bibinfo {author} {\bibfnamefont {K.~B.}\ \bibnamefont {Velni}}, \bibinfo {author} {\bibfnamefont {M.~R.~M.}\ \bibnamefont {Mozaffar}}, \bibinfo {author} {\bibfnamefont {A.}~\bibnamefont {Mollabashi}},\ and\ \bibinfo {author} {\bibfnamefont {M.}~\bibnamefont {Alishahiha}},\ }\bibfield  {title} {\bibinfo {title} {Krylov complexity in {Lifshitz}-type scalar field theories},\ }\href {https://doi.org/10.1140/epjc/s10052-024-12609-9} {\bibfield  {journal} {\bibinfo  {journal} {EOJ C}\ }\textbf {\bibinfo {volume} {84}},\ \bibinfo {pages} {235} (\bibinfo {year} {2024})}\BibitemShut {NoStop}%
\bibitem [{\citenamefont {Caputa}\ \emph {et~al.}(2024)\citenamefont {Caputa}, \citenamefont {Jeong}, \citenamefont {Yi}, \citenamefont {Pedraza},\ and\ \citenamefont {Qu}}]{CAP24}%
  \BibitemOpen
  \bibfield  {author} {\bibinfo {author} {\bibfnamefont {P.}~\bibnamefont {Caputa}}, \bibinfo {author} {\bibfnamefont {H.}~\bibnamefont {Jeong}}, \bibinfo {author} {\bibfnamefont {S.}~\bibnamefont {Yi}}, \bibinfo {author} {\bibfnamefont {J.~F.}\ \bibnamefont {Pedraza}},\ and\ \bibinfo {author} {\bibfnamefont {L.}~\bibnamefont {Qu}},\ }\bibfield  {title} {\bibinfo {title} {Krylov complexity of density matrix operators},\ }\href {https://doi.org/10.1007/jhep05(2024)337} {\bibfield  {journal} {\bibinfo  {journal} {J. High Energy Phys.}\ }\textbf {\bibinfo {volume} {2024}}\bibinfo  {number} { (5)},\ \bibinfo {pages} {337}}\BibitemShut {NoStop}%
\bibitem [{\citenamefont {Balasubramanian}\ \emph {et~al.}(2023{\natexlab{a}})\citenamefont {Balasubramanian}, \citenamefont {Magan},\ and\ \citenamefont {Wu}}]{BAL23}%
  \BibitemOpen
\bibfield  {number} {  }\bibfield  {author} {\bibinfo {author} {\bibfnamefont {V.}~\bibnamefont {Balasubramanian}}, \bibinfo {author} {\bibfnamefont {J.~M.}\ \bibnamefont {Magan}},\ and\ \bibinfo {author} {\bibfnamefont {Q.}~\bibnamefont {Wu}},\ }\bibfield  {title} {\bibinfo {title} {Tridiagonalizing random matrices},\ }\href {https://doi.org/10.1103/physrevd.107.126001} {\bibfield  {journal} {\bibinfo  {journal} {Phys. Rev. D}\ }\textbf {\bibinfo {volume} {107}},\ \bibinfo {pages} {126001} (\bibinfo {year} {2023}{\natexlab{a}})}\BibitemShut {NoStop}%
\bibitem [{\citenamefont {Craps}\ \emph {et~al.}(2023)\citenamefont {Craps}, \citenamefont {Evnin},\ and\ \citenamefont {Pascuzzi}}]{CRA23}%
  \BibitemOpen
  \bibfield  {author} {\bibinfo {author} {\bibfnamefont {B.}~\bibnamefont {Craps}}, \bibinfo {author} {\bibfnamefont {O.}~\bibnamefont {Evnin}},\ and\ \bibinfo {author} {\bibfnamefont {G.}~\bibnamefont {Pascuzzi}},\ }\bibfield  {title} {\bibinfo {title} {A relation between {Krylov} and {Nielsen} complexity},\ }\href {http://arxiv.org/abs/2311.18401} {\bibfield  {journal} {\bibinfo  {journal} {arXiv}\ } (\bibinfo {year} {2023})},\ \bibinfo {note} {arXiv:2311.18401 [hep-th, physics:quant-ph]}\BibitemShut {NoStop}%
\bibitem [{\citenamefont {Aguilar-Gutierrez}\ and\ \citenamefont {Rolph}(2024)}]{AGU24}%
  \BibitemOpen
  \bibfield  {author} {\bibinfo {author} {\bibfnamefont {S.~E.}\ \bibnamefont {Aguilar-Gutierrez}}\ and\ \bibinfo {author} {\bibfnamefont {A.}~\bibnamefont {Rolph}},\ }\bibfield  {title} {\bibinfo {title} {Krylov complexity is not a measure of distance between states or operators},\ }\href {http://arxiv.org/abs/2311.04093} {\bibfield  {journal} {\bibinfo  {journal} {arXiv}\ } (\bibinfo {year} {2024})},\ \bibinfo {note} {arXiv:2311.04093 [hep-th, physics:quant-ph]}\BibitemShut {NoStop}%
\bibitem [{\citenamefont {Caputa}\ and\ \citenamefont {Liu}(2022)}]{CAP22}%
  \BibitemOpen
  \bibfield  {author} {\bibinfo {author} {\bibfnamefont {P.}~\bibnamefont {Caputa}}\ and\ \bibinfo {author} {\bibfnamefont {S.}~\bibnamefont {Liu}},\ }\bibfield  {title} {\bibinfo {title} {Quantum complexity and topological phases of matter},\ }\href {https://doi.org/10.1103/physrevb.106.195125} {\bibfield  {journal} {\bibinfo  {journal} {Phys. Rev. B}\ }\textbf {\bibinfo {volume} {106}},\ \bibinfo {pages} {195125} (\bibinfo {year} {2022})}\BibitemShut {NoStop}%
\bibitem [{\citenamefont {Caputa}\ \emph {et~al.}(2022)\citenamefont {Caputa}, \citenamefont {Magan},\ and\ \citenamefont {Patramanis}}]{CAP22a}%
  \BibitemOpen
  \bibfield  {author} {\bibinfo {author} {\bibfnamefont {P.}~\bibnamefont {Caputa}}, \bibinfo {author} {\bibfnamefont {J.~M.}\ \bibnamefont {Magan}},\ and\ \bibinfo {author} {\bibfnamefont {D.}~\bibnamefont {Patramanis}},\ }\bibfield  {title} {\bibinfo {title} {Geometry of {Krylov} complexity},\ }\href {https://doi.org/10.1103/physrevresearch.4.013041} {\bibfield  {journal} {\bibinfo  {journal} {Phys. Rev. Research}\ }\textbf {\bibinfo {volume} {4}},\ \bibinfo {pages} {013041} (\bibinfo {year} {2022})}\BibitemShut {NoStop}%
\bibitem [{\citenamefont {Erdmenger}\ \emph {et~al.}(2023)\citenamefont {Erdmenger}, \citenamefont {Jian},\ and\ \citenamefont {Xian}}]{ERD23}%
  \BibitemOpen
  \bibfield  {author} {\bibinfo {author} {\bibfnamefont {J.}~\bibnamefont {Erdmenger}}, \bibinfo {author} {\bibfnamefont {S.}~\bibnamefont {Jian}},\ and\ \bibinfo {author} {\bibfnamefont {Z.}~\bibnamefont {Xian}},\ }\bibfield  {title} {\bibinfo {title} {Universal chaotic dynamics from {Krylov} space},\ }\href {https://doi.org/10.1007/jhep08(2023)176} {\bibfield  {journal} {\bibinfo  {journal} {J. High Energy Phys.}\ }\textbf {\bibinfo {volume} {2023}}\bibinfo  {number} { (8)},\ \bibinfo {pages} {176}}\BibitemShut {NoStop}%
\bibitem [{\citenamefont {Gilpin}(2023)}]{GIL23}%
  \BibitemOpen
\bibfield  {number} {  }\bibfield  {author} {\bibinfo {author} {\bibfnamefont {W.}~\bibnamefont {Gilpin}},\ }\bibfield  {title} {\bibinfo {title} {Model scale versus domain knowledge in statistical forecasting of chaotic systems},\ }\href {https://doi.org/10.1103/physrevresearch.5.043252} {\bibfield  {journal} {\bibinfo  {journal} {Phys. Rev. Res.}\ }\textbf {\bibinfo {volume} {5}},\ \bibinfo {pages} {043252} (\bibinfo {year} {2023})}\BibitemShut {NoStop}%
\bibitem [{\citenamefont {Nandy}\ \emph {et~al.}(2024{\natexlab{a}})\citenamefont {Nandy}, \citenamefont {Mukherjee}, \citenamefont {Bhattacharyya},\ and\ \citenamefont {Banerjee}}]{NAN24}%
  \BibitemOpen
  \bibfield  {author} {\bibinfo {author} {\bibfnamefont {S.}~\bibnamefont {Nandy}}, \bibinfo {author} {\bibfnamefont {B.}~\bibnamefont {Mukherjee}}, \bibinfo {author} {\bibfnamefont {A.}~\bibnamefont {Bhattacharyya}},\ and\ \bibinfo {author} {\bibfnamefont {A.}~\bibnamefont {Banerjee}},\ }\bibfield  {title} {\bibinfo {title} {Quantum state complexity meets many-body scars},\ }\href {https://doi.org/10.1088/1361-648x/ad1a7b} {\bibfield  {journal} {\bibinfo  {journal} {J. Phys. Condens. Matter}\ }\textbf {\bibinfo {volume} {36}},\ \bibinfo {pages} {155601} (\bibinfo {year} {2024}{\natexlab{a}})}\BibitemShut {NoStop}%
\bibitem [{\citenamefont {Pal}\ \emph {et~al.}(2023)\citenamefont {Pal}, \citenamefont {Gill},\ and\ \citenamefont {Sarkar}}]{PAL23}%
  \BibitemOpen
  \bibfield  {author} {\bibinfo {author} {\bibfnamefont {K.}~\bibnamefont {Pal}}, \bibinfo {author} {\bibfnamefont {A.}~\bibnamefont {Gill}},\ and\ \bibinfo {author} {\bibfnamefont {T.}~\bibnamefont {Sarkar}},\ }\bibfield  {title} {\bibinfo {title} {Time evolution of spread complexity and statistics of work done in quantum quenches},\ }\href {https://doi.org/10.1103/physrevb.108.104311} {\bibfield  {journal} {\bibinfo  {journal} {Phys. Rev. B}\ }\textbf {\bibinfo {volume} {108}},\ \bibinfo {pages} {104311} (\bibinfo {year} {2023})}\BibitemShut {NoStop}%
\bibitem [{\citenamefont {Bhattacharjee}\ \emph {et~al.}(2022)\citenamefont {Bhattacharjee}, \citenamefont {Sur},\ and\ \citenamefont {Nandy}}]{BHA22}%
  \BibitemOpen
  \bibfield  {author} {\bibinfo {author} {\bibfnamefont {B.}~\bibnamefont {Bhattacharjee}}, \bibinfo {author} {\bibfnamefont {S.}~\bibnamefont {Sur}},\ and\ \bibinfo {author} {\bibfnamefont {P.}~\bibnamefont {Nandy}},\ }\bibfield  {title} {\bibinfo {title} {Probing quantum scars and weak ergodicity breaking through quantum complexity},\ }\href {https://doi.org/10.1103/physrevb.106.205150} {\bibfield  {journal} {\bibinfo  {journal} {Phys. Rev. B}\ }\textbf {\bibinfo {volume} {106}},\ \bibinfo {pages} {205150} (\bibinfo {year} {2022})}\BibitemShut {NoStop}%
\bibitem [{\citenamefont {Gautam}\ \emph {et~al.}(2023)\citenamefont {Gautam}, \citenamefont {Pal}, \citenamefont {Pal}, \citenamefont {Gill}, \citenamefont {Jaiswal},\ and\ \citenamefont {Sarkar}}]{GAU23}%
  \BibitemOpen
  \bibfield  {author} {\bibinfo {author} {\bibfnamefont {M.}~\bibnamefont {Gautam}}, \bibinfo {author} {\bibfnamefont {K.}~\bibnamefont {Pal}}, \bibinfo {author} {\bibfnamefont {K.}~\bibnamefont {Pal}}, \bibinfo {author} {\bibfnamefont {A.}~\bibnamefont {Gill}}, \bibinfo {author} {\bibfnamefont {N.}~\bibnamefont {Jaiswal}},\ and\ \bibinfo {author} {\bibfnamefont {T.}~\bibnamefont {Sarkar}},\ }\bibfield  {title} {\bibinfo {title} {Spread complexity evolution in quenched interacting quantum systems},\ }\href {http://arxiv.org/abs/2308.00636} {\bibfield  {journal} {\bibinfo  {journal} {arXiv}\ } (\bibinfo {year} {2023})},\ \bibinfo {note} {arXiv:2308.00636 [cond-mat, physics:hep-th, physics:quant-ph]}\BibitemShut {NoStop}%
\bibitem [{\citenamefont {Nielsen}\ \emph {et~al.}(2006)\citenamefont {Nielsen}, \citenamefont {Dowling}, \citenamefont {Gu},\ and\ \citenamefont {Doherty}}]{NIE06a}%
  \BibitemOpen
  \bibfield  {author} {\bibinfo {author} {\bibfnamefont {M.~A.}\ \bibnamefont {Nielsen}}, \bibinfo {author} {\bibfnamefont {M.~R.}\ \bibnamefont {Dowling}}, \bibinfo {author} {\bibfnamefont {M.}~\bibnamefont {Gu}},\ and\ \bibinfo {author} {\bibfnamefont {A.~C.}\ \bibnamefont {Doherty}},\ }\bibfield  {title} {\bibinfo {title} {Quantum {Computation} as {Geometry}},\ }\href {https://doi.org/10.1126/science.1121541} {\bibfield  {journal} {\bibinfo  {journal} {Science}\ }\textbf {\bibinfo {volume} {311}},\ \bibinfo {pages} {1133} (\bibinfo {year} {2006})}\BibitemShut {NoStop}%
\bibitem [{\citenamefont {Bhattacharya}\ \emph {et~al.}(2024)\citenamefont {Bhattacharya}, \citenamefont {Das}, \citenamefont {Dey},\ and\ \citenamefont {Erdmenger}}]{BHA24}%
  \BibitemOpen
  \bibfield  {author} {\bibinfo {author} {\bibfnamefont {A.}~\bibnamefont {Bhattacharya}}, \bibinfo {author} {\bibfnamefont {R.~N.}\ \bibnamefont {Das}}, \bibinfo {author} {\bibfnamefont {B.}~\bibnamefont {Dey}},\ and\ \bibinfo {author} {\bibfnamefont {J.}~\bibnamefont {Erdmenger}},\ }\bibfield  {title} {\bibinfo {title} {Spread complexity for measurement-induced non-unitary dynamics and {Zeno} effect},\ }\href {https://doi.org/10.1007/jhep03(2024)179} {\bibfield  {journal} {\bibinfo  {journal} {J. High Energy Phys.}\ }\textbf {\bibinfo {volume} {2024}}\bibfield  {number} {\bibinfo  {number} { (3)},\ \bibinfo {pages} {179}},\ }\bibinfo {note} {arXiv:2312.11635 [cond-mat, physics:hep-th, physics:quant-ph]}\BibitemShut {NoStop}%
\bibitem [{\citenamefont {Balasubramanian}\ \emph {et~al.}(2023{\natexlab{b}})\citenamefont {Balasubramanian}, \citenamefont {Magan},\ and\ \citenamefont {Wu}}]{BAL23a}%
  \BibitemOpen
  \bibfield  {author} {\bibinfo {author} {\bibfnamefont {V.}~\bibnamefont {Balasubramanian}}, \bibinfo {author} {\bibfnamefont {J.~M.}\ \bibnamefont {Magan}},\ and\ \bibinfo {author} {\bibfnamefont {Q.}~\bibnamefont {Wu}},\ }\bibfield  {title} {\bibinfo {title} {Quantum chaos, integrability, and late times in the {Krylov} basis},\ }\bibfield  {journal} {\bibinfo  {journal} {arXiv}\ }\href {https://doi.org/10.48550/arxiv.2312.03848} {10.48550/arxiv.2312.03848} (\bibinfo {year} {2023}{\natexlab{b}}),\ \bibinfo {note} {arXiv:2312.03848 [cond-mat, physics:hep-th, physics:nlin, physics:quant-ph]}\BibitemShut {NoStop}%
\bibitem [{\citenamefont {Lin}(2022)}]{LIN22}%
  \BibitemOpen
  \bibfield  {author} {\bibinfo {author} {\bibfnamefont {H.~W.}\ \bibnamefont {Lin}},\ }\bibfield  {title} {\bibinfo {title} {The bulk {Hilbert} space of double scaled {SYK}},\ }\href {https://doi.org/10.1007/jhep11(2022)060} {\bibfield  {journal} {\bibinfo  {journal} {J. High Energy Phys.}\ }\textbf {\bibinfo {volume} {2022}}\bibinfo  {number} { (11)},\ \bibinfo {pages} {60}}\BibitemShut {NoStop}%
\bibitem [{\citenamefont {Rabinovici}\ \emph {et~al.}(2023)\citenamefont {Rabinovici}, \citenamefont {S{\'{a}}nchez-Garrido}, \citenamefont {Shir},\ and\ \citenamefont {Sonner}}]{RAB23}%
  \BibitemOpen
\bibfield  {number} {  }\bibfield  {author} {\bibinfo {author} {\bibfnamefont {E.}~\bibnamefont {Rabinovici}}, \bibinfo {author} {\bibfnamefont {A.}~\bibnamefont {S{\'{a}}nchez-Garrido}}, \bibinfo {author} {\bibfnamefont {R.}~\bibnamefont {Shir}},\ and\ \bibinfo {author} {\bibfnamefont {J.}~\bibnamefont {Sonner}},\ }\bibfield  {title} {\bibinfo {title} {A bulk manifestation of {Krylov} complexity},\ }\href {https://doi.org/10.1007/jhep08(2023)213} {\bibfield  {journal} {\bibinfo  {journal} {J. High Energy Phys.}\ }\textbf {\bibinfo {volume} {2023}}\bibinfo  {number} { (8)},\ \bibinfo {pages} {213}}\BibitemShut {NoStop}%
\bibitem [{\citenamefont {Huh}\ \emph {et~al.}(2024)\citenamefont {Huh}, \citenamefont {Jeong},\ and\ \citenamefont {Pedraza}}]{HUH24}%
  \BibitemOpen
\bibfield  {number} {  }\bibfield  {author} {\bibinfo {author} {\bibfnamefont {K.}~\bibnamefont {Huh}}, \bibinfo {author} {\bibfnamefont {H.}~\bibnamefont {Jeong}},\ and\ \bibinfo {author} {\bibfnamefont {J.~F.}\ \bibnamefont {Pedraza}},\ }\bibfield  {title} {\bibinfo {title} {Spread complexity in saddle-dominated scrambling},\ }\href {https://doi.org/10.1007/jhep05(2024)137} {\bibfield  {journal} {\bibinfo  {journal} {J. High Energy Phys.}\ }\textbf {\bibinfo {volume} {2024}}\bibinfo  {number} { (5)},\ \bibinfo {pages} {137}}\BibitemShut {NoStop}%
\bibitem [{\citenamefont {Zhou}\ \emph {et~al.}(2025)\citenamefont {Zhou}, \citenamefont {Xia}, \citenamefont {Li},\ and\ \citenamefont {Li}}]{ZHO25b}%
  \BibitemOpen
\bibfield  {number} {  }\bibfield  {author} {\bibinfo {author} {\bibfnamefont {Y.}~\bibnamefont {Zhou}}, \bibinfo {author} {\bibfnamefont {W.}~\bibnamefont {Xia}}, \bibinfo {author} {\bibfnamefont {L.}~\bibnamefont {Li}},\ and\ \bibinfo {author} {\bibfnamefont {W.}~\bibnamefont {Li}},\ }\bibfield  {title} {\bibinfo {title} {Diagnosing {Quantum} {Many}-body {Chaos} in {Non}-{Hermitian} {Quantum} {Spin} {Chain} via {Krylov} {Complexity}},\ }\bibfield  {journal} {\bibinfo  {journal} {arXiv}\ }\href {https://doi.org/10.48550/arxiv.2501.15982} {10.48550/arxiv.2501.15982} (\bibinfo {year} {2025})\BibitemShut {NoStop}%
\bibitem [{\citenamefont {Craps}\ \emph {et~al.}(2025)\citenamefont {Craps}, \citenamefont {Evnin},\ and\ \citenamefont {Pascuzzi}}]{CRA25}%
  \BibitemOpen
  \bibfield  {author} {\bibinfo {author} {\bibfnamefont {B.}~\bibnamefont {Craps}}, \bibinfo {author} {\bibfnamefont {O.}~\bibnamefont {Evnin}},\ and\ \bibinfo {author} {\bibfnamefont {G.}~\bibnamefont {Pascuzzi}},\ }\bibfield  {title} {\bibinfo {title} {Multiseed krylov complexity},\ }\href {https://doi.org/10.1103/physrevlett.134.050402} {\bibfield  {journal} {\bibinfo  {journal} {Phys. Rev. Lett.}\ }\textbf {\bibinfo {volume} {134}},\ \bibinfo {pages} {050402} (\bibinfo {year} {2025})}\BibitemShut {NoStop}%
\bibitem [{\citenamefont {Suchsland}\ \emph {et~al.}(2025)\citenamefont {Suchsland}, \citenamefont {Moessner},\ and\ \citenamefont {Claeys}}]{SUC25}%
  \BibitemOpen
  \bibfield  {author} {\bibinfo {author} {\bibfnamefont {P.}~\bibnamefont {Suchsland}}, \bibinfo {author} {\bibfnamefont {R.}~\bibnamefont {Moessner}},\ and\ \bibinfo {author} {\bibfnamefont {P.~W.}\ \bibnamefont {Claeys}},\ }\bibfield  {title} {\bibinfo {title} {Krylov complexity and {Trotter} transitions in unitary circuit dynamics},\ }\href {https://doi.org/10.1103/physrevb.111.014309} {\bibfield  {journal} {\bibinfo  {journal} {Phys. Rev. B}\ }\textbf {\bibinfo {volume} {111}},\ \bibinfo {pages} {014309} (\bibinfo {year} {2025})}\BibitemShut {NoStop}%
\bibitem [{\citenamefont {Nandy}\ \emph {et~al.}(2024{\natexlab{b}})\citenamefont {Nandy}, \citenamefont {Matsoukas-Roubeas}, \citenamefont {Mart{\'{\i}}nez-Azcona}, \citenamefont {Dymarsky},\ and\ \citenamefont {Campo}}]{NAN24a}%
  \BibitemOpen
  \bibfield  {author} {\bibinfo {author} {\bibfnamefont {P.}~\bibnamefont {Nandy}}, \bibinfo {author} {\bibfnamefont {A.~S.}\ \bibnamefont {Matsoukas-Roubeas}}, \bibinfo {author} {\bibfnamefont {P.}~\bibnamefont {Mart{\'{\i}}nez-Azcona}}, \bibinfo {author} {\bibfnamefont {A.}~\bibnamefont {Dymarsky}},\ and\ \bibinfo {author} {\bibfnamefont {A.~d.}\ \bibnamefont {Campo}},\ }\bibfield  {title} {\bibinfo {title} {Quantum {Dynamics} in {Krylov} {Space}: {Methods} and {Applications}},\ }\bibfield  {journal} {\bibinfo  {journal} {arXiv}\ }\textbf {\bibinfo {volume} {2405.09628}},\ \href {https://doi.org/10.48550/arxiv.2405.09628} {10.48550/arxiv.2405.09628} (\bibinfo {year} {2024}{\natexlab{b}})\BibitemShut {NoStop}%
\bibitem [{\citenamefont {Domingo}\ \emph {et~al.}(2024)\citenamefont {Domingo}, \citenamefont {Borondo}, \citenamefont {Scialchi}, \citenamefont {Roncaglia}, \citenamefont {Carlo},\ and\ \citenamefont {Wisniacki}}]{DOM24}%
  \BibitemOpen
  \bibfield  {author} {\bibinfo {author} {\bibfnamefont {L.}~\bibnamefont {Domingo}}, \bibinfo {author} {\bibfnamefont {F.}~\bibnamefont {Borondo}}, \bibinfo {author} {\bibfnamefont {G.}~\bibnamefont {Scialchi}}, \bibinfo {author} {\bibfnamefont {A.~J.}\ \bibnamefont {Roncaglia}}, \bibinfo {author} {\bibfnamefont {G.~G.}\ \bibnamefont {Carlo}},\ and\ \bibinfo {author} {\bibfnamefont {D.~A.}\ \bibnamefont {Wisniacki}},\ }\bibfield  {title} {\bibinfo {title} {Quantum reservoir complexity by the {Krylov} evolution approach},\ }\href {https://doi.org/10.1103/physreva.110.022446} {\bibfield  {journal} {\bibinfo  {journal} {Phys. Rev. A}\ }\textbf {\bibinfo {volume} {110}},\ \bibinfo {pages} {022446} (\bibinfo {year} {2024})}\BibitemShut {NoStop}%
\bibitem [{\citenamefont {Vetrano}\ \emph {et~al.}(2025)\citenamefont {Vetrano}, \citenamefont {Lo~Monaco}, \citenamefont {Innocenti}, \citenamefont {Lorenzo},\ and\ \citenamefont {Palma}}]{VET25}%
  \BibitemOpen
  \bibfield  {author} {\bibinfo {author} {\bibfnamefont {M.}~\bibnamefont {Vetrano}}, \bibinfo {author} {\bibfnamefont {G.}~\bibnamefont {Lo~Monaco}}, \bibinfo {author} {\bibfnamefont {L.}~\bibnamefont {Innocenti}}, \bibinfo {author} {\bibfnamefont {S.}~\bibnamefont {Lorenzo}},\ and\ \bibinfo {author} {\bibfnamefont {G.~M.}\ \bibnamefont {Palma}},\ }\bibfield  {title} {\bibinfo {title} {State estimation with quantum extreme learning machines beyond the scrambling time},\ }\href {https://doi.org/10.1038/s41534-024-00927-5} {\bibfield  {journal} {\bibinfo  {journal} {npj Quantum Inf}\ }\textbf {\bibinfo {volume} {11}},\ \bibinfo {pages} {1} (\bibinfo {year} {2025})}\BibitemShut {NoStop}%
\bibitem [{\citenamefont {Mart{\'{\i}}nez-Pe{\~{n}}a}\ \emph {et~al.}(2020)\citenamefont {Mart{\'{\i}}nez-Pe{\~{n}}a}, \citenamefont {Nokkala}, \citenamefont {Giorgi}, \citenamefont {Zambrini},\ and\ \citenamefont {Soriano}}]{MAR20}%
  \BibitemOpen
  \bibfield  {author} {\bibinfo {author} {\bibfnamefont {R.}~\bibnamefont {Mart{\'{\i}}nez-Pe{\~{n}}a}}, \bibinfo {author} {\bibfnamefont {J.}~\bibnamefont {Nokkala}}, \bibinfo {author} {\bibfnamefont {G.~L.}\ \bibnamefont {Giorgi}}, \bibinfo {author} {\bibfnamefont {R.}~\bibnamefont {Zambrini}},\ and\ \bibinfo {author} {\bibfnamefont {M.~C.}\ \bibnamefont {Soriano}},\ }\bibfield  {title} {\bibinfo {title} {Information {Processing} {Capacity} of {Spin}-{Based} {Quantum} {Reservoir} {Computing} {Systems}},\ }\href {https://doi.org/10.1007/s12559-020-09772-y} {\bibfield  {journal} {\bibinfo  {journal} {Cogn. Comput.}\ }\textbf {\bibinfo {volume} {15}},\ \bibinfo {pages} {1440} (\bibinfo {year} {2020})}\BibitemShut {NoStop}%
\bibitem [{\citenamefont {Schuld}\ and\ \citenamefont {Petruccione}(2018)}]{SCH18a}%
  \BibitemOpen
  \bibfield  {author} {\bibinfo {author} {\bibfnamefont {M.}~\bibnamefont {Schuld}}\ and\ \bibinfo {author} {\bibfnamefont {F.}~\bibnamefont {Petruccione}},\ }\href {https://doi.org/10.1007/978-3-319-96424-9} {\emph {\bibinfo {title} {Supervised {Learning} with {Quantum} {Computers}}}},\ Quantum {Science} and {Technology}\ (\bibinfo  {publisher} {Springer International Publishing},\ \bibinfo {address} {Cham},\ \bibinfo {year} {2018})\BibitemShut {NoStop}%
\bibitem [{\citenamefont {Romeira}\ \emph {et~al.}(2017)\citenamefont {Romeira}, \citenamefont {Figueiredo},\ and\ \citenamefont {Jalvaloyes}}]{ROM17}%
  \BibitemOpen
  \bibfield  {author} {\bibinfo {author} {\bibfnamefont {B.}~\bibnamefont {Romeira}}, \bibinfo {author} {\bibfnamefont {J.~M.~L.}\ \bibnamefont {Figueiredo}},\ and\ \bibinfo {author} {\bibfnamefont {J.}~\bibnamefont {Jalvaloyes}},\ }\bibfield  {title} {\bibinfo {title} {Delay dynamics of neuromorphic optoelectronic nanoscale resonators: Perspectives and applications},\ }\href {https://doi.org/10.1063/1.5008888} {\bibfield  {journal} {\bibinfo  {journal} {Chaos}\ }\textbf {\bibinfo {volume} {27}},\ \bibinfo {pages} {114323} (\bibinfo {year} {2017})},\ \Eprint {https://arxiv.org/abs/https://doi.org/10.1063/1.5008888} {https://doi.org/10.1063/1.5008888} \BibitemShut {NoStop}%
\bibitem [{\citenamefont {Dunjko}\ and\ \citenamefont {Briegel}(2018)}]{DUN18}%
  \BibitemOpen
  \bibfield  {author} {\bibinfo {author} {\bibfnamefont {V.}~\bibnamefont {Dunjko}}\ and\ \bibinfo {author} {\bibfnamefont {H.~J.}\ \bibnamefont {Briegel}},\ }\bibfield  {title} {\bibinfo {title} {Machine learning {\&} artificial intelligence in the quantum domain: a review of recent progress},\ }\href {https://doi.org/10.1088/1361-6633/aab406} {\bibfield  {journal} {\bibinfo  {journal} {Rep. Prog. Phys.}\ }\textbf {\bibinfo {volume} {81}},\ \bibinfo {pages} {074001} (\bibinfo {year} {2018})},\ \bibinfo {note} {publisher: IOP Publishing}\BibitemShut {NoStop}%
\bibitem [{\citenamefont {Killoran}\ \emph {et~al.}(2019)\citenamefont {Killoran}, \citenamefont {Bromley}, \citenamefont {Arrazola}, \citenamefont {Schuld}, \citenamefont {Quesada},\ and\ \citenamefont {Lloyd}}]{KIL19}%
  \BibitemOpen
  \bibfield  {author} {\bibinfo {author} {\bibfnamefont {N.}~\bibnamefont {Killoran}}, \bibinfo {author} {\bibfnamefont {T.~R.}\ \bibnamefont {Bromley}}, \bibinfo {author} {\bibfnamefont {J.~M.}\ \bibnamefont {Arrazola}}, \bibinfo {author} {\bibfnamefont {M.}~\bibnamefont {Schuld}}, \bibinfo {author} {\bibfnamefont {N.}~\bibnamefont {Quesada}},\ and\ \bibinfo {author} {\bibfnamefont {S.}~\bibnamefont {Lloyd}},\ }\bibfield  {title} {\bibinfo {title} {Continuous-variable quantum neural networks},\ }\href {https://doi.org/10.1103/physrevresearch.1.033063} {\bibfield  {journal} {\bibinfo  {journal} {Phys. Rev. Research}\ }\textbf {\bibinfo {volume} {1}},\ \bibinfo {pages} {033063} (\bibinfo {year} {2019})},\ \bibinfo {note} {publisher: American Physical Society}\BibitemShut {NoStop}%
\bibitem [{\citenamefont {Sch{\"{o}}ll}\ \emph {et~al.}(2016)\citenamefont {Sch{\"{o}}ll}, \citenamefont {Lehnert}, \citenamefont {Keane}, \citenamefont {Dahms},\ and\ \citenamefont {H{\"{o}}vel}}]{SCH14}%
  \BibitemOpen
  \bibfield  {author} {\bibinfo {author} {\bibfnamefont {E.}~\bibnamefont {Sch{\"{o}}ll}}, \bibinfo {author} {\bibfnamefont {J.}~\bibnamefont {Lehnert}}, \bibinfo {author} {\bibfnamefont {A.}~\bibnamefont {Keane}}, \bibinfo {author} {\bibfnamefont {T.}~\bibnamefont {Dahms}},\ and\ \bibinfo {author} {\bibfnamefont {P.}~\bibnamefont {H{\"{o}}vel}},\ }\bibfield  {title} {\bibinfo {title} {{Control of desynchronization transitions in delay-coupled networks of type-I and type-II excitable systems}, proc. of the international symposium, hanse institute of advanced studies, delmenhorst, 13-16 november, 2012},\ }in\ \href {https://doi.org/10.1007/978-3-319-27635-9_3} {\emph {\bibinfo {booktitle} {Selforganization in Complex Systems: The Past, Present, and Future of Synergetics}}},\ \bibinfo {editor} {edited by\ \bibinfo {editor} {\bibfnamefont {A.}~\bibnamefont {Pelster}}\ and\ \bibinfo {editor} {\bibfnamefont {G.}~\bibnamefont {Wunner}}}\ (\bibinfo  {publisher} {Springer},\ \bibinfo {address} {Berlin},\ \bibinfo
  {year} {2016})\ pp.\ \bibinfo {pages} {25--42},\ \bibinfo {note} {iSBN 978-3-319-27635-9}\BibitemShut {NoStop}%
\bibitem [{\citenamefont {Farhi}\ and\ \citenamefont {Neven}(2018)}]{FAR18}%
  \BibitemOpen
  \bibfield  {author} {\bibinfo {author} {\bibfnamefont {E.}~\bibnamefont {Farhi}}\ and\ \bibinfo {author} {\bibfnamefont {H.}~\bibnamefont {Neven}},\ }\bibfield  {title} {\bibinfo {title} {Classification with {Quantum} {Neural} {Networks} on {Near} {Term} {Processors}},\ }\bibfield  {journal} {\bibinfo  {journal} {arXiv}\ }\textbf {\bibinfo {volume} {1802.06002}},\ \href {https://doi.org/10.48550/arxiv.1802.06002} {10.48550/arxiv.1802.06002} (\bibinfo {year} {2018}),\ \bibinfo {note} {arXiv:1802.06002 [quant-ph]}\BibitemShut {NoStop}%
\bibitem [{\citenamefont {Brunner}\ \emph {et~al.}(2013)\citenamefont {Brunner}, \citenamefont {Soriano}, \citenamefont {Mirasso},\ and\ \citenamefont {Fischer}}]{BRU13a}%
  \BibitemOpen
  \bibfield  {author} {\bibinfo {author} {\bibfnamefont {D.}~\bibnamefont {Brunner}}, \bibinfo {author} {\bibfnamefont {M.~C.}\ \bibnamefont {Soriano}}, \bibinfo {author} {\bibfnamefont {C.~R.}\ \bibnamefont {Mirasso}},\ and\ \bibinfo {author} {\bibfnamefont {I.}~\bibnamefont {Fischer}},\ }\bibfield  {title} {\bibinfo {title} {Parallel photonic information processing at gigabyte per second data rates using transient states},\ }\href {https://doi.org/10.1038/ncomms2368} {\bibfield  {journal} {\bibinfo  {journal} {Nat. Commun.}\ }\textbf {\bibinfo {volume} {4}},\ \bibinfo {pages} {1364} (\bibinfo {year} {2013})}\BibitemShut {NoStop}%
\bibitem [{\citenamefont {Li}\ \emph {et~al.}(2015)\citenamefont {Li}, \citenamefont {Liu}, \citenamefont {Xu},\ and\ \citenamefont {Du}}]{LI15}%
  \BibitemOpen
  \bibfield  {author} {\bibinfo {author} {\bibfnamefont {Z.}~\bibnamefont {Li}}, \bibinfo {author} {\bibfnamefont {X.}~\bibnamefont {Liu}}, \bibinfo {author} {\bibfnamefont {N.}~\bibnamefont {Xu}},\ and\ \bibinfo {author} {\bibfnamefont {J.}~\bibnamefont {Du}},\ }\bibfield  {title} {\bibinfo {title} {Experimental {Realization} of a {Quantum} {Support} {Vector} {Machine}},\ }\href {https://doi.org/10.1103/physrevlett.114.140504} {\bibfield  {journal} {\bibinfo  {journal} {Phys. Rev. Lett.}\ }\textbf {\bibinfo {volume} {114}},\ \bibinfo {pages} {140504} (\bibinfo {year} {2015})},\ \bibinfo {note} {publisher: American Physical Society}\BibitemShut {NoStop}%
\bibitem [{\citenamefont {Li}\ \emph {et~al.}(2017)\citenamefont {Li}, \citenamefont {Holloway}, \citenamefont {Benjamin}, \citenamefont {Briggs}, \citenamefont {Baugh},\ and\ \citenamefont {Mol}}]{LI17}%
  \BibitemOpen
  \bibfield  {author} {\bibinfo {author} {\bibfnamefont {Y.}~\bibnamefont {Li}}, \bibinfo {author} {\bibfnamefont {G.~W.}\ \bibnamefont {Holloway}}, \bibinfo {author} {\bibfnamefont {S.~C.}\ \bibnamefont {Benjamin}}, \bibinfo {author} {\bibfnamefont {G.~A.~D.}\ \bibnamefont {Briggs}}, \bibinfo {author} {\bibfnamefont {J.}~\bibnamefont {Baugh}},\ and\ \bibinfo {author} {\bibfnamefont {J.~A.}\ \bibnamefont {Mol}},\ }\bibfield  {title} {\bibinfo {title} {Double quantum dot memristor},\ }\href {https://doi.org/10.1103/physrevb.96.075446} {\bibfield  {journal} {\bibinfo  {journal} {Phys. Rev. B}\ }\textbf {\bibinfo {volume} {96}},\ \bibinfo {pages} {075446} (\bibinfo {year} {2017})},\ \bibinfo {note} {publisher: American Physical Society}\BibitemShut {NoStop}%
\bibitem [{\citenamefont {Saggio}\ \emph {et~al.}(2021)\citenamefont {Saggio}, \citenamefont {Asenbeck}, \citenamefont {Hamann}, \citenamefont {Str{\"{o}}mberg}, \citenamefont {Schiansky}, \citenamefont {Dunjko}, \citenamefont {Friis}, \citenamefont {Harris}, \citenamefont {Hochberg}, \citenamefont {Englund}, \citenamefont {W{\"{o}}lk}, \citenamefont {Briegel},\ and\ \citenamefont {Walther}}]{SAG21}%
  \BibitemOpen
  \bibfield  {author} {\bibinfo {author} {\bibfnamefont {V.}~\bibnamefont {Saggio}}, \bibinfo {author} {\bibfnamefont {B.~E.}\ \bibnamefont {Asenbeck}}, \bibinfo {author} {\bibfnamefont {A.}~\bibnamefont {Hamann}}, \bibinfo {author} {\bibfnamefont {T.}~\bibnamefont {Str{\"{o}}mberg}}, \bibinfo {author} {\bibfnamefont {P.}~\bibnamefont {Schiansky}}, \bibinfo {author} {\bibfnamefont {V.}~\bibnamefont {Dunjko}}, \bibinfo {author} {\bibfnamefont {N.}~\bibnamefont {Friis}}, \bibinfo {author} {\bibfnamefont {N.~C.}\ \bibnamefont {Harris}}, \bibinfo {author} {\bibfnamefont {M.}~\bibnamefont {Hochberg}}, \bibinfo {author} {\bibfnamefont {D.}~\bibnamefont {Englund}}, \bibinfo {author} {\bibfnamefont {S.}~\bibnamefont {W{\"{o}}lk}}, \bibinfo {author} {\bibfnamefont {H.~J.}\ \bibnamefont {Briegel}},\ and\ \bibinfo {author} {\bibfnamefont {P.}~\bibnamefont {Walther}},\ }\bibfield  {title} {\bibinfo {title} {Experimental quantum speed-up in reinforcement learning agents},\ }\href
  {https://doi.org/10.1038/s41586-021-03242-7} {\bibfield  {journal} {\bibinfo  {journal} {Nature}\ }\textbf {\bibinfo {volume} {591}},\ \bibinfo {pages} {229} (\bibinfo {year} {2021})},\ \bibinfo {note} {publisher: Nature Publishing Group}\BibitemShut {NoStop}%
\bibitem [{\citenamefont {Schuld}\ \emph {et~al.}(2021)\citenamefont {Schuld}, \citenamefont {Sweke},\ and\ \citenamefont {Meyer}}]{SCH21}%
  \BibitemOpen
  \bibfield  {author} {\bibinfo {author} {\bibfnamefont {M.}~\bibnamefont {Schuld}}, \bibinfo {author} {\bibfnamefont {R.}~\bibnamefont {Sweke}},\ and\ \bibinfo {author} {\bibfnamefont {J.~J.}\ \bibnamefont {Meyer}},\ }\bibfield  {title} {\bibinfo {title} {Effect of data encoding on the expressive power of variational quantum-machine-learning models},\ }\href {https://doi.org/10.1103/physreva.103.032430} {\bibfield  {journal} {\bibinfo  {journal} {Phys. Rev. A}\ }\textbf {\bibinfo {volume} {103}},\ \bibinfo {pages} {032430} (\bibinfo {year} {2021})},\ \bibinfo {note} {publisher: American Physical Society}\BibitemShut {NoStop}%
\bibitem [{\citenamefont {Liu}\ \emph {et~al.}(2024)\citenamefont {Liu}, \citenamefont {Liu}, \citenamefont {Liu}, \citenamefont {Ye}, \citenamefont {Wang}, \citenamefont {Alexeev}, \citenamefont {Eisert},\ and\ \citenamefont {Jiang}}]{LIU24}%
  \BibitemOpen
  \bibfield  {author} {\bibinfo {author} {\bibfnamefont {J.}~\bibnamefont {Liu}}, \bibinfo {author} {\bibfnamefont {M.}~\bibnamefont {Liu}}, \bibinfo {author} {\bibfnamefont {L.~P.}\ \bibnamefont {Liu}}, \bibinfo {author} {\bibfnamefont {Z.}~\bibnamefont {Ye}}, \bibinfo {author} {\bibfnamefont {Y.}~\bibnamefont {Wang}}, \bibinfo {author} {\bibfnamefont {Y.}~\bibnamefont {Alexeev}}, \bibinfo {author} {\bibfnamefont {J.}~\bibnamefont {Eisert}},\ and\ \bibinfo {author} {\bibfnamefont {L.}~\bibnamefont {Jiang}},\ }\bibfield  {title} {\bibinfo {title} {Towards provably efficient quantum algorithms for large-scale machine-learning models},\ }\href {https://doi.org/10.1038/s41467-023-43957-x} {\bibfield  {journal} {\bibinfo  {journal} {Nat. Commun.}\ }\textbf {\bibinfo {volume} {15}},\ \bibinfo {pages} {434} (\bibinfo {year} {2024})}\BibitemShut {NoStop}%
\bibitem [{\citenamefont {Huang}\ \emph {et~al.}(2004)\citenamefont {Huang}, \citenamefont {Zhu},\ and\ \citenamefont {Siew}}]{HUA04}%
  \BibitemOpen
  \bibfield  {author} {\bibinfo {author} {\bibfnamefont {G.~B.}\ \bibnamefont {Huang}}, \bibinfo {author} {\bibfnamefont {Q.~Y.}\ \bibnamefont {Zhu}},\ and\ \bibinfo {author} {\bibfnamefont {C.~K.}\ \bibnamefont {Siew}},\ }\bibfield  {title} {\bibinfo {title} {Extreme learning machine: a new learning scheme of feedforward neural networks},\ }\bibfield  {booktitle} {\emph {\bibinfo {booktitle} {Conference on neural networks (IEEE Cat. No. 04CH37541)}},\ }\href@noop {} {\bibfield  {journal} {\bibinfo  {journal} {IEEE proceedings}\ }\textbf {\bibinfo {volume} {2}},\ \bibinfo {pages} {985} (\bibinfo {year} {2004})}\BibitemShut {NoStop}%
\bibitem [{\citenamefont {Ding}\ \emph {et~al.}(2015)\citenamefont {Ding}, \citenamefont {Zhao}, \citenamefont {Zhang}, \citenamefont {Xu},\ and\ \citenamefont {Nie}}]{DIN15}%
  \BibitemOpen
  \bibfield  {author} {\bibinfo {author} {\bibfnamefont {S.}~\bibnamefont {Ding}}, \bibinfo {author} {\bibfnamefont {H.}~\bibnamefont {Zhao}}, \bibinfo {author} {\bibfnamefont {Y.}~\bibnamefont {Zhang}}, \bibinfo {author} {\bibfnamefont {X.}~\bibnamefont {Xu}},\ and\ \bibinfo {author} {\bibfnamefont {R.}~\bibnamefont {Nie}},\ }\bibfield  {title} {\bibinfo {title} {Extreme learning machine: algorithm, theory and applications},\ }\href {https://doi.org/10.1007/s10462-013-9405-z} {\bibfield  {journal} {\bibinfo  {journal} {Artif. Intell. Rev.}\ }\textbf {\bibinfo {volume} {44}},\ \bibinfo {pages} {103} (\bibinfo {year} {2015})}\BibitemShut {NoStop}%
\bibitem [{\citenamefont {Wang}\ \emph {et~al.}(2022)\citenamefont {Wang}, \citenamefont {Yang}, \citenamefont {Hao}, \citenamefont {Zhang}, \citenamefont {Zhu}, \citenamefont {Xiao},\ and\ \citenamefont {Huang}}]{WAN22}%
  \BibitemOpen
  \bibfield  {author} {\bibinfo {author} {\bibfnamefont {T.}~\bibnamefont {Wang}}, \bibinfo {author} {\bibfnamefont {Y.~D.}\ \bibnamefont {Yang}}, \bibinfo {author} {\bibfnamefont {Y.~Z.}\ \bibnamefont {Hao}}, \bibinfo {author} {\bibfnamefont {Z.~N.}\ \bibnamefont {Zhang}}, \bibinfo {author} {\bibfnamefont {Y.}~\bibnamefont {Zhu}}, \bibinfo {author} {\bibfnamefont {J.~L.}\ \bibnamefont {Xiao}},\ and\ \bibinfo {author} {\bibfnamefont {Y.~Z.}\ \bibnamefont {Huang}},\ }\bibfield  {title} {\bibinfo {title} {Nonlinear dynamics of a semiconductor microcavity laser subject to frequency comb injection},\ }\href {https://doi.org/10.1364/oe.475651} {\bibfield  {journal} {\bibinfo  {journal} {Opt. Express}\ }\textbf {\bibinfo {volume} {30}},\ \bibinfo {pages} {45459} (\bibinfo {year} {2022})}\BibitemShut {NoStop}%
\bibitem [{\citenamefont {Huang}(2015)}]{HUA15}%
  \BibitemOpen
  \bibfield  {author} {\bibinfo {author} {\bibfnamefont {G.~B.}\ \bibnamefont {Huang}},\ }\bibfield  {title} {\bibinfo {title} {What are extreme learning machines? filling the gap between frank rosenblatt's dream and john von neumann's puzzle},\ }\href {https://doi.org/10.1007/s12559-015-9333-0} {\bibfield  {journal} {\bibinfo  {journal} {Cogn. Comp.}\ }\textbf {\bibinfo {volume} {7}},\ \bibinfo {pages} {263} (\bibinfo {year} {2015})}\BibitemShut {NoStop}%
\bibitem [{\citenamefont {Huang}\ \emph {et~al.}(2006)\citenamefont {Huang}, \citenamefont {Zhu},\ and\ \citenamefont {Siew}}]{HUA06}%
  \BibitemOpen
  \bibfield  {author} {\bibinfo {author} {\bibfnamefont {G.}~\bibnamefont {Huang}}, \bibinfo {author} {\bibfnamefont {Q.}~\bibnamefont {Zhu}},\ and\ \bibinfo {author} {\bibfnamefont {C.}~\bibnamefont {Siew}},\ }\bibfield  {title} {\bibinfo {title} {Extreme learning machine: {Theory} and applications},\ }\href {https://doi.org/10.1016/j.neucom.2005.12.126} {\bibfield  {journal} {\bibinfo  {journal} {Neurocomputing}\ }\bibinfo {series} {Neural {Networks}},\ \textbf {\bibinfo {volume} {70}},\ \bibinfo {pages} {489} (\bibinfo {year} {2006})}\BibitemShut {NoStop}%
\bibitem [{\citenamefont {Huang}\ \emph {et~al.}(2012)\citenamefont {Huang}, \citenamefont {Zhou}, \citenamefont {Ding},\ and\ \citenamefont {Zhang}}]{HUA12}%
  \BibitemOpen
  \bibfield  {author} {\bibinfo {author} {\bibfnamefont {G.}~\bibnamefont {Huang}}, \bibinfo {author} {\bibfnamefont {H.}~\bibnamefont {Zhou}}, \bibinfo {author} {\bibfnamefont {X.}~\bibnamefont {Ding}},\ and\ \bibinfo {author} {\bibfnamefont {R.}~\bibnamefont {Zhang}},\ }\bibfield  {title} {\bibinfo {title} {Extreme {Learning} {Machine} for {Regression} and {Multiclass} {Classification}},\ }\href {https://doi.org/10.1109/tsmcb.2011.2168604} {\bibfield  {journal} {\bibinfo  {journal} {IEEE Trans. Syst. Man. Cybern. B Cybern.}\ }\textbf {\bibinfo {volume} {42}},\ \bibinfo {pages} {513} (\bibinfo {year} {2012})}\BibitemShut {NoStop}%
\bibitem [{\citenamefont {Mujal}\ \emph {et~al.}(2021)\citenamefont {Mujal}, \citenamefont {Nokkala}, \citenamefont {Mart{\'{\i}}nez-Pe{\~{n}}a}, \citenamefont {Giorgi}, \citenamefont {Soriano},\ and\ \citenamefont {Zambrini}}]{MUJ21}%
  \BibitemOpen
  \bibfield  {author} {\bibinfo {author} {\bibfnamefont {P.}~\bibnamefont {Mujal}}, \bibinfo {author} {\bibfnamefont {J.}~\bibnamefont {Nokkala}}, \bibinfo {author} {\bibfnamefont {R.}~\bibnamefont {Mart{\'{\i}}nez-Pe{\~{n}}a}}, \bibinfo {author} {\bibfnamefont {G.~L.}\ \bibnamefont {Giorgi}}, \bibinfo {author} {\bibfnamefont {M.~C.}\ \bibnamefont {Soriano}},\ and\ \bibinfo {author} {\bibfnamefont {R.}~\bibnamefont {Zambrini}},\ }\bibfield  {title} {\bibinfo {title} {Analytical evidence of nonlinearity in qubits and continuous-variable quantum reservoir computing},\ }\href {https://doi.org/10.1088/2632-072x/ac340e} {\bibfield  {journal} {\bibinfo  {journal} {J. Phys. Complex.}\ }\textbf {\bibinfo {volume} {2}},\ \bibinfo {pages} {045008} (\bibinfo {year} {2021})}\BibitemShut {NoStop}%
\bibitem [{\citenamefont {Innocenti}\ \emph {et~al.}(2023)\citenamefont {Innocenti}, \citenamefont {Lorenzo}, \citenamefont {Palmisano}, \citenamefont {Ferraro}, \citenamefont {Paternostro},\ and\ \citenamefont {Palma}}]{INN23}%
  \BibitemOpen
  \bibfield  {author} {\bibinfo {author} {\bibfnamefont {L.}~\bibnamefont {Innocenti}}, \bibinfo {author} {\bibfnamefont {S.}~\bibnamefont {Lorenzo}}, \bibinfo {author} {\bibfnamefont {I.}~\bibnamefont {Palmisano}}, \bibinfo {author} {\bibfnamefont {A.}~\bibnamefont {Ferraro}}, \bibinfo {author} {\bibfnamefont {M.}~\bibnamefont {Paternostro}},\ and\ \bibinfo {author} {\bibfnamefont {G.~M.}\ \bibnamefont {Palma}},\ }\bibfield  {title} {\bibinfo {title} {Potential and limitations of quantum extreme learning machines},\ }\href {https://doi.org/10.1038/s42005-023-01233-w} {\bibfield  {journal} {\bibinfo  {journal} {Commun. Phys.}\ }\textbf {\bibinfo {volume} {6}},\ \bibinfo {pages} {1} (\bibinfo {year} {2023})}\BibitemShut {NoStop}%
\bibitem [{\citenamefont {Ghosh}\ \emph {et~al.}(2019{\natexlab{a}})\citenamefont {Ghosh}, \citenamefont {Paterek},\ and\ \citenamefont {Liew}}]{GHO19}%
  \BibitemOpen
  \bibfield  {author} {\bibinfo {author} {\bibfnamefont {S.}~\bibnamefont {Ghosh}}, \bibinfo {author} {\bibfnamefont {T.}~\bibnamefont {Paterek}},\ and\ \bibinfo {author} {\bibfnamefont {T.~C.~H.}\ \bibnamefont {Liew}},\ }\bibfield  {title} {\bibinfo {title} {Quantum {Neuromorphic} {Platform} for {Quantum} {State} {Preparation}},\ }\href {https://doi.org/10.1103/physrevlett.123.260404} {\bibfield  {journal} {\bibinfo  {journal} {Phys. Rev. Lett.}\ }\textbf {\bibinfo {volume} {123}},\ \bibinfo {pages} {260404} (\bibinfo {year} {2019}{\natexlab{a}})}\BibitemShut {NoStop}%
\bibitem [{\citenamefont {Ghosh}\ \emph {et~al.}(2019{\natexlab{b}})\citenamefont {Ghosh}, \citenamefont {Opala}, \citenamefont {Matuszewski}, \citenamefont {Paterek},\ and\ \citenamefont {Liew}}]{GHO19a}%
  \BibitemOpen
  \bibfield  {author} {\bibinfo {author} {\bibfnamefont {S.}~\bibnamefont {Ghosh}}, \bibinfo {author} {\bibfnamefont {A.}~\bibnamefont {Opala}}, \bibinfo {author} {\bibfnamefont {M.}~\bibnamefont {Matuszewski}}, \bibinfo {author} {\bibfnamefont {T.}~\bibnamefont {Paterek}},\ and\ \bibinfo {author} {\bibfnamefont {T.~C.~H.}\ \bibnamefont {Liew}},\ }\bibfield  {title} {\bibinfo {title} {Quantum reservoir processing},\ }\href {https://doi.org/10.1038/s41534-019-0149-8} {\bibfield  {journal} {\bibinfo  {journal} {npj Quantum Inf.}\ }\textbf {\bibinfo {volume} {5}},\ \bibinfo {pages} {35} (\bibinfo {year} {2019}{\natexlab{b}})}\BibitemShut {NoStop}%
\bibitem [{\citenamefont {Xiong}\ \emph {et~al.}(2023)\citenamefont {Xiong}, \citenamefont {Facelli}, \citenamefont {Sahebi}, \citenamefont {Agnel}, \citenamefont {Chotibut}, \citenamefont {Thanasilp},\ and\ \citenamefont {Holmes}}]{XIO23}%
  \BibitemOpen
  \bibfield  {author} {\bibinfo {author} {\bibfnamefont {W.}~\bibnamefont {Xiong}}, \bibinfo {author} {\bibfnamefont {G.}~\bibnamefont {Facelli}}, \bibinfo {author} {\bibfnamefont {M.}~\bibnamefont {Sahebi}}, \bibinfo {author} {\bibfnamefont {O.}~\bibnamefont {Agnel}}, \bibinfo {author} {\bibfnamefont {T.}~\bibnamefont {Chotibut}}, \bibinfo {author} {\bibfnamefont {S.}~\bibnamefont {Thanasilp}},\ and\ \bibinfo {author} {\bibfnamefont {Z.}~\bibnamefont {Holmes}},\ }\bibfield  {title} {\bibinfo {title} {On fundamental aspects of quantum extreme learning machines},\ }\bibfield  {journal} {\bibinfo  {journal} {arxiv}\ }\href {https://doi.org/10.48550/arxiv.2312.15124} {10.48550/arxiv.2312.15124} (\bibinfo {year} {2023})\BibitemShut {NoStop}%
\bibitem [{\citenamefont {Suprano}\ \emph {et~al.}(2024)\citenamefont {Suprano}, \citenamefont {Zia}, \citenamefont {Innocenti}, \citenamefont {Lorenzo}, \citenamefont {Cimini}, \citenamefont {Giordani}, \citenamefont {Palmisano}, \citenamefont {Polino}, \citenamefont {Spagnolo}, \citenamefont {Sciarrino}, \citenamefont {Palma}, \citenamefont {Ferraro},\ and\ \citenamefont {Paternostro}}]{SUP24}%
  \BibitemOpen
  \bibfield  {author} {\bibinfo {author} {\bibfnamefont {A.}~\bibnamefont {Suprano}}, \bibinfo {author} {\bibfnamefont {D.}~\bibnamefont {Zia}}, \bibinfo {author} {\bibfnamefont {L.}~\bibnamefont {Innocenti}}, \bibinfo {author} {\bibfnamefont {S.}~\bibnamefont {Lorenzo}}, \bibinfo {author} {\bibfnamefont {V.}~\bibnamefont {Cimini}}, \bibinfo {author} {\bibfnamefont {T.}~\bibnamefont {Giordani}}, \bibinfo {author} {\bibfnamefont {I.}~\bibnamefont {Palmisano}}, \bibinfo {author} {\bibfnamefont {E.}~\bibnamefont {Polino}}, \bibinfo {author} {\bibfnamefont {N.}~\bibnamefont {Spagnolo}}, \bibinfo {author} {\bibfnamefont {F.}~\bibnamefont {Sciarrino}}, \bibinfo {author} {\bibfnamefont {G.~M.}\ \bibnamefont {Palma}}, \bibinfo {author} {\bibfnamefont {A.}~\bibnamefont {Ferraro}},\ and\ \bibinfo {author} {\bibfnamefont {M.}~\bibnamefont {Paternostro}},\ }\bibfield  {title} {\bibinfo {title} {Experimental {Property} {Reconstruction} in a {Photonic} {Quantum} {Extreme} {Learning} {Machine}},\ }\href
  {https://doi.org/10.1103/physrevlett.132.160802} {\bibfield  {journal} {\bibinfo  {journal} {Phys. Rev. Lett.}\ }\textbf {\bibinfo {volume} {132}},\ \bibinfo {pages} {160802} (\bibinfo {year} {2024})}\BibitemShut {NoStop}%
\bibitem [{\citenamefont {Qi}\ \emph {et~al.}(2024)\citenamefont {Qi}, \citenamefont {Liu}, \citenamefont {Gani},\ and\ \citenamefont {Gong}}]{QI24}%
  \BibitemOpen
  \bibfield  {author} {\bibinfo {author} {\bibfnamefont {H.}~\bibnamefont {Qi}}, \bibinfo {author} {\bibfnamefont {X.}~\bibnamefont {Liu}}, \bibinfo {author} {\bibfnamefont {A.}~\bibnamefont {Gani}},\ and\ \bibinfo {author} {\bibfnamefont {C.}~\bibnamefont {Gong}},\ }\bibfield  {title} {\bibinfo {title} {Quantum particle {Swarm} optimized extreme learning machine for intrusion detection},\ }\href {https://doi.org/10.1007/s11227-024-06022-y} {\bibfield  {journal} {\bibinfo  {journal} {J. Supercomput.}\ }\textbf {\bibinfo {volume} {80}},\ \bibinfo {pages} {14622} (\bibinfo {year} {2024})}\BibitemShut {NoStop}%
\bibitem [{\citenamefont {Steinm{\"{u}}ller}\ \emph {et~al.}(2022)\citenamefont {Steinm{\"{u}}ller}, \citenamefont {Schulz}, \citenamefont {Graf},\ and\ \citenamefont {Herr}}]{STE22}%
  \BibitemOpen
  \bibfield  {author} {\bibinfo {author} {\bibfnamefont {P.}~\bibnamefont {Steinm{\"{u}}ller}}, \bibinfo {author} {\bibfnamefont {T.}~\bibnamefont {Schulz}}, \bibinfo {author} {\bibfnamefont {F.}~\bibnamefont {Graf}},\ and\ \bibinfo {author} {\bibfnamefont {D.}~\bibnamefont {Herr}},\ }\bibfield  {title} {\bibinfo {title} {{eXplainable} {AI} for {Quantum} {Machine} {Learning}},\ }\bibfield  {journal} {\bibinfo  {journal} {arXiv}\ }\textbf {\bibinfo {volume} {2211.01441}},\ \href {https://doi.org/10.48550/arxiv.2211.01441} {10.48550/arxiv.2211.01441} (\bibinfo {year} {2022})\BibitemShut {NoStop}%
\bibitem [{\citenamefont {Pira}\ and\ \citenamefont {Ferrie}(2024)}]{PIR24}%
  \BibitemOpen
  \bibfield  {author} {\bibinfo {author} {\bibfnamefont {L.}~\bibnamefont {Pira}}\ and\ \bibinfo {author} {\bibfnamefont {C.}~\bibnamefont {Ferrie}},\ }\bibfield  {title} {\bibinfo {title} {On the interpretability of quantum neural networks},\ }\href {https://doi.org/10.1007/s42484-024-00191-y} {\bibfield  {journal} {\bibinfo  {journal} {Quantum Mach. Intell.}\ }\textbf {\bibinfo {volume} {6}},\ \bibinfo {pages} {52} (\bibinfo {year} {2024})}\BibitemShut {NoStop}%
\bibitem [{\citenamefont {Lifshitz}(2022)}]{LIF22}%
  \BibitemOpen
  \bibfield  {author} {\bibinfo {author} {\bibfnamefont {R.}~\bibnamefont {Lifshitz}},\ }\bibfield  {title} {\bibinfo {title} {Quantum {Deep} {Dreaming}: {A} {Novel} {Approach} for {Quantum} {Circuit} {Design}},\ }\bibfield  {journal} {\bibinfo  {journal} {arXiv}\ }\textbf {\bibinfo {volume} {2211.04343}},\ \href {https://doi.org/10.48550/arxiv.2211.04343} {10.48550/arxiv.2211.04343} (\bibinfo {year} {2022})\BibitemShut {NoStop}%
\bibitem [{\citenamefont {Heese}\ \emph {et~al.}(2023)\citenamefont {Heese}, \citenamefont {Gerlach}, \citenamefont {M{\"{u}}cke}, \citenamefont {M{\"{u}}ller}, \citenamefont {Jakobs},\ and\ \citenamefont {Piatkowski}}]{HEE23}%
  \BibitemOpen
  \bibfield  {author} {\bibinfo {author} {\bibfnamefont {R.}~\bibnamefont {Heese}}, \bibinfo {author} {\bibfnamefont {T.}~\bibnamefont {Gerlach}}, \bibinfo {author} {\bibfnamefont {S.}~\bibnamefont {M{\"{u}}cke}}, \bibinfo {author} {\bibfnamefont {S.}~\bibnamefont {M{\"{u}}ller}}, \bibinfo {author} {\bibfnamefont {M.}~\bibnamefont {Jakobs}},\ and\ \bibinfo {author} {\bibfnamefont {N.}~\bibnamefont {Piatkowski}},\ }\bibfield  {title} {\bibinfo {title} {Explaining {Quantum} {Circuits} with {Shapley} {Values}: {Towards} {Explainable} {Quantum} {Machine} {Learning}},\ }\bibfield  {journal} {\bibinfo  {journal} {arXiv}\ }\textbf {\bibinfo {volume} {2301.09138}},\ \href {https://doi.org/10.48550/arxiv.2301.09138} {10.48550/arxiv.2301.09138} (\bibinfo {year} {2023})\BibitemShut {NoStop}%
\bibitem [{\citenamefont {Gil-Fuster}\ \emph {et~al.}(2024)\citenamefont {Gil-Fuster}, \citenamefont {Naujoks}, \citenamefont {Montavon}, \citenamefont {Wiegand}, \citenamefont {Samek},\ and\ \citenamefont {Eisert}}]{GIL24}%
  \BibitemOpen
  \bibfield  {author} {\bibinfo {author} {\bibfnamefont {E.}~\bibnamefont {Gil-Fuster}}, \bibinfo {author} {\bibfnamefont {J.~R.}\ \bibnamefont {Naujoks}}, \bibinfo {author} {\bibfnamefont {G.}~\bibnamefont {Montavon}}, \bibinfo {author} {\bibfnamefont {T.}~\bibnamefont {Wiegand}}, \bibinfo {author} {\bibfnamefont {W.}~\bibnamefont {Samek}},\ and\ \bibinfo {author} {\bibfnamefont {J.}~\bibnamefont {Eisert}},\ }\bibfield  {title} {\bibinfo {title} {Opportunities and limitations of explaining quantum machine learning},\ }\bibfield  {journal} {\bibinfo  {journal} {arXiv}\ }\textbf {\bibinfo {volume} {2412.14753}},\ \href {https://doi.org/10.48550/arxiv.2412.14753} {10.48550/arxiv.2412.14753} (\bibinfo {year} {2024})\BibitemShut {NoStop}%
\bibitem [{\citenamefont {Verstraeten}\ \emph {et~al.}(2010)\citenamefont {Verstraeten}, \citenamefont {Dambre}, \citenamefont {Dutoit},\ and\ \citenamefont {Schrauwen}}]{VER10}%
  \BibitemOpen
  \bibfield  {author} {\bibinfo {author} {\bibfnamefont {D.}~\bibnamefont {Verstraeten}}, \bibinfo {author} {\bibfnamefont {J.}~\bibnamefont {Dambre}}, \bibinfo {author} {\bibfnamefont {X.}~\bibnamefont {Dutoit}},\ and\ \bibinfo {author} {\bibfnamefont {B.}~\bibnamefont {Schrauwen}},\ }\bibfield  {title} {\bibinfo {title} {Memory versus non-linearity in reservoirs},\ }in\ \href {https://doi.org/10.1109/ijcnn.2010.5596492} {\emph {\bibinfo {booktitle} {The 2010 International Joint Conference on Neural Networks (IJCNN)}}}\ (\bibinfo {year} {2010})\ pp.\ \bibinfo {pages} {1--8}\BibitemShut {NoStop}%
\bibitem [{\citenamefont {Butcher}\ \emph {et~al.}(2013)\citenamefont {Butcher}, \citenamefont {Verstraeten}, \citenamefont {Schrauwen}, \citenamefont {Day},\ and\ \citenamefont {Haycock}}]{BUT13b}%
  \BibitemOpen
  \bibfield  {author} {\bibinfo {author} {\bibfnamefont {J.~B.}\ \bibnamefont {Butcher}}, \bibinfo {author} {\bibfnamefont {D.}~\bibnamefont {Verstraeten}}, \bibinfo {author} {\bibfnamefont {B.}~\bibnamefont {Schrauwen}}, \bibinfo {author} {\bibfnamefont {C.~R.}\ \bibnamefont {Day}},\ and\ \bibinfo {author} {\bibfnamefont {P.~W.}\ \bibnamefont {Haycock}},\ }\bibfield  {title} {\bibinfo {title} {Reservoir computing and extreme learning machines for non-linear time-series data analysis},\ }\href {https://doi.org/https://doi.org/10.1016/j.neunet.2012.11.011} {\bibfield  {journal} {\bibinfo  {journal} {Neural Netw.}\ }\textbf {\bibinfo {volume} {38}},\ \bibinfo {pages} {76} (\bibinfo {year} {2013})}\BibitemShut {NoStop}%
\bibitem [{\citenamefont {Inubushi}\ and\ \citenamefont {Yoshimura}(2017)}]{INU17}%
  \BibitemOpen
  \bibfield  {author} {\bibinfo {author} {\bibfnamefont {M.}~\bibnamefont {Inubushi}}\ and\ \bibinfo {author} {\bibfnamefont {K.}~\bibnamefont {Yoshimura}},\ }\bibfield  {title} {\bibinfo {title} {Reservoir computing beyond memory-nonlinearity trade-off},\ }\href@noop {} {\bibfield  {journal} {\bibinfo  {journal} {Sci. Rep.}\ }\textbf {\bibinfo {volume} {7}},\ \bibinfo {pages} {10199} (\bibinfo {year} {2017})}\BibitemShut {NoStop}%
\bibitem [{\citenamefont {Kubota}\ \emph {et~al.}(2021)\citenamefont {Kubota}, \citenamefont {Takahashi},\ and\ \citenamefont {Nakajima}}]{KUB21}%
  \BibitemOpen
  \bibfield  {author} {\bibinfo {author} {\bibfnamefont {T.}~\bibnamefont {Kubota}}, \bibinfo {author} {\bibfnamefont {H.}~\bibnamefont {Takahashi}},\ and\ \bibinfo {author} {\bibfnamefont {K.}~\bibnamefont {Nakajima}},\ }\bibfield  {title} {\bibinfo {title} {Unifying framework for information processing in stochastically driven dynamical systems},\ }\href {https://doi.org/10.1103/physrevresearch.3.043135} {\bibfield  {journal} {\bibinfo  {journal} {Phys. Rev. Research}\ }\textbf {\bibinfo {volume} {3}},\ \bibinfo {pages} {043135} (\bibinfo {year} {2021})}\BibitemShut {NoStop}%
\bibitem [{\citenamefont {Jaeger}(2002)}]{JAE02}%
  \BibitemOpen
  \bibfield  {author} {\bibinfo {author} {\bibfnamefont {H.}~\bibnamefont {Jaeger}},\ }\href@noop {} {\emph {\bibinfo {title} {Short term memory in echo state networks}}},\ \bibinfo {type} {{GMD} Report}\ \bibinfo {number} {152}\ (\bibinfo  {institution} {GMD - Forschungszentrum Informationstechnik Gmb{H}},\ \bibinfo {year} {2002})\BibitemShut {NoStop}%
\bibitem [{\citenamefont {Goldmann}\ \emph {et~al.}(2020)\citenamefont {Goldmann}, \citenamefont {K{\"{o}}ster}, \citenamefont {L{\"{u}}dge},\ and\ \citenamefont {Yanchuk}}]{GOL20}%
  \BibitemOpen
  \bibfield  {author} {\bibinfo {author} {\bibfnamefont {M.}~\bibnamefont {Goldmann}}, \bibinfo {author} {\bibfnamefont {F.}~\bibnamefont {K{\"{o}}ster}}, \bibinfo {author} {\bibfnamefont {K.}~\bibnamefont {L{\"{u}}dge}},\ and\ \bibinfo {author} {\bibfnamefont {S.}~\bibnamefont {Yanchuk}},\ }\bibfield  {title} {\bibinfo {title} {Deep time-delay reservoir computing: Dynamics and memory capacity},\ }\href {https://doi.org/doi: 10.1063/5.0017974} {\bibfield  {journal} {\bibinfo  {journal} {Chaos,}\ }\textbf {\bibinfo {volume} {30}},\ \bibinfo {pages} {093124} (\bibinfo {year} {2020})}\BibitemShut {NoStop}%
\bibitem [{\citenamefont {K{\"{o}}ster}\ \emph {et~al.}(2020)\citenamefont {K{\"{o}}ster}, \citenamefont {Ehlert},\ and\ \citenamefont {L{\"{u}}dge}}]{KOE20a}%
  \BibitemOpen
  \bibfield  {author} {\bibinfo {author} {\bibfnamefont {F.}~\bibnamefont {K{\"{o}}ster}}, \bibinfo {author} {\bibfnamefont {D.}~\bibnamefont {Ehlert}},\ and\ \bibinfo {author} {\bibfnamefont {K.}~\bibnamefont {L{\"{u}}dge}},\ }\bibfield  {title} {\bibinfo {title} {Limitations of the recall capabilities in delay based reservoir computing systems},\ }\href {https://doi.org/10.1007/s12559-020-09733-5} {\bibfield  {journal} {\bibinfo  {journal} {Cogn. Comput.}\ }\textbf {\bibinfo {volume} {15}},\ \bibinfo {pages} {1419} (\bibinfo {year} {2020})},\ \bibinfo {note} {springer, ISSN 1866-9956}\BibitemShut {NoStop}%
\bibitem [{\citenamefont {Bauwens}\ \emph {et~al.}(2022)\citenamefont {Bauwens}, \citenamefont {Van~der Sande}, \citenamefont {Bienstman},\ and\ \citenamefont {Verschaffelt}}]{BAU22c}%
  \BibitemOpen
  \bibfield  {author} {\bibinfo {author} {\bibfnamefont {I.}~\bibnamefont {Bauwens}}, \bibinfo {author} {\bibfnamefont {G.}~\bibnamefont {Van~der Sande}}, \bibinfo {author} {\bibfnamefont {P.}~\bibnamefont {Bienstman}},\ and\ \bibinfo {author} {\bibfnamefont {G.}~\bibnamefont {Verschaffelt}},\ }\bibfield  {title} {\bibinfo {title} {Using photonic reservoirs as preprocessors for deep neural networks},\ }\bibfield  {journal} {\bibinfo  {journal} {Front. Phys.}\ }\textbf {\bibinfo {volume} {10}},\ \href {https://doi.org/10.3389/fphy.2022.1051941} {10.3389/fphy.2022.1051941} (\bibinfo {year} {2022})\BibitemShut {NoStop}%
\bibitem [{\citenamefont {K{\"{o}}ster}\ \emph {et~al.}(2021)\citenamefont {K{\"{o}}ster}, \citenamefont {Yanchuk},\ and\ \citenamefont {L{\"{u}}dge}}]{KOE21}%
  \BibitemOpen
  \bibfield  {author} {\bibinfo {author} {\bibfnamefont {F.}~\bibnamefont {K{\"{o}}ster}}, \bibinfo {author} {\bibfnamefont {S.}~\bibnamefont {Yanchuk}},\ and\ \bibinfo {author} {\bibfnamefont {K.}~\bibnamefont {L{\"{u}}dge}},\ }\bibfield  {title} {\bibinfo {title} {Insight into delay based reservoir computing via eigenvalue analysis},\ }\href {https://doi.org/10.1088/2515-7647/abf237} {\bibfield  {journal} {\bibinfo  {journal} {J. Phys. Photonics}\ }\textbf {\bibinfo {volume} {3}},\ \bibinfo {pages} {024011} (\bibinfo {year} {2021})}\BibitemShut {NoStop}%
\bibitem [{\citenamefont {K{\"{o}}ster}\ \emph {et~al.}(2024)\citenamefont {K{\"{o}}ster}, \citenamefont {Yanchuk},\ and\ \citenamefont {L{\"{u}}dge}}]{KOE22}%
  \BibitemOpen
  \bibfield  {author} {\bibinfo {author} {\bibfnamefont {F.}~\bibnamefont {K{\"{o}}ster}}, \bibinfo {author} {\bibfnamefont {S.}~\bibnamefont {Yanchuk}},\ and\ \bibinfo {author} {\bibfnamefont {K.}~\bibnamefont {L{\"{u}}dge}},\ }\bibfield  {title} {\bibinfo {title} {Master memory function for delay-based reservoir computers with single-variable dynamics},\ }\href {https://doi.org/10.1109/tnnls.2022.3220532} {\bibfield  {journal} {\bibinfo  {journal} {IEEE Trans. Neural Netw. Learn. Syst.}\ }\textbf {\bibinfo {volume} {35}},\ \bibinfo {pages} {7712} (\bibinfo {year} {2024})}\BibitemShut {NoStop}%
\bibitem [{\citenamefont {H{\"{u}}lser}\ \emph {et~al.}(2022)\citenamefont {H{\"{u}}lser}, \citenamefont {K{\"{o}}ster}, \citenamefont {Jaurigue},\ and\ \citenamefont {L{\"{u}}dge}}]{HUE22}%
  \BibitemOpen
  \bibfield  {author} {\bibinfo {author} {\bibfnamefont {T.}~\bibnamefont {H{\"{u}}lser}}, \bibinfo {author} {\bibfnamefont {F.}~\bibnamefont {K{\"{o}}ster}}, \bibinfo {author} {\bibfnamefont {L.~C.}\ \bibnamefont {Jaurigue}},\ and\ \bibinfo {author} {\bibfnamefont {K.}~\bibnamefont {L{\"{u}}dge}},\ }\bibfield  {title} {\bibinfo {title} {Role of delay-times in delay-based photonic reservoir computing},\ }\href {https://doi.org/10.1364/ome.451016} {\bibfield  {journal} {\bibinfo  {journal} {Opt. Mater. Express}\ }\textbf {\bibinfo {volume} {12}},\ \bibinfo {pages} {1214} (\bibinfo {year} {2022})}\BibitemShut {NoStop}%
\bibitem [{\citenamefont {H{\"{u}}lser}\ \emph {et~al.}(2023)\citenamefont {H{\"{u}}lser}, \citenamefont {K{\"{o}}ster}, \citenamefont {L{\"{u}}dge},\ and\ \citenamefont {Jaurigue}}]{HUE22a}%
  \BibitemOpen
  \bibfield  {author} {\bibinfo {author} {\bibfnamefont {T.}~\bibnamefont {H{\"{u}}lser}}, \bibinfo {author} {\bibfnamefont {F.}~\bibnamefont {K{\"{o}}ster}}, \bibinfo {author} {\bibfnamefont {K.}~\bibnamefont {L{\"{u}}dge}},\ and\ \bibinfo {author} {\bibfnamefont {L.~C.}\ \bibnamefont {Jaurigue}},\ }\bibfield  {title} {\bibinfo {title} {Deriving task specific performance from the information processing capacity of a reservoir computer},\ }\href {https://doi.org/10.1515/nanoph-2022-0415} {\bibfield  {journal} {\bibinfo  {journal} {Nanophotonics}\ }\textbf {\bibinfo {volume} {12}},\ \bibinfo {pages} {937} (\bibinfo {year} {2023})}\BibitemShut {NoStop}%
\end{thebibliography}%

\end{document}